\numberwithin{equation}{section}
\newtheorem{Theorem}{Theorem}[section]
\newtheorem{Lemma}[Theorem]{Lemma}
\newtheorem{Proposition}[Theorem]{Proposition}
\newtheorem{Conjecture}[Theorem]{Conjecture}
 { \theoremstyle{definition}
\newtheorem{Example}[Theorem]{Example}
\newtheorem{Remark}[Theorem]{Remark} }
\newcommand{\bk}{{\bf k}}
\newcommand{\bb}{{\bf b}}
\newcommand{\bc}{{\bf c}}
\newcommand{\ot}{\otimes}
\newcommand{\ep}{\epsilon}
\newcommand{\U}{\mathcal{U}}
\newcommand{\Rm}{\mathscr{R}}
\newcommand{\Z}{{\mathbb Z}}
\newcommand{\R}{{\mathbb R}}
\newcommand{\C}{{\mathbb C}}
\newcommand{\Q}{{\mathbb Q}}
\begin{document}

%\allowdisplaybreaks

\newcommand{\arXivNumber}{1701.07279}

\renewcommand{\thefootnote}{}

\renewcommand{\PaperNumber}{044}

\FirstPageHeading

\ShortArticleName{Integrable Structure of Multispecies Zero Range Process}

\ArticleName{Integrable Structure\\ of Multispecies Zero Range Process\footnote{This paper is a~contribution to the Special Issue on Recent Advances in Quantum Integrable Systems. The full collection is available at \href{http://www.emis.de/journals/SIGMA/RAQIS2016.html}{http://www.emis.de/journals/SIGMA/RAQIS2016.html}}}

\Author{Atsuo KUNIBA~$^\dag$, Masato OKADO~$^\ddag$ and Satoshi WATANABE~$^\dag$}

\AuthorNameForHeading{A.~Kuniba, M.~Okado and S.~Watanabe}

\Address{$^\dag$~Institute of Physics, Graduate School of Arts and Sciences, University of Tokyo,\\
\hphantom{$^\dag$}~Komaba, Tokyo 153-8902, Japan}
\EmailD{\href{mailto:atsuo.s.kuniba@gmail.com}{atsuo.s.kuniba@gmail.com}, \href{mailto:watanabe@gokutan.c.u-tokyo.ac.jp}{watanabe@gokutan.c.u-tokyo.ac.jp}}

\Address{$^\ddag$~Department of Mathematics, Osaka City University,\\
\hphantom{$^\ddag$}~3-3-138, Sugimoto, Sumiyoshi-ku, Osaka, 558-8585, Japan}
\EmailD{\href{mailto:okado@sci.osaka-cu.ac.jp}{okado@sci.osaka-cu.ac.jp}}

\ArticleDates{Received January 26, 2017, in f\/inal form June 07, 2017; Published online June 17, 2017}

\Abstract{We present a brief review on integrability of multispecies zero range process in one dimension introduced recently. The topics range over
stochastic $R$ matrices of quantum af\/f\/ine algebra $U_q\big(A^{(1)}_n\big)$, matrix product construction of stationary states for periodic systems, $q$-boson representation of Zamolodchikov--Faddeev algebra, etc. We also introduce new commuting Markov transfer matrices having a mixed boundary condition and prove the factorization of a family of $R$ matrices associated with the tetrahedron equation and generalized quantum groups at a~special point of the spectral parameter.}

\Keywords{integrable zero range process; stochastic $R$ matrix; matrix product formula}

\Classification{81R50; 60C99}

\renewcommand{\thefootnote}{\arabic{footnote}}
\setcounter{footnote}{0}

\section{Introduction}\label{sec1}
Zero range processes (ZRPs) \cite{S} model a variety of stochastic dynamics in biology, chemistry, networks, physics, traf\/f\/ic f\/lows and so forth. Their rich behaviors like condensation, current f\/luctuations and hydrodynamic limit have been important issues in non-equilibrium physics. See for example \cite{EH, GSS, KL} and references therein.

This paper is a brief summary of the integrable multispecies ZRP in one dimension introduced and studied in the recent works \cite{KMMO,KO1,KO2}. We formulate the ZRPs via commuting Markov transfer matrices and present a matrix product formula for stationary probabilities in the periodic boundary condition. The key ingredients in these results are the {\em stochastic $R$ matrix} and the {\em Zamolodchikov--Faddeev $($ZF$)$ algebra}. The subject lies in the intersection of quantum integrable systems and non-equilibrium statistical mechanics. As the title of the paper suggests, we will mainly focus on the former aspect, although we believe the results are essential for
analyzing the physics of the model as far as the stationary properties are concerned.

Quantum $R$ matrices are solutions of the Yang--Baxter equation (YBE) \cite{Bax} and play a most fundamental role in quantum integrable systems~\cite{JB}. They can be systematically produced from the representation theory of quantum groups. It remains, however, a nontrivial problem if an~$R$ matrix can be made {\em stochastic}, namely whether it can be modif\/ied so as to match the basic criteria of Markov matrices which are non-negativity and total probability conservation.

Our stochastic $R$ matrices \cite{KMMO} fulf\/ill the criteria. They originate in the quantum $R$ matrix of the Drinfeld--Jimbo quantum af\/f\/ine algebra $U_q\big(A^{(1)}_n\big)$ in the symmetric tensor representation of general degree. Plainly, they are of type $A$ with arbitrary rank and spin, covering many examples that had been known earlier. Being higher in rank and being analytically continued in spin, it leads to systems with many kinds of particles allowing arbitrarily multiple occupancy at each lattice site, which are characteristic to multispecies ZRPs. These features are reviewed in Sections~\ref{sec2} and~\ref{sec3} based on~\cite{KMMO}. Sections~\ref{sec2.3} and~\ref{sec3.2} also include ZRPs with a new mixed boundary condition.

In Sections \ref{sec4} and \ref{sec5} we turn to the stationary probabilities ${\mathbb P}(\sigma_1,\ldots, \sigma_L)$ of a given conf\/i\-gu\-ra\-tion $(\sigma_1,\ldots, \sigma_L) \in (\Z^n_{\ge 0})^L$ in the ZRPs with the periodic boundary condition. Here~$n$ and~$L$ are the numbers of the species of particles and lattice sites, respectively. We seek the matrix product formula
\begin{gather*}
{\mathbb P}(\sigma_1,\ldots, \sigma_L) = \operatorname{Tr}(X_{\sigma_1}(\mu_1)\cdots X_{\sigma_L}(\mu_L))
\end{gather*}
in terms of a collection of operators $X(\mu) = (X_\alpha(\mu))_{\alpha \in \Z^n_{\ge 0}}$ that satisfy the ZF algebra
\begin{gather*}
X(\mu) \otimes X(\lambda) = \check{\mathscr{S}}(\lambda, \mu)\bigl[X(\lambda) \otimes X(\mu)\bigr].
\end{gather*}
It contains the stochastic $R$ matrix $\check{\mathscr{S}}(\lambda, \mu)$ as the structure function. Here $\lambda, \mu$ can be understood as generic parameters as long as algebraic aspects are concerned, but they are restricted to real numbers in a certain range in the application to the ZRP.

The ZF algebra, originally introduced in the factorized scattering theories in $(1+1)$ dimension \cite{F, ZZ}, has penetrated into the matrix product method in integrable Markov processes in various guises since the 90's. See general remarks in Section~\ref{sec5.1} and also \cite{AL, BE,CDW, CRV,SW1}.

We will review a $q$-boson representation of the ZF algebra obtained in \cite{KO1,KO2}. The simplest nontrivial case is $n=2$ for which it is
\begin{gather*}
X_{\alpha_1, \alpha_2}(\mu) =\frac{\mu^{-\alpha_1-\alpha_2}(\mu;q)_{\alpha_1+\alpha_2}} {(q;q)_{\alpha_1}(q;q)_{\alpha_2}} \frac{(\bb;q)_\infty}{(\mu^{-1}\bb;q)_\infty}\bk^{\alpha_2} \bc^{\alpha_1},
\end{gather*}
where $\bb$, $\bc$, $\bk$ are $q$-boson creation, annihilation and number operators on the Fock space and $(z; q)_m = \prod\limits_{i=0}^{m-1}(1-zq^i)$ is the $q$-shifted factorial. For general $n$, the matrix product operator $X_\alpha(\mu)$ acts on the tensor product of $\frac{1}{2}n(n-1)$ Fock spaces. There are numerous matrix product formulas in terms of bosons known in the literature, most typically for the exclusion processes. See \cite{AL, BE, CRV, DEHP, KMO0, PEM} for example and references therein. Our result (Theorem~\ref{th:nzm}) is the f\/irst example distinct from them involving a quantum dilogarithm type {\em infinite product} of $q$-bosons.

One of the key facts in our approach is the explicit factorized formula (\ref{fac}) of an $R$ matrix of $U_q\big(A^{(1)}_n\big)$ at a special point of the spectral parameter. In the last Section~\ref{sec6} we seek a similar result for a family of~$R$ matrices associated with the generalized quantum groups labeled by $(\epsilon_1,\ldots, \epsilon_{n+1}) \in \{0,1\}^{n+1}$. They are constructed from $(n+1)$-fold product of the solutions to the tetrahedron equation~\cite{Zam80}, a three-dimensional~(3D) generalization of the YBE, called 3D $R$ $(\epsilon_i=0)$ and 3D $L$ $(\epsilon_i=1)$~\cite{KOS}. The stochastic $R$ matrix in Sections~\ref{sec2}--\ref{sec5} originates in the special case $\epsilon_1= \cdots =\epsilon_{n+1}=0$. We present the Serre type relations of the relevant generalized quantum groups explicitly and prove a similar factorized formula for $(\epsilon_1,\ldots, \epsilon_{n+1})$ of the form
$(1^\kappa,0^{n+1-\kappa})\, (0 \le \kappa \le n+1)$. These results are new. Their application is yet to be explored.

The layout of the paper is as follows. In Section~\ref{sec2} we recall two kinds of stochastic~$R$ matrices~$S(z)$ and $\mathscr{S}(\lambda,\mu)$, and
construct several kinds of commuting transfer matrices from them. In Section~\ref{sec3} we specialize these transfer matrices to formulate integrable multispecies ZRPs. They include discrete and continuous time models with both periodic and mixed boundary conditions. The latter is new. In Section~\ref{sec4} stationary states of these ZRPs are studied, and its matrix product construction is linked to the ZF algebra for the models with periodic boundary condition.
In Section~\ref{sec5} we make general remarks on ZF algebra and give a $q$-boson representation when the structure function is the stochastic~$R$ matrix $\mathscr{S}(\lambda,\mu)$. It yields the stationary probabilities in the matrix product form for the associated $n$-species ZRP. This part is a review of \cite{KO1, KO2}. In Section~\ref{sec6} we extend the factorization~(\ref{fac}) to the $R$ matrices for a class of generalized quantum groups. The result is presented with some background connected to the tetrahedron equation~\cite{KOS}. Section \ref{sec7} is a short summary. Appendix~\ref{sec.app} contains the explicit form of the quantum $R$ matrix for the generalized quantum group $\U_A(1,1,0)$.

Throughout the paper we use the notation $\Z_n = \Z/n\Z$, $\theta(\mathrm{true})=1$, $\theta(\mathrm{false}) =0$, the $q$-shifted factorial $(z)_m = (z; q)_m = \prod\limits_{j=0}^{m-1}(1-zq^j)$ and the $q$-binomial $\binom{m}{k}_{\!q} = \theta(k \in [0,m]) \frac{(q)_m}{(q)_k(q)_{m-k}}$. The symbols $(z)_m$ appearing in this paper always mean $(z; q)_m$. For integer arrays $\alpha=(\alpha_1,\ldots, \alpha_m), \beta=(\beta_1,\ldots, \beta_m)$ of {\em any} length $m$, we write $|\alpha | = \alpha_1+\cdots + \alpha_m$ and the Kronecker delta as $\delta_{\alpha, \beta} =\delta^{\alpha}_{\beta} = \prod\limits_{i=1}^m\theta(\alpha_i=\beta_i)$. The letter $\delta$ will also be used extensively to mean a~local state and in such circumstances we will use the notation $\theta(\alpha=\beta)$ more frequently than~$\delta_{\alpha, \beta}$ to avoid confusion. The relation $\alpha \le \beta$ or equivalently
$\beta \ge \alpha$ is def\/ined by $\beta-\alpha \in \Z^m_{\ge 0}$. We often denote by $0$ or $0^m$ to mean $(0, \ldots, 0) \in \Z^m_{\ge 0}$.

While preparing the text, we were informed of the paper \cite{Kuan}, where the author obtains Markov duality functions for the models treated in this paper.

\section{Commuting transfer matrices}\label{sec2}
\subsection[Stochastic $R$ matrice]{Stochastic $\boldsymbol{R}$ matrices}\label{sec2.1}
Let us recall the stochastic $R$ matrices $S(z)$ and $\mathscr{S}(\lambda, \mu)$ \cite{KMMO} associated with the Drinfeld--Jimbo quantum af\/f\/ine algebra $U_q\big(A^{(1)}_n\big)$. They are constructed by suitably modifying the quantum $R$ matrix characterized by~(\ref{wsy}).

For $l \in \Z_{\ge 1}$, introduce the vector space $V_l$ whose basis is labeled with the set $B_l$ as
\begin{gather}\label{BV}
B_l = \big\{\alpha=(\alpha_1,\ldots, \alpha_{n+1}) \in \Z_{\ge 0}^{n+1}
\,|\, |\alpha| = l\big\},\qquad V_l = \bigoplus_{(\alpha_1,\ldots, \alpha_{n+1}) \in B_l}\C |\alpha_1,\ldots, \alpha_{n+1}\rangle.
\end{gather}
We write $|\alpha_1,\ldots, \alpha_{n+1}\rangle$ simply as $|\alpha \rangle$. There is an algebra homomorphism $U_q\big(A^{(1)}_n\big) \rightarrow \operatorname{End}(V_l)$ called the symmetric tensor representation of degree~$l$ depending on a spectral parameter. We are concerned with the standard quantum $R$ matrix $R(z)=R^{l,m}(z)$ living in $\operatorname{End}(V_l \otimes V_m)$. Leaving the representation theoretical background to Section~\ref{sec6}, we present an explicit formula:
\begin{gather}
R(z) (|\alpha\rangle \otimes | \beta\rangle ) = \sum_{\gamma \in B_l,\delta \in B_m} R(z)_{\alpha,\beta}^{\gamma,\delta}|\gamma\rangle \otimes | \delta\rangle, \\
R(z)_{\alpha,\beta}^{\gamma,\delta}=\frac{z^{-m}(q^{l-m}z;q^2)_{m+1}}{(q^{l-m+2}z^{-1};q^2)_m}\sum_{c_0, \ldots, c_{n} \in \Z_{\ge 0}}
z^{c_0}\Rm^{\gamma_1, \delta_1, c_0}_{\alpha_1, \beta_1, c_1}\cdots\Rm^{\gamma_{n}, \delta_{n}, c_{n-1}}_{\alpha_{n}, \beta_{n}, c_{n}}
\Rm^{\gamma_{n+1}, \delta_{n+1}, c_{n}}_{\alpha_{n+1}, \beta_{n+1}, c_0},\label{rel}\\
\Rm^{a,b,c}_{i,j,k} = \delta^{a+b}_{i+j}\delta^{b+c}_{j+k} q^{ik+b} \oint\frac{du}{2\pi {\mathrm i}u^{b+1}}\frac{(-q^{2+a+c}u;q^2)_\infty(-q^{-i-k}u;q^2)_\infty}{(-q^{a-c}u;q^2)_\infty(-q^{c-a}u;q^2)_\infty}\in \Z[q]. \label{Rint}
\end{gather}
The integral encircles the origin $u=0$ anti-clockwise to pick the residues. $z$ is called the spectral parameter. Explicit formulas for $\Rm^{a,b,c}_{i,j,k}$ are available for example in \cite[equation~(2.2)]{KOS}. The fact that $\Rm^{a,b,c}_{i,j,k} \in \Z[q]$ can be seen from them. Owing to the factor $\delta^{a+b}_{i+j}\delta^{b+c}_{j+k}$ in~(\ref{Rint}), the sum~(\ref{rel}) consists of f\/initely many terms and $R(z)_{\alpha,\beta}^{\gamma,\delta}$ is a rational function of $z$ and $q$. The prefactor in~(\ref{rel}) has been chosen so as to achieve the normalization~(\ref{yuk}), which will ultimately lead to~(\ref{sum1}) related to the total probability conservation.

The set of $q$-polynomials $\{\Rm^{a,b,c}_{i,j,k}\}$ form a solution of the tetrahedron equation~\cite{Zam80} having an origin in the quantized coordinate ring of ${\rm SL}_3$~\cite{KV}. It was stated that the composition~(\ref{rel}) yields the quantum~$R$ matrix in~\cite{BS} and proved in \cite[Appendix~B]{KO0}. The formula~(\ref{Rint}) is due to~\cite{Ser}. See Section~\ref{sec6} and~\cite{KO0,KOS} for a further explanation and generalization. For recent progress on evaluating the sum (\ref{rel}), we refer to~\cite{BM}.

The f\/irst stochastic $R$ matrix $S(z)=S^{l,m}(z)\in \operatorname{End}(V_l \otimes V_m)$ is obtained just by taking the {\em stochastic gauge} of $R(z)$ as follows:
\begin{gather}
S(z) (|\alpha\rangle \otimes | \beta\rangle ) = \sum_{\gamma \in B_l,\delta \in B_m} S(z)_{\alpha,\beta}^{\gamma,\delta} |\gamma\rangle \otimes | \delta\rangle,\qquad S(z)^{\gamma,\delta}_{\alpha, \beta} = q^\eta R(z)^{\gamma,\delta}_{\alpha, \beta},\nonumber\\
\eta= \sum_{1 \le i<j \le n+1} (\delta_i\gamma_j - \alpha_i\beta_j).\label{SR}
\end{gather}

\begin{Example}\label{ykw}
Consider the simplest example $S(z)=S^{1,1}(z)$. We denote $S(z)^{{\bf e}_i,{\bf e}_j}_{{\bf e}_k,{\bf e}_l}$ simply by~$S(z)^{i,j}_{k,l}$, where ${\bf e}_i$ is the $i$ th basis vector def\/ined in~(\ref{hrm}). By the graphical representation~(\ref{vertex}), nonzero elements are given by
\begin{gather*}
\includegraphics{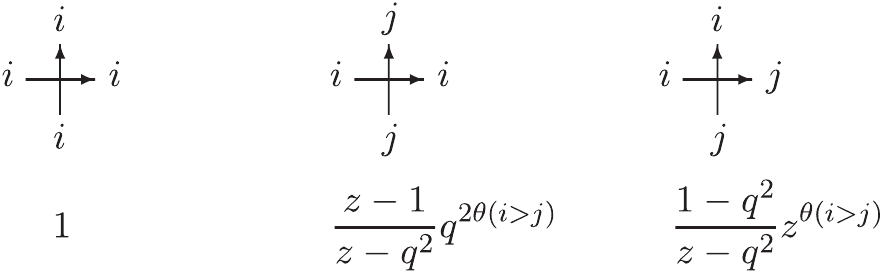}
\end{gather*}
%\begin{align*}
%&\begin{picture}(200,45)(-28,-21)
%\put(0,0){\vector(1,0){20}}
%\put(10,-10){\vector(0,1){20}}
%\put(-7,-2){$i$}\put(24,-2){$i$}
%\put(8,-20){$i$}\put(8,14){$i$}
%\put(95,0){
%\put(0,0){\vector(1,0){20}}
%\put(10,-10){\vector(0,1){20}}
%\put(-7,-2){$i$}\put(24,-2){$i$}
%\put(8,-20){$j$}\put(8,15){$j$}
%}
%\put(190,0){
%\put(0,0){\vector(1,0){20}}
%\put(10,-10){\vector(0,1){20}}
%\put(-7,-2){$i$}\put(24,-2){$j$}
%\put(8,-20){$j$}\put(8,14){$i$}}
%\end{picture}\\
%&\qquad\quad\;
%1 \qquad\quad\;\;\;\qquad\quad\frac{z-1}{z-q^2}q^{2\theta(i>j)}
%\qquad\quad
%\frac{1-q^2}{z-q^2}z^{\theta(i>j)},
%\end{align*}
where $1 \le i \neq j \le n+1$. They lead to the $n$-species {\em symmetric simple exclusion process} at $q=1$ and {\em asymmetric simple exclusion process} for $q\ne 1$. See Section~\ref{sec3.3}.
\end{Example}

\begin{Theorem}[\cite{KMMO}]\label{noi}
Set $z_{i,j}=z_i/z_j$. Then the following equalities are valid:
\begin{alignat}{3}
&\text{YBE}\colon \quad &&
S_{1,2}^{k,l}(z_{1,2}) S_{1,3}^{k,m}(z_{1,3})S_{2,3}^{l,m}(z_{2,3})= S_{2,3}^{l,m}(z_{2,3})S_{1,3}^{k,m}(z_{1,3})S_{1,2}^{k,l}(z_{1,2}),&\label{ybe1}\\
&\text{sum-to-unity}\colon \quad && \sum_{\gamma \in B_l, \, \delta \in B_m}S^{l,m}(z)^{\gamma,\delta}_{\alpha, \beta} =1,\qquad
\forall\, (\alpha, \beta) \in B_l \times B_m,&\label{sum1}\\
&\text{factorization}\colon \quad && S^{l,m}\big(z=q^{l-m}\big)^{\gamma,\delta}_{\alpha, \beta}
= \delta_{\alpha+\beta}^{\gamma+\delta}
\Phi_{q^2}\big(\bar{\gamma}| \bar{\beta}; q^{-2l},q^{-2m}\big),\qquad l \le m,&\label{fac}
\end{alignat}
where $\bar{\gamma}=(\gamma_1,\ldots, \gamma_n)$, $\bar{\beta}=(\beta_1,\ldots, \beta_n) \in \Z^n_{\ge 0}$ are the arrays $\gamma$, $\beta$ with the $(n+1)$-th component dropped.
\end{Theorem}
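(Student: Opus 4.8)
The plan is to deduce each of the three identities from the corresponding property of the quantum $R$ matrix $R(z)$ of (\ref{rel}), using throughout the colour conservation law $\alpha_p+\beta_p=\gamma_p+\delta_p$ ($p=1,\dots,n+1$) — forced by the $\delta^{a+b}_{i+j}$ factor in (\ref{Rint}) — together with a short analysis of the stochastic gauge (\ref{SR}). For the Yang--Baxter equation (\ref{ybe1}) I would start from the Yang--Baxter equation for $R(z)$ itself, which holds because $R(z)$ is the quantum $R$ matrix of $U_q(A^{(1)}_n)$ and, through the composition (\ref{rel}), descends from the tetrahedron equation for $\{\Rm^{a,b,c}_{i,j,k}\}$ (see Section~\ref{sec6}). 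I would then expand both sides of (\ref{ybe1}) on $V_k\otimes V_l\otimes V_m$ over the intermediate states and show that, for a fixed triple of external states, the accompanying powers of $q$ from (\ref{SR}) collect on each side into one and the same scalar $q^{E}$ that does not depend on the intermediate states: writing $\langle x,y\rangle:=\sum_{i<j}x_iy_j$, the left side carries $\langle c',b'\rangle-\langle b,c\rangle+\langle c'',a'\rangle-\langle a,c'\rangle+\langle b'',a''\rangle-\langle a',b'\rangle$, and substituting the pairwise conservation laws for the intermediate states $a',b',c',c''$ should make all of their contributions cancel, leaving an expression in the external states alone; the same computation on the right, compared using the total conservation $a''+b''+c''=a+b+c$, should give the same exponent. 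Pulling $q^{E}$ out of both sums then reduces (\ref{ybe1}) to the Yang--Baxter equation for $R(z)$.

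For sum-to-unity (\ref{sum1}) I would use that the prefactor in (\ref{rel}) and the exponent $\eta$ in (\ref{SR}) are designed precisely for this. Inserting the $\Rm$-decomposition (\ref{rel}), the sum $\sum_{\gamma,\delta}S^{l,m}(z)^{\gamma,\delta}_{\alpha,\beta}$ becomes a sum over the outputs $(\gamma_p,\delta_p=\alpha_p+\beta_p-\gamma_p)$ and the indices $c_0,\dots,c_n$ of a product of $\Rm$'s weighted by $q^{\eta}$ and $z^{c_0}$. I would then establish a local summation rule for the gauged weights — summing $q^{(\cdots)}\Rm^{a,b,c}_{i,j,k}$ over, say, the index $a$ produces a factor depending only on $i,j,k,c$ — iterate it through the $n+1$ factors so that the whole expression telescopes to a single $z$-dependent scalar, and then check that this scalar is cancelled exactly by the prefactor, i.e.\ by the normalization (\ref{yuk}), leaving $1$.

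For the factorization (\ref{fac}) I would specialize $z=q^{l-m}$ in (\ref{rel})--(\ref{Rint}). For $l\le m$ this is a degeneration point of the spectral parameter at which the $\Rm$-data in (\ref{rel}) sits at a boundary configuration, and I would use the special-value identities for $\Rm$ (cf.\ \cite{KO0,KOS}) to show that the multiple sum collapses: only the charge-conserving term survives — accounting for the factor $\delta_{\alpha+\beta}^{\gamma+\delta}$ — and the remaining contribution factors into a product over the $n$ colours $i=1,\dots,n$ of elementary $q^2$-binomials in $\gamma_i,\beta_i,l,m$, which, after combining with $q^{\eta}$ and the prefactor at $z=q^{l-m}$, I would identify with $\Phi_{q^2}\!\big(\bar{\gamma}\,|\,\bar{\beta};q^{-2l},q^{-2m}\big)$; as a cross-check I would match the closed form against the recursion it inherits from the embedding $V_l\hookrightarrow V_{l-1}\otimes V_1$ with the trivial $l=0$ base case. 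I expect this last identification to be the main obstacle: matching the collapsed sum to the explicit $q^2$-hypergeometric expression needs a $q$-Vandermonde type summation, and it is exactly the hypothesis $l\le m$ that makes $q^{l-m}$ the correct degeneration point — for $l>m$ a companion identity holds with the two tensor factors interchanged.
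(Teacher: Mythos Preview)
The paper does not supply its own proof of this theorem; it is quoted from \cite{KMMO}. So there is no in-text proof to match against for (\ref{ybe1}) and (\ref{sum1}). Your gauge-compatibility argument for the YBE is correct in outline --- the computation you sketch does go through, with the intermediate-state dependence of the accumulated exponent reducing to $\langle c''-a,\,b'+c'\rangle=\langle c''-a,\,b+c\rangle$ on the left (and symmetrically on the right), so the twist $q^{\eta}$ is indeed a coboundary-type factor that drops out of the YBE. Your plan for sum-to-unity is more programmatic than a proof: a ``local summation rule'' for $\Rm$ with the gauge attached is not an identity you can read off (\ref{Rint}) directly, and you would need to state and prove it before the telescoping argument can start.

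For the factorization (\ref{fac}) there \emph{is} an in-paper proof to compare to, namely the generalized Theorem~\ref{th:specialized R} in Section~\ref{sec6}, whose $(\epsilon_1,\ldots,\epsilon_{n+1})=(0,\ldots,0)$ case is exactly (\ref{fac}) up to the stochastic gauge. The method there is entirely different from yours: rather than specializing the $\Rm$-sum (\ref{rel}) and collapsing it, the paper writes down the candidate factorized expression $X^{\gamma,\delta}_{\alpha,\beta}$, checks by a short computation (Proposition~\ref{w:proposition2}, using Lemma~\ref{w:lempsi}) that it satisfies the intertwining relation (\ref{wsy}) with $\mathcal{U}_A$ at $z=q^{l-m}$, and then invokes irreducibility of $V_l\otimes V_m$ (Theorem~\ref{th:true}) to conclude that $X$ must coincide with $R(q^{l-m})$ up to normalization. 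This representation-theoretic route avoids any $q$-Vandermonde summation or special-value analysis of $\Rm$; your direct approach, if it can be carried out, would be more constructive but also considerably heavier, and the ``collapse'' you anticipate at $z=q^{l-m}$ is not obvious from (\ref{rel})--(\ref{Rint}) alone --- the prefactor $(q^{l-m}z;q^2)_{m+1}$ does vanish at $z=q^{l-m}$, but isolating the compensating pole in the $c_0$-sum and then factoring the residue over colours is exactly the hard step you flag, and the paper's method sidesteps it entirely.
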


Here $S_{1,2}^{k,l}(z_{1,2})$ for instance denotes the matrix that acts as $S^{k,l}(z_{1,2})$ on the f\/irst and the second components from the left in $V_k \otimes V_l \otimes V_m$. The $S^{l,m}(z)^{\gamma,\delta}_{\alpha, \beta}$ is an element of the mat\-rix~$S^{l,m}(z)$.

In (\ref{ybe1}) and (\ref{sum1}), there is no constraint like $l \le m$ in~(\ref{fac}). The $\Phi_q(\gamma | \beta; \lambda, \mu)$ appearing in~(\ref{fac}) is the function of $n$-component arrays~$\beta$,~$\gamma$ and parameters~$q$, $\lambda$, $\mu$ def\/ined by
\begin{gather}
\Phi_q(\gamma|\beta; \lambda,\mu) =q^{\varphi(\beta-\gamma,\gamma)} \left(\frac{\mu}{\lambda}\right)^{|\gamma|}
\frac{(\lambda;q)_{|\gamma|}(\frac{\mu}{\lambda};q)_{|\beta|-|\gamma|}}{(\mu;q)_{|\beta|}}
\prod_{i=1}^{n}\binom{\beta_i}{\gamma_i}_{q},\nonumber\\
\varphi(\alpha, \beta) = \sum_{1 \le i<j \le n} \alpha_i\beta_j.\label{Pdef}
\end{gather}
By the def\/inition $\Phi_q(\gamma|\beta;\lambda, \mu) =0$ unless $\gamma \le \beta$. The modif\/ication by the factor $q^\eta$ in~(\ref{SR}) does not spoil the YBE. The point is that it can be so chosen that the sum-to-unity property~(\ref{sum1}) holds. It will eventually lead to the total probability conservation in the relevant stochastic models in what follows. The factorization~(\ref{fac}) at the special point $z=q^{l-m}$ is also nontrivial, and assures the non-negativity of the transition rate manifestly in an appropriate range of parameters. We will generalize a formula like~(\ref{fac}) to a wider class of~$R$ matrices in Section~\ref{sec6}.

The second stochastic $R$ matrix $\mathscr{S}(\lambda, \mu)$ is extracted essentially from $(\ref{fac})|_{q^2\rightarrow q}$ by regar\-ding~$q^{-l}$,~$q^{-m}$ as parameters $\lambda$, $\mu$. It is a linear operator on $W \otimes W$ with $W$ def\/ined by $W = \bigoplus_{(\alpha_1,\ldots, \alpha_n) \in \Z_{\ge 0}^n}
\C|\alpha_1, \ldots, \alpha_n \rangle$. The basis $|\alpha_1, \ldots, \alpha_n \rangle$ here is labeled with an $n$-component array as opposed to~(\ref{BV}), but we also denote it by the same symbol $|\alpha\rangle$ for simplicity. Then $\mathscr{S}(\lambda, \mu)$ is def\/ined by
\begin{gather}
\mathscr{S}(\lambda,\mu)(|\alpha\rangle \otimes | \beta\rangle ) = \sum_{\gamma,\delta \in \Z_{\ge 0}^n}\mathscr{S}(\lambda,\mu)_{\alpha,\beta}^{\gamma,\delta} |\gamma\rangle \otimes | \delta\rangle,\label{smdef}\\
\mathscr{S}(\lambda,\mu)^{\gamma,\delta}_{\alpha, \beta} = \delta^{\gamma+\delta}_{\alpha+\beta}\Phi_q(\gamma | \beta; \lambda,\mu), \label{lin}
\end{gather}
where $\Phi_q(\gamma | \beta; \lambda,\mu)$ is given by (\ref{Pdef}). We refer to the property $\mathscr{S}(\lambda,\mu)^{\gamma,\delta}_{\alpha, \beta}=0$ unless $\alpha+\beta=\gamma+\delta$ as {\em weight conservation}. The sum (\ref{smdef}) is f\/inite due to the weight conservation. In fact, the direct sum decomposition $W \otimes W = \bigoplus_{\kappa \in \Z_{\ge 0}^n} \big(\bigoplus_{\alpha+\beta=\kappa}\C |\alpha\rangle \otimes | \beta\rangle\big)$ holds
and $\mathscr{S}(\lambda,\mu)$ splits into the corresponding submatrices. Note that the ``dif\/ference property'' commonly known for the original quantum~$R$ matrix and also $S(z)$ in~(\ref{SR}) has been lost and $\mathscr{S}(\lambda,\mu) = \mathscr{S}(c\lambda,c\mu)$ does {\em not} hold.
\begin{Theorem}[\cite{KMMO}]\label{sin}
The following equalities hold:
\begin{gather}
 \text{YBE}\colon \hspace{13.8mm} \mathscr{S}_{1,2}(\nu_1,\nu_2) \mathscr{S}_{1,3}(\nu_1, \nu_3) \mathscr{S}_{2,3}(\nu_2, \nu_3) = \mathscr{S}_{2,3}(\nu_2, \nu_3) \mathscr{S}_{1,3}(\nu_1, \nu_3) \mathscr{S}_{1,2}(\nu_1,\nu_2),\!\!\!
\label{ybe2}\\
 \text{sum-to-unity}\colon \sum_{\gamma, \delta \in \Z_{\ge 0}^n} \mathscr{S}(\lambda,\mu)^{\gamma,\delta}_{\alpha, \beta} =1,\qquad \forall\, \alpha, \beta \in \Z_{\ge 0}^n. \label{sum2}
\end{gather}
\end{Theorem}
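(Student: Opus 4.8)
The plan is to deduce Theorem~\ref{sin} directly from Theorem~\ref{noi} by the limiting/relabeling procedure that was used to \emph{define} $\mathscr{S}(\lambda,\mu)$ in the first place. Recall from~(\ref{lin}) that $\mathscr{S}(\lambda,\mu)^{\gamma,\delta}_{\alpha,\beta}=\delta^{\gamma+\delta}_{\alpha+\beta}\Phi_q(\gamma|\beta;\lambda,\mu)$, while~(\ref{fac}) says $S^{l,m}(q^{l-m})^{\gamma,\delta}_{\alpha,\beta}=\delta^{\gamma+\delta}_{\alpha+\beta}\Phi_{q^2}(\bar\gamma|\bar\beta;q^{-2l},q^{-2m})$ for $l\le m$. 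Thus, writing $\lambda=q^{-2l}$, $\mu=q^{-2m}$ and replacing $q^2$ by $q$, the nontrivial $(n+1)$-component data of $S^{l,m}(q^{l-m})$ at the special point is carried entirely by its first $n$ components, and these reproduce $\mathscr{S}(\lambda,\mu)$. The two claimed identities should then be pulled back from the corresponding identities for $S(z)$.

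For the \textbf{sum-to-unity}~(\ref{sum2}): start from~(\ref{sum1}), $\sum_{\gamma\in B_l,\delta\in B_m}S^{l,m}(z)^{\gamma,\delta}_{\alpha,\beta}=1$, which holds for all $z$, in particular at $z=q^{l-m}$ with $l\le m$. At that point, by~(\ref{fac}), the summand vanishes unless $\gamma+\delta=\alpha+\beta$; since $|\gamma|+|\delta|=|\alpha|+|\beta|=l+m$ is automatic on $B_l\times B_m$, the last components $\gamma_{n+1},\delta_{n+1}$ are determined by $\bar\gamma,\bar\delta$ through $\gamma_{n+1}+\delta_{n+1}=\alpha_{n+1}+\beta_{n+1}$ together with the fixed degrees. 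Hence the sum over $(\gamma,\delta)\in B_l\times B_m$ collapses to a sum over $(\bar\gamma,\bar\delta)$ with $\bar\gamma+\bar\delta=\bar\alpha+\bar\beta$, i.e. exactly the sum in~(\ref{sum2}), and it equals $1$. Finally one observes that~(\ref{sum2}) is a polynomial identity in $\lambda,\mu$ (the summand is a finite sum of rational functions of $\lambda,\mu,q$, in fact Laurent-polynomial after clearing the single denominator $(\mu;q)_{|\alpha+\beta|}$), so validity on the Zariski-dense set $\{(q^{-2l},q^{-2m})\colon 0\le l\le m\}$ (with generic $q$) forces it for all $\lambda,\mu$.

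For the \textbf{YBE}~(\ref{ybe2}): apply~(\ref{ybe1}) with $k\le l\le m$ and set $z_1/z_2=q^{k-l}$, $z_1/z_3=q^{k-m}$, $z_2/z_3=q^{l-m}$, so that all three $S$-factors sit at their special points $z_{i,j}=q^{(\text{deg}_i)-(\text{deg}_j)}$. On both sides of~(\ref{ybe1}) every matrix element is then given by~(\ref{fac}); composing them, the Kronecker deltas $\delta^{\gamma+\delta}_{\alpha+\beta}$ make the intermediate $(n+1)$-th components completely determined by the $n$-component ``bar'' parts, so each side reduces to the corresponding product of $\mathscr{S}$'s with arguments $(\nu_1,\nu_2)=(q^{-2k},q^{-2l})$, etc. One must check the bookkeeping: that the intermediate summation indices in the $S$-product, restricted to the support of the deltas, are in bijection with those in the $\mathscr{S}$-product, and that the $\varphi$-type $q$-powers in $\Phi$ match up — but this is exactly the content of the passage from~(\ref{fac}) to~(\ref{lin}) and requires no new input. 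Then invoke the same density/polynomiality argument in $(\nu_1,\nu_2,\nu_3)$ to remove the restriction $k\le l\le m$ and the specialization $\nu_i=q^{-2\cdot(\text{integer})}$.

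The \emph{main obstacle} is the density argument rather than any algebraic manipulation: one has to confirm that each entry of the relevant matrix identity, after clearing the obvious denominators, is genuinely a Laurent polynomial (or at least a rational function with controlled, identity-independent poles) in the parameters $\lambda,\mu$ (resp. $\nu_1,\nu_2,\nu_3$), so that agreement on the countable set $\{q^{-2\,\mathbb{Z}_{\ge0}}\}$ — dense in the appropriate sense for $q$ transcendental — propagates to all values. Given the explicit product form~(\ref{Pdef}) of $\Phi_q$, this is routine but is the step that actually needs to be written out; everything else is the transfer of~(\ref{sum1}) and~(\ref{ybe1}) through~(\ref{fac}).
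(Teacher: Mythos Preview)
The paper does not give its own proof of this theorem; it is stated with the citation \cite{KMMO} and no argument is supplied here. So there is nothing in the present text to compare your write-up against line by line.

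That said, your strategy is exactly the one the paper itself advertises when it introduces $\mathscr{S}(\lambda,\mu)$: the sentence just before~(\ref{smdef}) says that $\mathscr{S}$ is ``extracted essentially from $(\ref{fac})|_{q^2\rightarrow q}$ by regarding $q^{-l}$, $q^{-m}$ as parameters $\lambda$, $\mu$'', and in \cite{KMMO} the properties~(\ref{ybe2}) and~(\ref{sum2}) are indeed obtained by pushing~(\ref{ybe1}) and~(\ref{sum1}) through~(\ref{fac}) and then analytically continuing in the parameters. Your outline is therefore faithful to the intended argument.

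One point worth making sharper: in your YBE step you implicitly assume that when you embed a fixed $\mathscr{S}$-matrix element into the $S$-picture with degrees $k\le l\le m$, the intermediate summation ranges match exactly. This requires $k,l,m$ to be taken large enough (depending on the fixed external states) so that the nonnegativity constraints $\gamma_{n+1}\ge 0$, etc., imposed by $B_k,B_l,B_m$ never truncate the sum supported by the factors $\Phi_{q^2}(\bar\gamma|\bar\beta;\cdot)$. This is easy but should be said explicitly; once it is, each matrix-element identity in~(\ref{ybe2}) holds on the set $\{(q^{-k},q^{-l},q^{-m}):k_0\le k\le l\le m\}$ for some $k_0$, which is Zariski-dense for generic $q$, and your density argument goes through.
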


In particular (\ref{sum2}) implies
\begin{gather*}
\sum_{\gamma \in \Z_{\ge 0}^n} \Phi_q(\gamma | \beta; \lambda,\mu) = 1, \qquad\forall\, \beta \in \Z_{\ge 0}^n,
\end{gather*}
where the summands are nonzero only for $\gamma \le \beta$. Both matrices $S(z)$ and $\mathscr{S}(\lambda, \mu)$ also satisfy the inversion relation. We will depict either matrix elements $S(z)^{\gamma,\delta}_{\alpha,\beta}$ or $\mathscr{S}(\lambda,\mu)^{\gamma,\delta}_{\alpha, \beta}$ by the vertex diagram
\begin{gather}\label{vertex}
\begin{split}
& \includegraphics{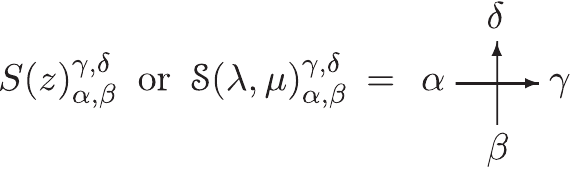}
\end{split}
%\begin{picture}(200,45)(-130,-21)
%\put(-125,-2){$S(z)_{\alpha, \beta}^{\gamma,\delta}$\,
%or \,$\mathscr{S}(\lambda,\mu)^{\gamma,\delta}_{\alpha, \beta}\,=$}
%\put(0,0){\vector(1,0){24}}
%\put(12,-12){\vector(0,1){24}}
%\put(-10,-2){$\alpha$}\put(27,-2){$\gamma$}
%\put(9,-22){$\beta$}\put(9,16){$\delta$}
%\end{picture}
\end{gather}
We note basic properties:
\begin{gather}\label{mho}
S^{m,m}(1)(|\alpha\rangle \otimes |\beta\rangle) = |\beta\rangle \otimes|\alpha\rangle,\qquad \mathscr{S}(\lambda,\lambda)(|\alpha\rangle \otimes |\beta\rangle) = |\beta\rangle \otimes|\alpha\rangle.
\end{gather}
The former originates in (\ref{wsy}) with $(l,x)=(m,y)$ and the latter can be checked easily from~(\ref{lin}) and (\ref{Pdef}).

\subsection{Commuting transfer matrices with periodic boundary condition}
\label{sec2.2}
For $l,m_1,\ldots, m_L \in \Z_{\ge 1}$ and parameters $z, w_1, \ldots, w_L$, set
\begin{gather}\label{tdef}
T\big(l,z|{\textstyle {m_1,\ldots, m_L \atop w_1,\ldots, w_L}}\big)= \operatorname{Tr}_{V_l}\big(S^{l,m_L}_{0,L}(z/w_L)\cdots S^{l,m_1}_{0,1}(z/w_1)\big)
\in \operatorname{End}\big(V_{m_1} \otimes \cdots \otimes V_{m_L}\big).
\end{gather}
In the terminology of the quantum inverse scattering method, it is the row transfer matrix of the $U_q\big(A^{(1)}_n\big)$ vertex model of size $L$ with periodic boundary condition whose quantum space is $V_{m_1} \otimes \cdots \otimes V_{m_L}$ with inhomogeneity $w_1, \ldots, w_L$ and the auxiliary space $V_l$ signif\/ied by~0 with spectral parameter~$z$. The $S^{l,m_i}_{0,i}(z/w_i)$ is the matrix~(\ref{SR}) acting as $S^{l,m_i}(z/w_i)$ on $V_l \otimes V_{m_i}$ and as identity elsewhere. The dependence on $q$ has been suppressed in the notation. It has the dif\/ference property $T\big(l,z|{\textstyle {m_1,\ldots, m_L \atop w_1,\ldots, w_L}}\big)= T\big(l, az|{\textstyle {m_1,\ldots, m_L \atop aw_1,\ldots, aw_L}}\big)$.

Thanks to the YBE (\ref{ybe1}), it forms a commuting family \cite{Bax}:
\begin{gather*}
\big[T\big(l,z|{\textstyle {m_1,\ldots, m_L \atop w_1,\ldots, w_L}}\big), T\big(l',z'|{\textstyle {m_1,\ldots, m_L \atop w_1,\ldots, w_L}}\big)\big]=0.
\end{gather*}
The $T=T(l,z|{\textstyle {m_1,\ldots, m_L \atop w_1,\ldots, w_L})}$ acts on a base vector representing a row conf\/iguration as\footnote{We warn that $|\alpha_1,\ldots, \alpha_L\rangle$ with $\alpha_i =(\alpha_{i,1},\ldots, \alpha_{i,n+1}) \in B_{m_i}$ here is dif\/ferent from the one in (\ref{BV}).}
\begin{gather}\label{tel}
T|\beta_1,\ldots, \beta_L\rangle = \sum_{\alpha_i \in B_{m_i}} T_{\beta_1,\ldots, \beta_L}^{\alpha_1,\ldots, \alpha_L} |\alpha_1,\ldots, \alpha_L\rangle
\in V_{m_1} \otimes \cdots \otimes V_{m_L}.
\end{gather}
The matrix element is given by $T_{\beta_1,\ldots, \beta_L}^{\alpha_1,\ldots, \alpha_L} = \sum\limits_{\gamma_0 \in B_l} M\big(\gamma_0| {\textstyle{\alpha_1,\ldots, \alpha_L \atop \beta_1,\ldots, \beta_L}} |\gamma_0\big)$, where the summands are def\/ined by concatenation of the diagram~(\ref{vertex}) as follows:
\begin{gather}\label{tdiag}
\begin{split}
& \includegraphics{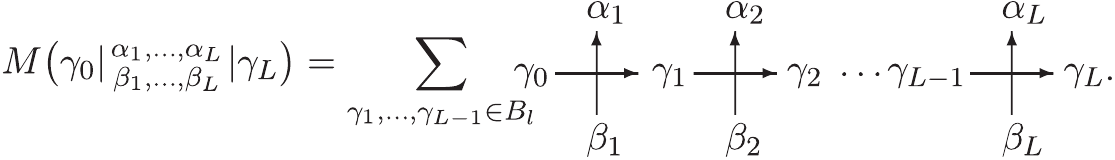}
\end{split}
%\begin{picture}(250,50)(-60,-25)
%\put(-60,0){$M\big(\gamma_0|
%{\textstyle{\alpha_1,\ldots, \alpha_L \atop \beta_1,\ldots, \beta_L}}
%|\gamma_L\big)
%= {\displaystyle{\sum_{\gamma_1,\ldots, \gamma_{L-1} \in B_l}}}$}
%\put(100,0){
%\put(0,0){\vector(1,0){24}}
%\put(12,-12){\vector(0,1){24}}
%\put(-12,-2){$\gamma_0$}\put(28,-2){$\gamma_1$}
%\put(9,-22){$\beta_1$}\put(9,16){$\alpha_1$}}
%\put(140,0){
%\put(0,0){\vector(1,0){24}}
%\put(12,-12){\vector(0,1){24}}
%\put(27,-2){$\gamma_2$}
%\put(9,-22){$\beta_2$}\put(9,16){$\alpha_2$}}
%\put(182,-3){$\cdots$}
%\put(220,0){
%\put(0,0){\vector(1,0){24}}
%\put(12,-12){\vector(0,1){24}}
%\put(-24,-2){$\gamma_{L-1}$}\put(27,-2){$\gamma_L$.}
%\put(9,-22){$\beta_L$}\put(9,16){$\alpha_L$}}
%\end{picture}
\end{gather}
We have suppressed the dependence on $(l,z)$ and $(m_i,w_i)$ attached to the horizontal and vertical arrows. By the construction $T$ satisf\/ies the weight conservation:
\begin{gather}\label{wc}
T_{\beta_1,\ldots, \beta_L}^{\alpha_1,\ldots, \alpha_L} = 0
\qquad \text{unless}\quad \alpha_1+\cdots +\alpha_L = \beta_1+\cdots + \beta_L \in \Z_{\ge 0}^{n+1}.
\end{gather}

Next we proceed to the transfer matrix associated with $\mathscr{S}(\lambda,\mu)$ in (\ref{smdef}):
\begin{gather}\label{ioa}
\mathscr{T}(\lambda|\mu_1,\ldots, \mu_L) = \operatorname{Tr}_{W}\left( \mathscr{S}_{0,L}(\lambda,\mu_L)\cdots \mathscr{S}_{0,1}(\lambda,\mu_1)
\right) \in \operatorname{End}\big(W^{\otimes L}\big),
\end{gather}
where the notations are similar to (\ref{tdef}). Let $\mathscr{T}_{\beta_1,\ldots, \beta_L}^{\alpha_1,\ldots, \alpha_L}$ be its matrix element def\/ined analogously to (\ref{tel}). It is specif\/ied as $\mathscr{T}_{\beta_1,\ldots, \beta_L}^{\alpha_1,\ldots, \alpha_L}= \sum\limits_{\gamma_0 \in \Z^n_{\ge 0}}
\mathscr{M}\big(\gamma_0| {\textstyle{\alpha_1,\ldots, \alpha_L \atop \beta_1,\ldots, \beta_L}} |\gamma_0\big)$, where $\mathscr{M}\big(\gamma_0| {\textstyle{\alpha_1,\ldots, \alpha_L \atop \beta_1,\ldots, \beta_L}} |\gamma_L\big)$ is def\/ined by~(\ref{tdiag}) if the $i$-th vertex from the left is regarded as $\mathscr{S}(\lambda,\mu_i)^{\gamma_i, \alpha_i}_{\gamma_{i-1},\beta_i}$ in~(\ref{lin}) and~$\alpha_i$'s and the sum over $\gamma_1,\ldots, \gamma_{L-1}$ are taken from~$\Z_{\ge 0}^n$. The horizontal arrow should then be understood as being associated with $\lambda$ as well although it is suppressed in the notation. Since the summand vanishes unless $\gamma_i\le \beta_i$ for all~$i$, the sum (\ref{tdiag}) for $\gamma_i \in\Z_{\ge 0}^n$ is f\/inite and $\mathscr{T}(\lambda|\mu_1,\ldots, \mu_L)$ is well-def\/ined. We have the commutativity
\begin{gather}\label{kyk}
\big[ \mathscr{T}(\lambda|\mu_1,\ldots, \mu_L), \mathscr{T}(\lambda'|\mu_1,\ldots, \mu_L) \big]=0
\end{gather}
and the weight conservation analogous to (\ref{wc}). The Bethe ansatz eigenvalues of $T\big(l,z|{\textstyle {m_1,\ldots, m_L \atop w_1,\ldots, w_L}}\big)$ and
$\mathscr{T}(\lambda|\mu_1,\ldots, \mu_L)$ have been obtained in \cite[Section~4]{KMMO}.

\subsection{Commuting transfer matrices with mixed boundary condition}\label{sec2.3}
Let us present a simple example of commuting transfer matrices having mixed boundary conditions. Let $\langle \gamma|$ with $\gamma \in B_l$ be the basis of the dual of $V_l$ such that $\langle \gamma | \gamma'\rangle = \delta_{\gamma,\gamma'}$. Def\/ine
\begin{align*}%\label{tt1}
\tilde{T}\big(i,l,z|{\textstyle {m_1,\ldots, m_L \atop w_1,\ldots, w_L}}\big)
= \sum_{\gamma \in B_l}\langle \gamma |  S^{l,m_L}_{0,L}(z/w_L)\cdots S^{l,m_1}_{0,1}(z/w_1)|l {\bf e}_i\rangle
\in \operatorname{End}\left(V_{m_1} \otimes \cdots \otimes V_{m_L}\right),
\end{align*}
where $\langle \gamma |$ and $|l {\bf e}_i\rangle$ are regarded as sitting at the 0-th tensor component. The vector
$|l{\bf e}_i\rangle$ is def\/ined by (\ref{hrm}). The matrix element is given by $\tilde{T}_{\beta_1,\ldots, \beta_L}^{\alpha_1,\ldots, \alpha_L} = \sum\limits_{\gamma \in B_l} M\big(l{\bf e}_i| {\textstyle{\alpha_1,\ldots, \alpha_L \atop \beta_1,\ldots, \beta_L}} |\gamma\big)$.
In the diagram~(\ref{tdiag}) it corresponds to the f\/ixed boundary condition $\gamma_0=l{\bf e}_i$ on the left and the free boundary condition $\gamma_L=\gamma$ on the right. Schematically we have
\begin{gather*}
\includegraphics{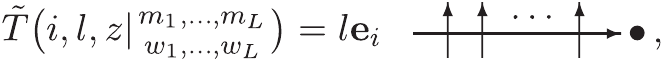}
%\begin{picture}(150,30)(-118,-12)
%\put(-119,-2)
%{$\tilde{T}\big(i,l,z|{\textstyle {m_1,\ldots, m_L \atop w_1,\ldots, w_L}}\big) = l{\bf e}_{i}$}
%\put(0,0){\vector(1,0){60}}
%\put(10,-7){\vector(0,1){16}}
%\put(20,-7){\vector(0,1){16}}
%\put(48,-7){\vector(0,1){16}}\put(28,2){$\cdots$}
%\put(62,-2.5){$\bullet\,$,}
%\end{picture}
\end{gather*}
where $\bullet$ is attached to emphasize the sum\footnote{Of course there are also many other edges where the sum is to be taken.}. It forms a commuting family:
\begin{gather*}
\big[\tilde{T}\big(i,l,z|{\textstyle {m_1,\ldots, m_L \atop w_1,\ldots, w_L}}\big), \tilde{T}\big(i,l',z'|{\textstyle {m_1,\ldots, m_L \atop w_1,\ldots, w_L}}\big)\big]=0.
\end{gather*}
This is most easily seen graphically as follows:
\begin{gather*}
\includegraphics{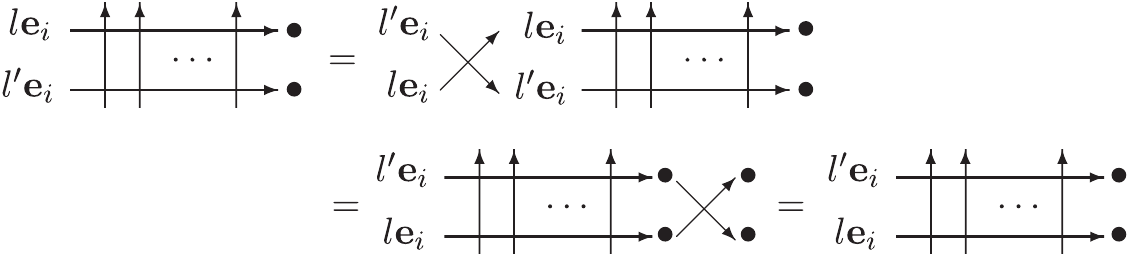}
%\begin{picture}(200,90)(-50,-12)
%%HidariUe
%\put(0,50){
%\put(-18,16){$l{\bf e}_{i}$}
%\put(-20,-2){$l'{\bf e}_{i}$}
%\put(10,-5){\vector(0,1){30}}
%\put(20,-5){\vector(0,1){30}}
%\put(29,6){$\cdots$}
%\put(48,-5){\vector(0,1){30}}
%\put(0,17){\vector(1,0){60}}\put(62,14.5){$\bullet$}
%\put(0,0){\vector(1,0){60}}\put(62,-2.5){$\bullet$}
%\put(96,6){$=$}
%}
%%MigiUe
%\put(165,50){
%\put(-22,0){\put(-18,16){$l'{\bf e}_{i}$}\put(-15.5,-2){$l{\bf e}_{i}$}
%\put(0,16){\vector(1,-1){17}}\put(0, 0){\vector(1,1){17}}}
%\put(2,15){$l{\bf e}_{i}$}
%\put(19,17){\vector(1,0){60}}\put(81,15){$\bullet$}
%\put(-0.5,-2.5){$l'{\bf e}_{i}$}
%\put(19,0){\vector(1,0){60}}\put(81,-2.5){$\bullet$}
%\put(29,-5){\vector(0,1){30}}
%\put(39,-5){\vector(0,1){30}}
%\put(48,6){$\cdots$}
%\put(67,-5){\vector(0,1){30}}
%}
%%HidariShita
%\put(-40,6){$=$}
%\put(-20,16){$l'{\bf e}_{i}$}
%\put(-18,-2){$l{\bf e}_{i}$}
%\put(10,-5){\vector(0,1){30}}
%\put(20,-5){\vector(0,1){30}}
%\put(29,6){$\cdots$}
%\put(48,-5){\vector(0,1){30}}
%\put(0,17){\vector(1,0){60}}\put(61,15){$\bullet$}
%\put(67,16){\vector(1,-1){17}}\put(85,15){$\bullet$}
%\put(0,0){\vector(1,0){60}}\put(61,-2){$\bullet$}
%\put(67, 0){\vector(1,1){17}}\put(85,-2){$\bullet$}
%\put(103,6){$=$}
%%MigiShita
%\put(145,0){
%\put(-20,16){$l'{\bf e}_{i}$}
%\put(-18,-2){$l{\bf e}_{i}$}
%\put(10,-5){\vector(0,1){30}}
%\put(20,-5){\vector(0,1){30}}
%\put(29,6){$\cdots$}
%\put(48,-5){\vector(0,1){30}}
%\put(0,17){\vector(1,0){60}}\put(61.6,15){$\bullet$}
%\put(0,0){\vector(1,0){60}}\put(61.6,-2){$\bullet$}
%}
%\end{picture}
\end{gather*}
The top left diagram depicts the product $\tilde{T}\big(i,l,z|{\textstyle {m_1,\ldots, m_L \atop w_1,\ldots, w_L}}\big)
\tilde{T}\big(i,l',z'|{\textstyle {m_1,\ldots, m_L \atop w_1,\ldots, w_L}}\big)$ and so does the bottom right (with opposite order). The f\/irst equality is by the normalization $R^{l',l}(z'/z)(|l'{\bf e}_{i}\rangle \otimes |l{\bf e}_{i}\rangle) = |l'{\bf e}_{i}\rangle \otimes |l{\bf e}_{i}\rangle$ (\ref{yuk}),
the second is by the YBE~(\ref{ybe1}) and the last is due to the sum-to-unity property (\ref{sum1}).

The analogous transfer matrix can also be constructed from $\mathscr{S}(\lambda, \mu)$ by modifying (\ref{ioa}) as
\begin{gather*}%\label{knhy}
\tilde{\mathscr{T}}(\lambda|\mu_1,\ldots, \mu_L) = \sum_{\gamma \in Z^n_{\ge 0}}\langle \gamma |
\mathscr{S}_{0,L}(\lambda,\mu_L)\cdots \mathscr{S}_{0,1}(\lambda,\mu_1) |0\rangle \in \operatorname{End}\big(W^{\otimes L}\big).
\end{gather*}
Here $|0\rangle = |0,\ldots, 0\rangle \in W$ and $\langle \gamma|$ is the basis of the dual of $W$ such that $\langle \gamma | \gamma'\rangle = \delta_{\gamma,\gamma'}$. We have the commuting family
\begin{gather*}
\big[ \tilde{\mathscr{T}}(\lambda|\mu_1,\ldots, \mu_L), \tilde{\mathscr{T}}(\lambda'|\mu_1,\ldots, \mu_L) \big]=0
\end{gather*}
by the same token owing to the normalization $\mathscr{S}(\lambda, \mu) (|0\rangle \otimes |0\rangle) =|0\rangle \otimes |0\rangle$, the YBE (\ref{ybe2}) and the sum-to-unity (\ref{sum2}).

\section{Integrable multispecies zero range process}\label{sec3}
\subsection{Discrete time Markov process}\label{sec3.1}
In the previous section we have introduced four kinds of commuting transfer matrices. Here we design their specializations to be called {\em Markov transfer matrices} that give rise to discrete time Markov processes. Denoting the time variable by~$t$ we consider the four systems endowed with the following time evolutions:
\begin{gather}
|P(t+1)\rangle = T\big(l,q^l|{\textstyle {m_1,\ldots, m_L \atop q^{m_1},\ldots, q^{m_L}}}\big) |P(t)\rangle \in V_{m_1} \otimes \cdots \otimes V_{m_L},
\label{ms1}\\
|P(t+1)\rangle = \tilde{T}\big(i,l,q^l|{\textstyle {m_1,\ldots, m_L \atop q^{m_1},\ldots, q^{m_L}}}\big)|P(t)\rangle \in V_{m_1} \otimes \cdots \otimes V_{m_L}, \label{ms2}\\
|P(t+1)\rangle = \mathscr{T}(\lambda|\mu_1,\ldots, \mu_L) |P(t)\rangle \in W^{\otimes L},\label{ms3}\\
|P(t+1)\rangle = \tilde{\mathscr{T}}(\lambda|\mu_1,\ldots, \mu_L)|P(t)\rangle \in W^{\otimes L}.\label{ms4}
\end{gather}
Let us write the action of these transfer matrices on the respective base vectors uniformly as
\begin{gather*}
T|\beta_1,\ldots, \beta_L\rangle = \sum_{\alpha_1,\ldots, \alpha_L} T_{\beta_1,\ldots, \beta_L}^{\alpha_1,\ldots, \alpha_L}|\alpha_1,\ldots, \alpha_L\rangle.
\end{gather*}
Then (\ref{ms1})--(\ref{ms4}) can be regarded as the master equation of a~Markov process with discrete time~$t$ if the following conditions are satisf\/ied:
\begin{enumerate}\itemsep=0pt
\item[(i)] non-negativity: $T_{\beta_1,\ldots, \beta_L}^{\alpha_1,\ldots, \alpha_L}\in \R_{\ge 0}$ for any $(\alpha_1, \ldots, \alpha_L)$ and $(\beta_1,\ldots, \beta_L)$,

\item[(ii)] sum-to-unity: $\sum_{\alpha_1,\ldots, \alpha_L} T_{\beta_1,\ldots, \beta_L}^{\alpha_1,\ldots, \alpha_L} = 1$ for any $(\beta_1,\ldots, \beta_L)$.
\end{enumerate}
The latter represents the total probability conservation.

\begin{Proposition}\label{yf}
The conditions $(i)$ and $(ii)$ are satisfied if $l \le \min\{m_1,\ldots, m_L\}$ and $q \in \R_{>0}$ for the systems \eqref{ms1} and \eqref{ms2}, and if
$0 < \mu^{\epsilon}_i < \lambda ^{\epsilon} < 1$, $0< q^{\epsilon}<1$ for $\epsilon=\pm 1$ for the sys\-tems~\eqref{ms3} and~\eqref{ms4}.
\end{Proposition}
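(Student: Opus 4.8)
The plan is to verify conditions (i) and (ii) separately, in each case reducing to statements about the stochastic $R$ matrices $S^{l,m}(z)$ and $\mathscr{S}(\lambda,\mu)$ already established in Theorems~\ref{noi} and~\ref{sin}. Condition (ii), the sum-to-unity, is the easier half. For the periodic system \eqref{ms1}, summing $T^{\alpha_1,\ldots,\alpha_L}_{\beta_1,\ldots,\beta_L}$ over all $\alpha_i$ amounts, in the diagram~\eqref{tdiag}, to summing the top indices of each vertex; applying \eqref{sum1} vertex-by-vertex from one end collapses the whole row to the trace of an identity over the auxiliary indices $\gamma_0$, but one must be a little careful because the auxiliary index is traced, not free. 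The clean way is: $\sum_{\alpha_i}S^{l,m_i}(z/w_i)^{\gamma_i,\alpha_i}_{\gamma_{i-1},\beta_i}=\delta_{\gamma_{i-1},\gamma_i}$ would be what we want, but \eqref{sum1} only gives $\sum_{\gamma,\delta}S^{l,m}(z)^{\gamma,\delta}_{\alpha,\beta}=1$ — i.e.\ the sum over \emph{both} outgoing indices. However, weight conservation \eqref{wc} fixes $|\gamma_i|$ in terms of $|\gamma_{i-1}|$ and $|\beta_i|$, and on each fixed weight sector the relevant partial sum does collapse; alternatively one sums over $\alpha_i$ \emph{and} over the internal $\gamma_i$ simultaneously, obtaining $\sum_{\alpha_1,\ldots,\alpha_L}T^{\alpha_1,\ldots,\alpha_L}_{\beta_1,\ldots,\beta_L}=\sum_{\gamma_0}\prod_i(\text{sum-to-unity factor})=\sum_{\gamma_0}1$ restricted to the sector, which is still just $1$ after accounting for the trace constraint $\gamma_L=\gamma_0$. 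The same argument with \eqref{sum2} handles \eqref{ms3}, and the mixed-boundary cases \eqref{ms2}, \eqref{ms4} are if anything simpler since the auxiliary boundary is fixed to $l\mathbf{e}_i$ (resp.\ $0$) on one side and summed on the other, so one applies sum-to-unity along the row and is left with $\sum_\gamma\delta_{l\mathbf{e}_i,\ldots}=1$.

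For condition (i), non-negativity, the heart of the matter is the specialization of the spectral parameter. In \eqref{ms1} and \eqref{ms2} one sets $z=q^l$ and $w_i=q^{m_i}$, so every vertex carries $S^{l,m_i}(z/w_i)=S^{l,m_i}(q^{l-m_i})$. By the factorization~\eqref{fac}, under the hypothesis $l\le m_i$ (guaranteed by $l\le\min\{m_1,\ldots,m_L\}$) this equals $\delta^{\gamma+\delta}_{\alpha+\beta}\,\Phi_{q^2}(\bar\gamma|\bar\beta;q^{-2l},q^{-2m_i})$. So I would next show that $\Phi_{q^2}(\bar\gamma|\bar\beta;q^{-2l},q^{-2m_i})\ge 0$ whenever $q\in\R_{>0}$. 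Looking at the definition~\eqref{Pdef}: the prefactor $q^{\varphi(\cdot,\cdot)}$ and the $q$-binomials $\binom{\beta_i}{\gamma_i}_q$ are manifestly nonnegative for $q>0$, and $(\mu/\lambda)^{|\gamma|}=q^{2(m_i-l)|\gamma|}>0$; what remains is the sign of the ratio $(\lambda;q^2)_{|\gamma|}(\mu/\lambda;q^2)_{|\beta|-|\gamma|}/(\mu;q^2)_{|\beta|}$ with $\lambda=q^{-2l}$, $\mu=q^{-2m_i}$. Writing $(q^{-2l};q^2)_{|\gamma|}=\prod_{j=0}^{|\gamma|-1}(1-q^{2j-2l})$, each factor with $j<l$ is negative and with $j\ge l$ nonnegative; counting signs in numerator and denominator and using $l\le m_i$ (so $|\gamma|\le l$ is forced by $\gamma\le\beta$? — actually $|\gamma|$ can exceed $l$, but then $(\lambda;q^2)_{|\gamma|}$ picks up only nonnegative extra factors), one checks the sign pattern matches between numerator and denominator so that the quotient is $\ge 0$. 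Since $T^{\alpha_1,\ldots,\alpha_L}_{\beta_1,\ldots,\beta_L}$ is a sum over $\gamma_0$ of products of such nonnegative vertex weights, non-negativity of the transfer matrix follows. For \eqref{ms3} and \eqref{ms4} there is no factorization step: the vertex weight is literally $\Phi_q(\gamma|\beta;\lambda,\mu_i)$ from~\eqref{lin}, and one shows directly from~\eqref{Pdef} that under $0<\mu_i^\epsilon<\lambda^\epsilon<1$ and $0<q^\epsilon<1$ (for $\epsilon=\pm1$) every factor is nonnegative — the $q$-binomials and $q^{\varphi}$ by $q>0$, the power $(\mu/\lambda)^{|\gamma|}$ by $\mu/\lambda>0$, and each of $(\lambda;q)_{|\gamma|}$, $(\mu/\lambda;q)_{|\beta|-|\gamma|}$, $(\mu;q)_{|\beta|}$ is a product of factors $1-(\text{positive number}<1)>0$ precisely because the constraints force $\lambda,\mu,\mu/\lambda\in(0,1)$; the $\epsilon=-1$ conditions ensure this also when one instead wants $q>1$ style regimes, but the sign analysis is identical.

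I expect the main obstacle to be the sign bookkeeping in the $\Phi_{q^2}$ ratio for the systems \eqref{ms1}, \eqref{ms2}: unlike the $\mathscr{S}$ case where the parameters genuinely lie in $(0,1)$ and every factor is individually positive, here $\lambda=q^{-2l}$ and $\mu=q^{-2m_i}$ are \emph{not} in $(0,1)$ (for $q<1$ they exceed $1$), so several factors of each $q$-shifted factorial are negative and one must verify the negatives cancel in pairs between numerator and denominator. The key inequality enabling the cancellation is $l\le m_i$: it ensures that in $(q^{-2m_i};q^2)_{|\beta|}$ there are at least as many ``negative'' factors as in $(q^{-2l};q^2)_{|\gamma|}$ combined with the ones coming from $(q^{2(m_i-l)};q^2)_{|\beta|-|\gamma|}$ (this last factorial actually has \emph{no} negative factors since $q^{2(m_i-l)}\le 1$), and a short combinatorial count — comparing $\#\{j<l: j<|\gamma|\}$ against the analogous count for the denominator — closes it. Once this lemma on positivity of $\Phi$ is in hand, everything else is assembling vertex weights, and the proof is complete.
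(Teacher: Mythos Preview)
Your non-negativity argument and the mixed-boundary sum-to-unity are fine and match the paper's approach. The genuine gap is in the periodic sum-to-unity for \eqref{ms1} and \eqref{ms3}. You correctly notice that \eqref{sum1} sums over \emph{both} outgoing indices, whereas in the row product each auxiliary $\gamma_i$ is shared between adjacent vertices; but your proposed repair---summing over $\alpha_i$ and $\gamma_i$ ``simultaneously'' to produce $\sum_{\gamma_0}1$---does not work. The chained $\gamma_i$ cannot be summed independently at each vertex, and even if each product factor became~$1$, the remaining trace-sum $\sum_{\gamma_0}1$ would equal $|B_l|$ (or diverge in the $\mathscr{S}$-case), not~$1$. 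This obstruction is exactly what Remark~\ref{tsk} warns about: vertex sum-to-unity alone does not propagate to a periodic transfer matrix.

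The missing ingredient is the \emph{NW-freeness} of the specialized vertex. By \eqref{fac} and \eqref{lin}, at the Markov point the weight factorizes as $\delta^{\gamma_i+\alpha_i}_{\gamma_{i-1}+\beta_i}\,\Phi(\gamma_i\,|\,\beta_i;\ldots)$ with $\Phi$ depending only on the SE-pair $(\gamma_i,\beta_i)$. Summing over $\alpha_i$ therefore gives $\Phi(\gamma_i\,|\,\beta_i;\ldots)$ \emph{uniformly} in $\gamma_{i-1}$ (one checks that whenever $\Phi\ne 0$ the unique candidate $\alpha_i=\gamma_{i-1}+\beta_i-\gamma_i$ is admissible for every $\gamma_{i-1}$). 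Now the sums over $\gamma_0,\ldots,\gamma_{L-1}$ decouple completely, and $\sum_\alpha T^\alpha_\beta=\prod_i\sum_{\gamma_i}\Phi(\gamma_i\,|\,\beta_i;\ldots)=1$. This is the second half of the ``double benefit'' of the specialization $z=q^{l-m}$ highlighted in Remark~\ref{tsk}; without invoking it the periodic argument cannot close.
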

The proof was given for (\ref{ms1}) and (\ref{ms3}) in~\cite{KMMO}. For (\ref{ms2}) and (\ref{ms4}), the proof is quite similar. In fact the non-negativity follows from the factorized explicit formulas (\ref{fac}), (\ref{Pdef}), (\ref{lin}), and the sum-to-unity property~(ii) does from~(\ref{sum1}) and~(\ref{sum2}).

Thus we have constructed, in the regimes specif\/ied in Proposition \ref{yf}, commuting families of discrete time Markov processes labeled with~$l$ in~(\ref{ms1}),~(\ref{ms2}) (for each $i$) and with~$\lambda$ in~(\ref{ms3}),~(\ref{ms4}). For an interpretation in terms of stochastic dynamics of multispecies particles, see \cite[Section~3.2]{KMMO}. In the systems~(\ref{ms2}) and~(\ref{ms4}), the weight, i.e., the number of particles, is not conserved.

\begin{Remark}\label{tsk}
The sum-to-unity (\ref{sum1}), (\ref{sum2}) of the stochastic $R$ matrices alone does {\em not} necessarily imply the corresponding property~(ii) for the transfer matrices if one stays in the periodic boundary condition. The latter can be established by also using the independence of the matrix elements of the NW indices in~(\ref{vertex}), i.e., $\alpha$, $\delta$ in~(\ref{fac}) and~(\ref{lin}) except the weight conservation factor $\delta_{\alpha+\beta}^{\gamma+\delta}$. See the explanation after equation (39) in~\cite{KMMO}. Thus the specialization $z=q^{l-m}$ in (\ref{fac}) achieves
the {\em double} benef\/it; the factorization manifesting the non-negativity and the `NW-freeness' making the sum-to-unity of~$R$ matrices propagate to the transfer matrices.
\end{Remark}

\begin{Remark}\label{rna}
In the evolution system (\ref{ms1}), suppose that $m_1, \ldots, m_L$ are all distinct. Set
\begin{gather*}
T_i = T\big(m_i,q^{m_i}| {\textstyle {m_1,\ldots, m_L \atop q^{m_1},\ldots, q^{m_L}}}\big) =\operatorname{Tr}_{V_{m_{i}}}\big( S^{m_i,m_L}_{0,L}\big(q^{m_{i}-m_{L}}\big) \cdots S^{m_i,m_i}_{0,i}(1)\cdots S^{m_i,m_1}_{0,1}\big(q^{m_{i}-m_{1}}\big)\big)
\end{gather*}
for $1 \le i \le L$. By applying the left relation in (\ref{mho}) to $S^{m_i,m_i}_{0,i}(1)$, one sees that the auxiliary space `merges' into the quantum space, therefore the commuting time evolutions $T_1,\ldots, T_L$ can apparently be described without the former space as illustrated by the following diagrams\footnote{In case $m_1, \ldots, m_L$ are not distinct, the commutativity still holds, but in the corresponding diagram ``par\-tic\-les'' do not come back to the original position.} for $L=3$:
\begin{gather*}
\includegraphics{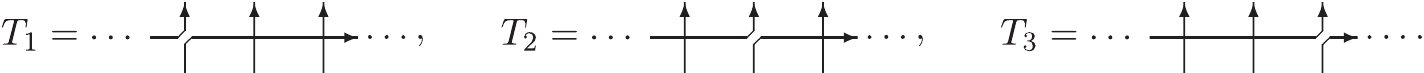}
%\begin{picture}(320,30)(-40,-12)
%%T1
%\put(-36,-2){$T_1=$}
%\put(-10.5,-0.5){$\dots$}
%\put(0,0){\line(1,0){8}}\put(12,0){\vector(1,0){48}}
%\put(62,-0.4){$\dots $,}
%\put(10,2){\vector(0,1){8}}\put(10,-10){\line(0,1){8}}
%\drawline(8,0)(10,2)\drawline(10,-2)(12,0)
%\put(30,-10){\vector(0,1){20}}
%\put(50,-10){\vector(0,1){20}}
%%T2
%\put(130,0){
%\put(-36,-2){$T_2=$}
%\put(-10.5,-0.5){$\dots $}
%\put(0,0){\line(1,0){28}}\put(32,0){\vector(1,0){28}}
%\put(62,-0.4){$\dots $,}
%\put(30,2){\vector(0,1){8}}\put(30,-10){\line(0,1){8}}
%\drawline(28,0)(30,2)\drawline(30,-2)(32,0)
%\put(10,-10){\vector(0,1){20}}
%\put(50,-10){\vector(0,1){20}}
%}
%%T3
%\put(260,0){
%\put(-36,-2){$T_3=$}
%\put(-10.5,-0.5){$\dots $}
%\put(0,0){\line(1,0){48}}\put(52,0){\vector(1,0){8}}
%\put(62,-0.4){$\dots$.}
%\put(50,2){\vector(0,1){8}}\put(50,-10){\line(0,1){8}}
%\drawline(48,0)(50,2)\drawline(50,-2)(52,0)
%\put(10,-10){\vector(0,1){20}}
%\put(30,-10){\vector(0,1){20}}
%}
%\end{picture}
\end{gather*}
If each arrow is formally viewed as a particle, they move around periodically to come back to the original position thereby `stirring' themselves. Such a~particle system appeared f\/irst in Yang's analysis on the 1D delta-function interaction gas \cite[equation~(14)]{Yang}. In our setting, $T_p$~with $p$ such that $m_p = \min\{m_1,\ldots, m_L\}$ fulf\/ills the conditions~(i),~(ii) in the above, which may therefore be recognized as {\em stochastic Yang's system}. The system (\ref{ms3}) also contains the similar Yang's system at $\lambda = \mu_i$ due to the right relation in~(\ref{mho}).
\end{Remark}

\subsection{Continuous time Markov process}\label{sec3.2}
In this subsection we focus on the systems (\ref{ms3}) and (\ref{ms4}) with the homogeneous choice of the parameters $\mu_1=\cdots = \mu_L=\mu$:
\begin{gather*}
\mathscr{T}(\lambda|\mu) = \mathscr{T}(\lambda|\mu,\ldots, \mu), \qquad
\tilde{\mathscr{T}}(\lambda|\mu) = \tilde{\mathscr{T}}(\lambda|\mu,\ldots, \mu).
\end{gather*}
From the special values $\Phi_q(\gamma|\beta; 1, \mu)= \delta_{\gamma,0}$ and $\Phi_q(\gamma|\beta; \mu, \mu)= \delta_{\gamma,\beta}$, one can deduce that
$\mathscr{T}(1|\mu) = \tilde{\mathscr{T}}(1|\mu) = \mathrm{id}_{W^{\otimes L}}$ and $\mathscr{T}(\mu|\mu)$ is a cyclic shift; $\mathscr{T}(\mu|\mu)|\alpha_1,\ldots, \alpha_L\rangle = |\alpha_L,\alpha_1,\ldots, \alpha_{L-1}\rangle$. Using these facts we take the logarithmic derivatives ($\epsilon=\pm 1$)
\begin{gather}
H_r = \left.-\epsilon \mu^{-1} \frac{\partial \log \mathscr{T}(\lambda | \mu)}{\partial \lambda}\right|_{\lambda=1}, \qquad
H_l = \left. \epsilon \mu \frac{\partial \log \mathscr{T}(\lambda | \mu)}{\partial \lambda} \right|_{\lambda = \mu},
\label{Hs}\\
\tilde{H} = \left.-\epsilon \mu^{-1} \frac{\partial \log \tilde{\mathscr{T}}(\lambda | \mu)}{\partial \lambda}\right|_{\lambda=1} \label{Ht}
\end{gather}
according to Baxter's formula (cf.~\cite[Section~3.4]{KMMO} and \cite[equation~(10.14.20)]{Bax}), a standard prescription to generate spin chain Hamiltonians from vertex models in equilibrium. An intriguing aspect in our setting is the presence of {\em two} such `Hamiltonian points' $\lambda=1$ and $\lambda= \mu$
for the periodic case leading to the operators expressed as the sum of local terms\footnote{As it turns out, the subscripts in $H_r$ and $H_l$ signify the right and the left hopping dynamics of particles.}:
\begin{gather}\label{sxn}
H_r= \sum_{i\in \Z_L} h_{r,i,i+1},\qquad H_l= \sum_{i\in \Z_L} h_{l,i,i+1},\qquad \tilde{H} = \sum_{1 \le i\le L-1} h_{r,i,i+1} +\tilde{h}_L.
\end{gather}
Here $h_{r,i,i+1}$, $h_{l,i,i+1}$ act as $h_r, h_l \in \operatorname{End}(W\otimes W)$ on the $i$-th and the $(i+1)$-th components and as the identity elsewhere. The $\tilde{h}_L$ acts as $\tilde{h} \in \operatorname{End} W$ on the $L$-th component and as the identity elsewhere. They are described explicitly as ($0^n=(0,\ldots, 0) \in \Z^n_{\ge 0}$)
\begin{gather}
h_r|\alpha, \beta\rangle =\epsilon\!\!\sum_{\gamma \in \Z^n_{\ge 0}\setminus \{0^n\}}\!\!
\frac{q^{\varphi(\alpha-\gamma,\gamma)} \mu^{|\gamma|-1}(q)_{|\gamma|-1}} {(\mu q^{|\alpha|-|\gamma|})_{|\gamma|}} \prod_{i=1}^n \binom{\alpha_i}{\gamma_i}_{q}|\alpha-\gamma, \beta+\gamma\rangle
-\epsilon\sum_{i=0}^{|\alpha|-1}\frac{q^i|\alpha, \beta\rangle}{1-\mu q^i},\!\!\! \label{ri1}\\
h_l|\alpha, \beta\rangle
= \epsilon \sum_{\gamma \in \Z^n_{\ge 0}\setminus \{0^n\}}
\frac{q^{\varphi(\gamma, \beta-\gamma)}
(q)_{|\gamma|-1}}
{(\mu q^{|\beta|-|\gamma|})_{|\gamma|}}
\prod_{i=1}^n
\binom{\beta_i}{\gamma_i}_{q}
|\alpha+\gamma, \beta-\gamma\rangle
-\epsilon\sum_{i=0}^{|\beta|-1}\frac{|\alpha, \beta\rangle}{1-\mu q^i},\label{ri2}\\
\tilde{h}|\alpha\rangle=
\epsilon\sum_{\gamma \in \Z^n_{\ge 0}\setminus \{0^n\}}
\frac{q^{\varphi(\alpha-\gamma,\gamma)}
\mu^{|\gamma|-1}(q)_{|\gamma|-1}}
{(\mu q^{|\alpha|-|\gamma|})_{|\gamma|}}
\prod_{i=1}^n
\binom{\alpha_i}{\gamma_i}_{q}|\alpha-\gamma\rangle
-\epsilon\sum_{i=0}^{|\alpha|-1}\frac{q^i|\alpha\rangle}{1-\mu q^i}.\label{ri3}
\end{gather}
The matrix $\tilde{h}$ has the form obtainable from (\ref{ri1}) by removing the dependence on~$\beta$.

Consider the time evolution equation
\begin{gather}\label{skb}
\frac{d}{dt}|P(t)\rangle = H |P(t)\rangle \in W^{\otimes L}, \qquad H = H_r, H_l, \tilde{H}.
\end{gather}
Denote the action of the matrices $H= H_r, H_l, \tilde{H}$ on the base vectors uniformly as $H|\beta_1,\ldots, \beta_L \rangle $ $= \sum\limits_{\alpha_1,\ldots, \alpha_L} H^{\alpha_1, \ldots, \alpha_L}_{\beta_1,\ldots, \beta_L} |\alpha_1,\ldots, \alpha_L\rangle$. The equation (\ref{skb}) can be viewed as the master equation of a~continuous time Markov process if the following conditions are satisf\/ied:
\begin{enumerate}\itemsep=0pt
\item[$(i)'$] non-negativity: $H^{\alpha_1, \ldots, \alpha_L}_{\beta_1,\ldots, \beta_L} \in \R_{\ge 0}$ for any pair such that $(\alpha_1, \ldots, \alpha_L) \neq (\beta_1,\ldots, \beta_L)$,

\item[$(ii)'$] sum-to-zero: $\sum\limits_{\alpha_1, \ldots, \alpha_L} H^{\alpha_1, \ldots, \alpha_L}_{\beta_1,\ldots, \beta_L} = 0$ for any $(\beta_1,\ldots, \beta_L)$.
\end{enumerate}
The latter represents the total probability conservation. From (\ref{ri1})--(\ref{ri3}) we note that for a f\/ixed initial condition there is a constant $M$ satisfying $|H^{\alpha_1,\ldots, \alpha_L}_{\alpha_1,\ldots, \alpha_L}|<M$ uniformly for all $(\alpha_1,\ldots, \alpha_L)$.

\begin{Proposition}[\cite{KMMO}]\label{sana}
For all the matrices $H_r$, $H_l$ and $\tilde{H}$, the condition $(ii)'$ is satisfied. So is the condition $(i)'$ if $0 < q^\epsilon, \mu^\epsilon <1$ for $\epsilon = \pm 1$.
\end{Proposition}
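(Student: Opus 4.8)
\emph{Proof proposal.} Both statements reduce at once to a local computation. By (\ref{sxn}) each of $H_r,H_l,\tilde H$ is a sum of operators acting on two adjacent sites (plus, for $\tilde H$, one term acting only on the last site), so writing a global matrix element $H^{\alpha_1,\ldots,\alpha_L}_{\beta_1,\ldots,\beta_L}$ as the sum of the contributions in (\ref{sxn}), each term is a local matrix element of $h_r$, $h_l$ or $\tilde h$ times Kronecker deltas in the untouched slots. Hence an off-diagonal global entry is an off-diagonal local entry, and the output sum $\sum_{\alpha_1,\ldots,\alpha_L}H^{\alpha_1,\ldots,\alpha_L}_{\beta_1,\ldots,\beta_L}$ equals the sum over sites of the output sums of $h_r,h_l,\tilde h$. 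So it suffices to establish $(i)'$ and $(ii)'$ for $h_r,h_l\in\operatorname{End}(W\otimes W)$ and $\tilde h\in\operatorname{End}(W)$, working from the explicit formulas (\ref{ri1})--(\ref{ri3}). (Conceptually one can instead note that $\mathscr T(\lambda|\mu)$ and $\tilde{\mathscr T}(\lambda|\mu)$ are, by Proposition~\ref{yf}, one-parameter families of column-stochastic matrices passing through $\mathrm{id}$ at $\lambda=1$ and, for $\mathscr T$, through the cyclic shift at $\lambda=\mu$; the one-sided logarithmic derivative of a stochastic family through the identity is automatically a Markov generator, and the prefactors $-\epsilon\mu^{-1}$, $\epsilon\mu$ in (\ref{Hs}), (\ref{Ht}) are exactly what make the off-diagonal entries come out non-negative for both signs of $\epsilon$. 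I would use this as a sanity check but present the hands-on version below.)

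For $(ii)'$ I would differentiate the sum-to-unity identity $\sum_{\gamma}\Phi_q(\gamma|\beta;\lambda,\mu)=1$ recorded after Theorem~\ref{sin}, which, being an algebraic identity in the parameters, holds for generic $\lambda$. Inspecting (\ref{Pdef}), for $\gamma\neq 0$ the factor $(\lambda)_{|\gamma|}$ has a simple zero at $\lambda=1$ (the $j=0$ factor $1-\lambda$), so the off-diagonal coefficient of $h_r$ in (\ref{ri1}) is precisely $-\epsilon\mu^{-1}\,\partial_\lambda\Phi_q(\gamma|\alpha;\lambda,\mu)|_{\lambda=1}$, and one checks directly that the diagonal coefficient equals $-\epsilon\mu^{-1}\,\partial_\lambda\Phi_q(0|\alpha;\lambda,\mu)|_{\lambda=1}$. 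Summing over the outputs $|\alpha-\gamma,\beta+\gamma\rangle$ ($\gamma\leq\alpha$) gives $-\epsilon\mu^{-1}\,\partial_\lambda\big(\sum_\gamma\Phi_q(\gamma|\alpha;\lambda,\mu)\big)|_{\lambda=1}=-\epsilon\mu^{-1}\,\partial_\lambda(1)=0$. For $h_l$ the same argument runs at $\lambda=\mu$, where $\Phi_q(\gamma|\beta;\mu,\mu)=\delta_{\gamma,\beta}$; after the reindexing $\gamma\mapsto\beta-\gamma$ of the outputs $|\alpha+\gamma,\beta-\gamma\rangle$ one again lands on $\epsilon\mu\,\partial_\lambda\big(\sum_{\gamma}\Phi_q(\gamma|\beta;\lambda,\mu)\big)|_{\lambda=\mu}=0$. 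The operator $\tilde h$ is the $\beta$-independent specialization of $h_r$ and is handled identically. (Equivalently, $(ii)'$ is a $q$-shifted-factorial telescoping identity one can verify directly.)

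For $(i)'$ it remains to check that, under $0<q^{\epsilon},\mu^{\epsilon}<1$, each off-diagonal local coefficient is $\geq 0$. Take the $h_r$ coefficient $\epsilon\,q^{\varphi(\alpha-\gamma,\gamma)}\mu^{|\gamma|-1}(q)_{|\gamma|-1}\big/(\mu q^{|\alpha|-|\gamma|})_{|\gamma|}\cdot\prod_{i=1}^{n}\binom{\alpha_i}{\gamma_i}_{q}$ with $0^n\neq\gamma$; since $\binom{\alpha_i}{\gamma_i}_{q}=0$ unless $\gamma\leq\alpha$, every exponent $|\alpha|-|\gamma|+j$ occurring is $\geq 0$. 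If $\epsilon=1$ (so $0<q<1$, $0<\mu<1$): $q^{\varphi}>0$, $\mu^{|\gamma|-1}>0$, every factor of $(q)_{|\gamma|-1}=\prod_{j=1}^{|\gamma|-1}(1-q^{j})$ and of $(\mu q^{|\alpha|-|\gamma|})_{|\gamma|}=\prod_{j=0}^{|\gamma|-1}(1-\mu q^{|\alpha|-|\gamma|+j})$ is positive, and $q$-binomials of a positive argument are positive, so the coefficient is $>0$. If $\epsilon=-1$ (so $q>1$, $\mu>1$): $(q)_{|\gamma|-1}$ has sign $(-1)^{|\gamma|-1}$ while $(\mu q^{|\alpha|-|\gamma|})_{|\gamma|}$ has sign $(-1)^{|\gamma|}$ (each factor $1-\mu q^{\text{(nonneg)}}<0$), so their ratio is negative; the $q$-binomials (Gaussian binomials, hence polynomials with non-negative coefficients) are still positive; so the product without the prefactor is negative and the factor $\epsilon=-1$ makes the coefficient $>0$. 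The same bookkeeping applies verbatim to the off-diagonal coefficients of $h_l$ in (\ref{ri2}) (which differ only by dropping the positive factor $\mu^{|\gamma|-1}$) and of $\tilde h$ in (\ref{ri3}), while the diagonal entries $-\epsilon\sum_i q^i/(1-\mu q^i)$ are allowed either sign. This gives $(i)'$.

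The main obstacle is not conceptual but is the sign bookkeeping in the regime $\epsilon=-1$ (i.e.\ $q>1$, $\mu>1$): one must track the parity of the number of negative factors in $(q)_m$ and in $(\mu q^{\text{nonneg}})_m$ and check that these exactly offset the sign of $\epsilon$, using the easy but essential observation that the support $\gamma\leq\alpha$ of the $q$-binomials forces all relevant powers of $q$ to be non-negative. In the conceptual route the analogue is making sure the one-sided neighbourhood of $\lambda=1$ (and of $\lambda=\mu$) on which $\mathscr T$ is stochastic lies on the side dictated by $\epsilon$, so that the sign of the one-sided derivative is the one cancelled by $-\epsilon\mu^{-1}$ (resp.\ $\epsilon\mu$). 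Everything else is organizational.
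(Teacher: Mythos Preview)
Your proof is correct. The paper does not actually supply a proof of this proposition; it simply records it with a citation to~\cite{KMMO} (where the explicit local Hamiltonians (\ref{ri1})--(\ref{ri3}) were derived), so there is nothing in the present text to compare against. Your argument---reducing to the local pieces $h_r,h_l,\tilde h$ via (\ref{sxn}), obtaining $(ii)'$ by differentiating the sum-to-unity identity $\sum_\gamma\Phi_q(\gamma|\beta;\lambda,\mu)=1$ at $\lambda=1$ and $\lambda=\mu$, and checking $(i)'$ by direct sign inspection of the factors in (\ref{ri1})--(\ref{ri3})---is exactly the natural one and is essentially what the cited reference does. The identification of the diagonal and off-diagonal coefficients of $h_r$ with $-\epsilon\mu^{-1}\partial_\lambda\Phi_q(\gamma|\alpha;\lambda,\mu)|_{\lambda=1}$ is a clean way to get $(ii)'$ without a separate telescoping verification, and your sign-parity count in the regime $q,\mu>1$ (using $(q)_{|\gamma|-1}$ of sign $(-1)^{|\gamma|-1}$, $(\mu q^{|\alpha|-|\gamma|})_{|\gamma|}$ of sign $(-1)^{|\gamma|}$, Gaussian binomials positive for $q>0$, and the constraint $\gamma\le\alpha$ ensuring all exponents are non-negative) is accurate.
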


Thus we have obtained, in the regimes specif\/ied in Proposition \ref{sana}, the Markov processes corresponding to the continuous time variant of~(\ref{ms3}) and~(\ref{ms4}). The commutativity~(\ref{kyk}) and the construction~(\ref{Hs}) imply $[H_r, H_l]=0$. Therefore they share the same eigenvectors with the superposition $H(a,b,\epsilon,q,\mu) =a H_r(\epsilon,q,\mu)+ b H_l(\epsilon,q,\mu)$ with the coef\/f\/icients $a$, $b$ to be taken positive in the Markov process context. A curious symmetry $H(a,b,-\epsilon, q^{-1}, \mu^{-1}) = \mathscr{P}H(\mu b, \mu a,\epsilon,q,\mu)\mathscr{P}^{-1}$ is known to hold \cite[Remark~9]{KMMO}, where $\mathscr{P} = \mathscr{P}^{-1} \in \operatorname{End}(W^{\otimes L})$ is the `parity' operator reversing the sites as
$\mathscr{P}|\alpha_1,\ldots, \alpha_L\rangle = |\alpha_L, \ldots, \alpha_1\rangle$.

The Markov processes (\ref{skb}) with $H = H_r, H_l$ are naturally viewed as the stochastic dynamics of $n$-species of particles on a ring of length~$L$. The base vector $|\alpha_1, \ldots, \alpha_L\rangle$ with $\alpha_i = (\alpha_{i,1}\ldots, \alpha_{i,n}) \in \Z^n_{\ge 0}$ represents a conf\/iguration in which
there are $\alpha_{i,a}$ particles of species~$a$ at the $i$-th site. There is no constraint on the number of particles that occupy a site. The matrices~$H_r$ and $H_l$ describe the stochastic hopping of them to the right and the left nearest neighbor sites, respectively. The transition rate can be read of\/f the f\/irst terms on the r.h.s.\ of~(\ref{ri1}) and~(\ref{ri2}), where the array $\gamma=(\gamma_1,\ldots, \gamma_n)$ specif\/ies the numbers of particles that are jumping out. In case of the superposition $H(a,b,\epsilon,q,\mu)$ mentioned in the above, we have a~{\em mixture} of such right and left movers. Note that the rate is determined from the occupancy of the departure site only and independent of that at the destination site, justifying the name ``zero range process''.
Here is a snapshot of the system for the $n=2$ case\footnote{The arrangement of particles within each site does not matter.}:
\begin{gather*}
\includegraphics{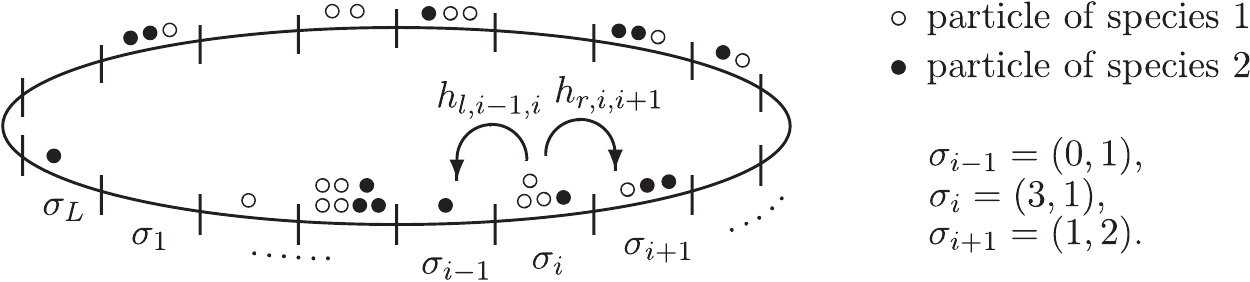}
\end{gather*}

The process (\ref{skb}) with $H=\tilde{H}$ is similarly interpreted as a stochastic particle system def\/ined on a length $L$ segment with boundaries. It consists of right movers only which will eventually exit from the right boundary:
\begin{gather*}
\includegraphics{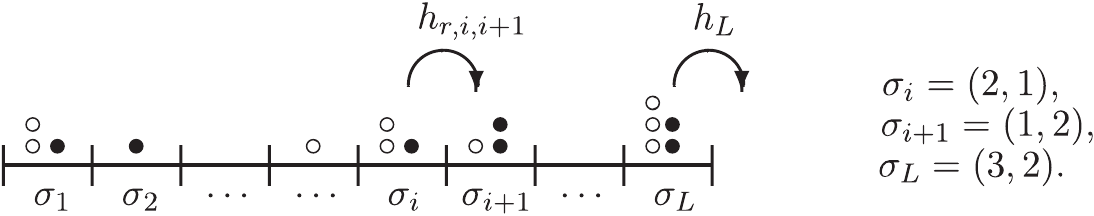}
\end{gather*}
%\setlength{\unitlength}{1mm}
%\begin{picture}(100,27)(-20,-7)
%\thicklines
%\put(0,0){\line(1,0){72}}
%\multiput(2,-2)(9,0){9}{\put(-2,0){\line(0,1){4}}}
%\put(3,-4){$\sigma_1$} \put(12,-4){$\sigma_2$}
%\put(20.5,-4){$\cdots$}\put(29.5,-4){$\cdots$}
%\put(39,-4){$\sigma_i$}\put(46.5,-4){$\sigma_{i+1}$}
%\put(56.5,-4){$\cdots$}
%\put(66,-4){$\sigma_L$}
%\put(-24,1){
%\put(68.5,7){\oval(7,7)[t]}
%\put(72,7.5){\vector(0,-1){1}}
%\put(66,12.9){$h_{r,i,i+1}$}
%}
%\put(3,1){
%\put(68.5,7){\oval(7,7)[t]}
%\put(72,7.5){\vector(0,-1){1}}
%\put(67,12.8){$\tilde{h}_L$}
%}
%\put(2,1){$\circ$}\put(2,3.2){$\circ$}\put(4.5,1){$\bullet$}
%\put(12.5,1){$\bullet$}
%\put(30.5,1){$\circ$}
%\put(38,1){$\circ$}\put(38,3.2){$\circ$}
%\put(40.5,1){$\bullet$}
%\put(47,1){$\circ$}
%\put(49.5,1){$\bullet$}\put(49.5,3.2){$\bullet$}
%\put(65,1){$\circ$}\put(65,3.2){$\circ$}\put(65,5.4){$\circ$}
%\put(67,1){$\bullet$}\put(67,3.2){$\bullet$}
%\put(-11,-45){
%\put(104,52){$\sigma_{i}=(2,1)$,}
%\put(100.5,48){$\sigma_{i+1}=(1,2)$,}
%\put(103.3,44){$\sigma_{L}=(3,2)$.}
%}
%\end{picture}
%\setlength{\unitlength}{1pt}

The integrable Markov processes constructed here and Section~\ref{sec3.1} cover several models studied earlier. When $\epsilon=1, \mu\rightarrow 0$ in $H_r$,
the nontrivial local transitions in~(\ref{ri1}) are limited to the case $|\gamma| =1$. So if $\gamma_a=1$ and the other components of $\gamma$ are~0, the rate reduces to $q^{\alpha_1+\cdots + \alpha_{a-1}}\frac{1-q^{\alpha_a}}{1-q}$. This reproduces the $n$-species $q$-boson process in~\cite{T1} whose $n=1$ case further goes back to~\cite{SW2}. For $n=1$, there are extensive list of works including~\cite{BeS,BCG,BP,C,CP,GS,P,T0} for example. One can overview their interrelation in~\cite[Figs.~1 and~2]{Kuan}. When $\epsilon=1, (\mu,q)\rightarrow (0,0)$ in $H_l$, a~kinematic constraint $\varphi(\gamma, \beta-\gamma) =
\sum\limits_{1 \le i<j \le n}\gamma_i(\beta_j-\gamma_j)=0$ occurs in~(\ref{ri2}). In fact, in order that $\gamma_a>0$ happens, the equalities $\gamma_{a+1}=\beta_{a+1}, \gamma_{a+2}=\beta_{a+2}, \ldots, \gamma_n = \beta_n$ must hold. It means that larger species particles have the {\em priority} to jump out, which precisely reproduces the $n$-species totally asymmetric zero range process explored in \cite{KMO1, KMO2} after reversing the labeling of the
species $1,2, \ldots, n$ of the particles.

\subsection[Models associated with $S(z)$]{Models associated with $\boldsymbol{S(z)}$}\label{sec3.3}
Let us remark on the models associated with the stochastic $R$ matrix $S(z)=S^{l,m}(z)$ (\ref{SR}). We refer to \cite{Kuan} for a further account.

To the relevant transfer matrix (\ref{tdef}) with the uniform choice $m_i=m$ and $w_i=1$, one can associate the Hamiltonian $\mathcal{H}(m) = \pm \frac{\partial}{\partial z}\log T\big(m,z|{\textstyle {m,\ldots, m \atop 1, \ldots, 1}}\big)|_{z=1}$ similarly to (\ref{Hs}) and (\ref{sxn}). In fact one f\/inds $\mathcal{H}(m) = \sum\limits_{i\in \Z_L}h(m)_{i,i+1}$ using the left property in (\ref{mho}) where the local Hamiltonian $h(m) |\alpha, \beta\rangle = \sum\limits_{\gamma,\delta} h(m)^{\gamma,\delta}_{\alpha,\beta}|\delta,\gamma\rangle$ is specif\/ied by $h(m)^{\gamma,\delta}_{\alpha,\beta} = \pm \frac{\partial }{\partial z}S(z)^{\gamma,\delta}_{\alpha,\beta}|_{z=1}$. This is the standard derivation of an integrable `spin $\frac{m}{2}$' Hamiltonian for $U_q\big(A^{(1)}_n\big)$ by the Baxter formula (cf.~\cite[equation~(10.14.20)]{Bax}). Thanks to (\ref{sum1}), $\mathcal{H}(m)$ enjoys the sum-to-zero property~$(ii)'$ mentioned after~(\ref{skb}). However the non-negativity~$(i)'$ is {\em not} satisf\/ied just by adjusting the overall sign in general for $m \ge 2$\footnote{This is also noted in \cite[Remark~5.2]{CP} for $m=2$. On the other hand for $S^{l,m}(z)$ itself rather than its derivative, there is a~range in which elements are non-negative \cite[Proposition~3.6]{Kuan}.}. This is one of the dif\/f\/iculties that has been overcome by switching from $S^{m,m}(z)$ to $\mathscr{S}(\lambda, \mu)$. The exception is $m=1$, where one sees from Example~\ref{ykw} that $\mathcal{H}(1)$ (with the $+$ sign in the above) is nothing but the Markov matrix of the $n$-species asymmetric simple exclusion process (ASEP)\footnote{It is a model in which each site can take $n+1$ states.}. The transition rate $r_{i,j}$ of $|i,j\rangle \rightarrow |j,i\rangle$ satisf\/ies $r_{i,j}: r_{j,i} = 1: q^2$ for $1 \le i < j \le n+1$. Many results have been obtained for the $n$-species ASEP with general $n$ including, for example, matrix product stationary states~\cite{PEM}, spectral duality with respect to the Hasse diagram of sectors~\cite{AKSS}, solutions to stochastic initial value problem on inf\/inite lattice~\cite{TW}, connection to the tetrahedron equation at $q=0$~\cite{KMO0}, application to generalized Macdonald polynomials~\cite{GDW} and so on.

\section{Stationary states}\label{sec4}
\subsection{Stationary probability}\label{sec4.1}
Here we consider the systems (\ref{ms1}) and (\ref{ms3}). Introduce the f\/inite-dimensional subspaces of $V_{m_1} \otimes \cdots \otimes V_{m_L}$ and~$W^{\otimes L}$ with f\/ixed weight:
\begin{gather*}
V(k) = \oplus_{(\sigma_1,\ldots, \sigma_L) \in B(k)} \C |\sigma_1,\ldots, \sigma_L\rangle,\qquad
W(k) = \oplus_{(\sigma_1,\ldots, \sigma_L) \in \mathcal{B}(k)} \C |\sigma_1,\ldots, \sigma_L\rangle,
\end{gather*}
where the labeling sets of the bases are given by
\begin{gather}
B(k) = \big\{(\sigma_1,\ldots, \sigma_L) \in B_{m_1}\times \cdots \times B_{m_L}\,|\, \sigma_1 + \cdots + \sigma_L = k\big\},\label{hrk1}\\
\mathcal{B}(k) = \big\{(\sigma_1,\ldots, \sigma_L) \in (\Z^n_{\ge 0})^L\,|\, \sigma_1 + \cdots + \sigma_L = k\big\},\label{hrk2}
\end{gather}
where $k\in \Z^{n+1}_{\ge 0}$ in (\ref{hrk1}) and $k \in \Z^{n}_{\ge 0}$ in (\ref{hrk2}). Note that $B(k)=\varnothing$ if $|k| > m_1+\cdots + m_L$.

Denote the discrete time evolutions (\ref{ms1}) and (\ref{ms3}) simply by $|P(t+1)\rangle = T|P(t)\rangle$. They decompose into the equations on
the subspaces~$V(k)$ for (\ref{ms1}) and $W(k)$ for (\ref{ms3}), which we call {\em sectors}. If some components of $k\in \Z^n_{\ge 0}$ are~0, the system in such a sector becomes equivalent to the one with smaller $n$ by an appropriate relabeling of the species. In view of this, we shall concentrate with no loss of generality on the situation $k \in \Z_{\ge 1}^n$ which we call {\em basic} sectors.

By def\/inition stationary states are those invariant under $T$, namely $|\bar{P}\rangle$ satisfying $|\bar{P}\rangle = T|\bar{P}\rangle$. There is a unique such vector $|\bar{P}\rangle = \sum\limits_{ \sigma_1,\ldots, \sigma_L} \mathbb{P}(\sigma_1,\ldots, \sigma_L) |\sigma_1,\ldots, \sigma_L\rangle$ in each sector satisfying $\sum\limits_{ \sigma_1,\ldots, \sigma_L} \mathbb{P}(\sigma_1,\ldots, \sigma_L) = 1$. This is a consequence of the irreducibility of the sectors under~$T$ and the Perron--Frobenius theorem. The coef\/f\/icient $\mathbb{P}(\sigma_1,\ldots, \sigma_L)$ is called the stationary probability. In what follows we shall abuse the terminology also to mean the unnormalized probabilities and states. We will treat the discrete time processes only since they cover the continuous time case. Thanks to the commutativity of the Markov transfer matrices, the stationary states are independent of~$l$ for~(\ref{ms1}) and of~$\lambda$ for~(\ref{ms3}).

Before closing the subsection, we include comments on the systems (\ref{ms4}) and (\ref{skb}) with $H = \tilde{H}$, where the number of particles is not preserved. Starting from any initial condition, any state tends to the trivial one $|0^n\rangle \otimes \cdots \otimes | 0^n \rangle$, although the relaxation to it remains as an important problem. The clue to investigating it is the Bethe eigenvalues of $\tilde{H}$. To construct them is a feasible task as done in \cite[Section~4]{KMMO} for the periodic case $H_r$ and $H_l$. We leave it for a future study.

\subsection{Some examples}\label{sec4.2}
From now on we focus on the discrete time process def\/ined by (\ref{ms3}). The stationary state $|\bar{P}\rangle$ in the sector~$W(k)$ is characterized by
\begin{gather*}
|\bar{P}\rangle = \mathscr{T}(\lambda|\mu_1,\ldots, \mu_L) |\bar{P}\rangle \in W(k).
\end{gather*}
We will refer to a sector $W(k)$ also by $k=(k_1,\ldots, k_n) \in \Z^n_{\ge 0}$ for simplicity.

It was known in the single species case $n=1$ that the stationary state possesses the product measure (cf.~\cite{EMZ, P} at least for the homogeneous case
$\forall\, \mu_i=\mu$):
\begin{gather*}
\mathbb{P}(\sigma_1,\ldots, \sigma_L) = \prod_{i=1}^L g_{\sigma_i}(\mu_i), \qquad \sigma_i \in \Z_{\ge 0},
\end{gather*}
where $g_{\sigma_i}(\mu_i)$ is the $n=1$ case of the function
\begin{gather}
g_\alpha(\mu)=\frac{\mu^{-|\alpha|}(\mu)_{|\alpha|}} {\prod\limits_{i=1}^n(q)_{\alpha_i}}\qquad \text{for} \quad \alpha = (\alpha_1,\ldots, \alpha_n) \in \Z^n_{\ge 0}. \label{mgd}
\end{gather}
Such a factorization, however, is no longer valid in the multispecies case $n\ge 2$ making the system nontrivial and interesting even without an introduction (cf.~\cite{DEHP}) of a reservoir. Particles of a given species must behave under the inf\/luence of the other species acting as a nontrivial dynamical background.

\begin{Example}\label{ex:kan}
For $L=2$, $n=2$ and the sector $k=(1,1)$, we have
\begin{gather*}
|\overline{P}\rangle = \mu _1^2 (1-\mu _2)(1-q\mu _2) (\mu _1+\mu _2-2 \mu _2 \mu _1) |\varnothing,12\rangle\\
\hphantom{|\overline{P}\rangle =}{} +\mu _1 \mu _2 (1-\mu _1) (1-\mu _2) (\mu _1+ q\mu_2-\mu _1\mu _2 -q\mu _1 \mu _2) |1,2\rangle + \mathrm{cyclic}.
\end{gather*}
Here $ |\varnothing,12\rangle$ and $ |1,2\rangle$ are the multiset representations of $|(0,0), (1,1)\rangle$ and $|(1,0), (0,1)\rangle$. For $L=3, n=2$ and the sector $k=(1,1)$, we have
\begin{gather*}
|\overline{P}\rangle = \mu_1^2 \mu _2^2 (1-\mu _3)(1-q\mu _3) (\mu _1 \mu _2+\mu _1 \mu _3+\mu _2 \mu _3-3 \mu _1 \mu _3 \mu _2 )
|\varnothing,\varnothing,12)\\
\hphantom{|\overline{P}\rangle =}{}+ \mu _1^2 \mu _2 \mu_3 (1-\mu _2) (1-\mu _3) (q\mu _1 \mu _2+\mu _1 \mu _3+ \mu _2 \mu _3 -2\mu _1 \mu _2 \mu _3-q \mu _1 \mu _2 \mu _3 ) |\varnothing,2,1\rangle\\
\hphantom{|\overline{P}\rangle =}{} +\mu _1^2 \mu _2 \mu_3
(1-\mu _2) (1-\mu _3) (\mu _1 \mu _2+q\mu _1 \mu _3+ q\mu _2 \mu _3 -\mu _1 \mu _2 \mu _3-2q \mu _1 \mu _2 \mu _3 )|\varnothing,1,2\rangle\\
\hphantom{|\overline{P}\rangle =}{}+ \mathrm{cyclic}.
\end{gather*}
Here $\mathrm{cyclic}$ means the sum of terms obtained by the replacement $\mu_j\rightarrow \mu_{j+i}$ and $|\sigma_1,\ldots, \sigma_L\rangle \rightarrow
|\sigma_{1+i},\ldots, \sigma_{L+i}\rangle$ over $i \in \Z_L$ with $i \neq 0$.
\end{Example}

Based on computer experiments we propose
\begin{Conjecture}\label{c:aim}
For the stationary state $|\bar{P}\rangle$ in any sector~$k$, there is a normalization such that $\mathbb{P}(\sigma_1,\ldots, \sigma_L) \in \Z_{\ge 0}[q,-\mu_1,\ldots, -\mu_L]$ for all $(\sigma_1,\ldots, \sigma_L) \in \mathcal{B}(k)$.
\end{Conjecture}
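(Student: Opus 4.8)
The plan is to attack Conjecture~\ref{c:aim} via the matrix product formula, reducing the positivity (after sign twist $\mu_j \to -\mu_j$) of the stationary probabilities to a positivity statement about the $q$-boson representation of the ZF algebra. Recall that $\mathbb{P}(\sigma_1,\ldots, \sigma_L) = \operatorname{Tr}(X_{\sigma_1}(\mu_1)\cdots X_{\sigma_L}(\mu_L))$ where the $X_\alpha(\mu)$ satisfy the ZF algebra with structure function $\check{\mathscr{S}}(\lambda,\mu)$, whose matrix elements are $\Phi_q(\gamma|\beta;\lambda,\mu)$ of~(\ref{Pdef}). First I would isolate the source of the infinite product $\frac{(\bb;q)_\infty}{(\mu^{-1}\bb;q)_\infty}$ appearing in the $n=2$ operator and its $n$-fold generalization from Theorem~\ref{th:nzm}: on the trace this scalar-looking dressing factor contributes, sector by sector $k$, only finitely many powers of $\bb$, and one must show the resulting rational function in $\mu_1,\ldots,\mu_L$ clears to a polynomial in $q,-\mu_1,\ldots,-\mu_L$ with nonnegative coefficients. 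The denominators are products of $(\mu_i q^j)$-type $q$-shifted factorials, and the claim is that the common denominator $\prod_i (\mu_i;q)_{|k|}$ (or a divisor thereof) exactly cancels when the trace is evaluated.

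The key steps, in order, would be: (1) fix a basic sector $k\in\Z^n_{\ge 1}$ and expand the trace of the product of matrix-product operators into a finite sum over intermediate Fock-space occupation numbers, using the explicit $q$-boson action $\bb,\bc,\bk$; (2) show that each individual term, before summation, is a Laurent monomial in the $\mu_i$ times a $q$-shifted-factorial ratio, and identify the precise denominator $D(k)=\prod_{i=1}^L(\mu_i;q)_{d_i}$ that appears, with $d_i$ bounded in terms of $|k|$; (3) prove that $D(k)\cdot\mathbb{P}(\sigma_1,\ldots,\sigma_L)$ is actually a polynomial — i.e., the poles at $\mu_i = q^{-j}$ are spurious — by exploiting the sum-to-unity property~(\ref{sum2}) of $\mathscr{S}(\lambda,\mu)$ together with the recursive structure of the ZF relations, which force telescoping cancellations of the apparent poles; and (4) establish the sign pattern, namely that after the substitution $\mu_i\to-\mu_i$ all coefficients become nonnegative. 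For step~(4) the natural tool is the combinatorial interpretation of $\Phi_q(\gamma|\beta;\lambda,\mu)$: the factors $q^{\varphi(\beta-\gamma,\gamma)}$, the $q$-binomials $\binom{\beta_i}{\gamma_i}_q$, and the expansion $\frac{(\frac{\mu}{\lambda};q)_{|\beta|-|\gamma|}}{(\mu;q)_{|\beta|}}$ are each, up to the uniform sign twist, positive combinations, and one would argue the trace inherits this by a cancellation-free (``positive'') summation — perhaps by producing an explicit lattice-path or multiline-queue type combinatorial model for the stationary weights, in the spirit of the matrix-product literature cited (\cite{PEM}, \cite{CRV}), adapted to the infinite-product dressing.

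The main obstacle will be step~(3): controlling the cancellation of the spurious poles coming from the denominators $(\mu_i;q)_{|k|}$. Unlike the single-species case where the product measure~(\ref{mgd}) makes positivity and polynomiality transparent, here the nontrivial coupling between species means the denominator cancellation is genuinely global — it happens only after the full trace over all Fock spaces is performed, not term by term. I expect one would need either an induction on $L$ (peeling off one site using a ZF relation and a clever contour/residue argument to show the residue at each would-be pole vanishes) or an induction on $|k|$ (adding one particle and tracking how the infinite product absorbs it). A secondary difficulty is that the operators $X_\alpha(\mu)$ act on $\tfrac12 n(n-1)$ Fock spaces for general $n$, so the bookkeeping of which monomials in the various $\bb$'s survive the trace is combinatorially heavy; organizing this — likely via the crystal/tetrahedron-equation structure underlying the $R$ matrix, which gives the $\Rm^{a,b,c}_{i,j,k}$ their $\Z_{\ge 0}[q]$ positivity~(\ref{Rint}) — is where most of the real work lies. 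Settling the sign twist in~(4) is comparatively soft once polynomiality is in hand, since one can then specialize $q$ and the $\mu_i$ to check signs on a Zariski-dense set, but making it rigorous still requires the explicit positive combinatorial model.
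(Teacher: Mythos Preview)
The statement you are attempting to prove is explicitly labeled a \textbf{Conjecture} in the paper, not a theorem. The paper offers no proof: it introduces Conjecture~\ref{c:aim} with the phrase ``Based on computer experiments we propose'' and later, at the end of Section~\ref{sec5.3}, remarks that ``Detailed study of $G_k(\mu_1,\ldots, \mu_L; q)$ and the related Conjecture~\ref{c:aim} is a~future problem.'' So there is nothing in the paper to compare your proposal against; you are sketching an attack on a genuinely open problem.

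As a research plan your outline is reasonable in spirit but does not constitute a proof, and you are candid about this. A few remarks on the gaps. First, the matrix product formula~(\ref{mst}) is established in the paper (and in~\cite{KO1}) only for basic sectors $k\in\Z_{\ge 1}^n$ with convergent, nonvanishing trace, whereas the conjecture is stated for \emph{any} sector; your plan silently restricts to basic sectors. Second, your step~(3), the cancellation of the apparent poles at $\mu_i=q^{-j}$, is the heart of the matter and you have not indicated a mechanism beyond ``telescoping via ZF relations''; the sum-to-unity property~(\ref{sum2}) controls column sums of $\mathscr{S}$, not the individual stationary weights, so it is unclear how it would force residues to vanish. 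Third, in step~(4) the claim that each factor of $\Phi_q$ is ``up to the uniform sign twist, a positive combination'' is not obviously correct: the numerator $(\lambda;q)_{|\gamma|}(\mu/\lambda;q)_{|\beta|-|\gamma|}$ in~(\ref{Pdef}) mixes $\lambda$ and $\mu$ in a way that does not reduce to a single sign twist $\mu_i\to-\mu_i$, and in any case the conjecture concerns the trace, not the local weights. Finally, your appeal to the $\Z_{\ge 0}[q]$ positivity of $\Rm^{a,b,c}_{i,j,k}$ in~(\ref{Rint}) is suggestive but these are the building blocks of $S(z)$, not directly of $\mathscr{S}(\lambda,\mu)$; bridging that gap would itself require work. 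In short: this is a plausible line of investigation, but the paper regards the statement as open and your proposal does not close it.
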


\subsection{Matrix product construction}\label{sec4.3}
We seek the steady state probability in the matrix product form
\begin{gather}\label{mst}
{\mathbb P}(\sigma_1,\ldots, \sigma_L) = \operatorname{Tr}(X_{\sigma_1}(\mu_1)\cdots X_{\sigma_L}(\mu_L))
\end{gather}
in terms of some operator $X_\alpha(\mu)$ with $\alpha \in \Z^n_{\ge 0}$. Our strategy is to invoke the following result.

\begin{Proposition}\label{pr:srn}
Suppose the operators $X_\alpha(\mu)$ $(\alpha \in \Z^n_{\ge 0})$ obey the relation
\begin{gather}\label{msk}
X_\alpha(\mu)X_\beta(\lambda) = \sum_{\gamma,\delta \in \Z^n_{\ge 0}} \mathscr{S}(\lambda, \mu)^{\beta, \alpha}_{\gamma,\delta} X_\gamma(\lambda)X_\delta(\mu).
\end{gather}
Suppose further that $\dim \operatorname{Ker}(\mathscr{T}(\mu_i|\mu_1,\ldots, \mu_L)-1)=1$ for some $i$. Then the matrix product formula~\eqref{mst} of the stationary probability holds for the system~\eqref{ms3} provided that the trace is convergent and not identically zero.
\end{Proposition}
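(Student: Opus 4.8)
The plan is to verify directly that the vector $|\bar P\rangle$ built from the matrix product ansatz~\eqref{mst} satisfies $\mathscr{T}(\lambda|\mu_1,\ldots,\mu_L)|\bar P\rangle = |\bar P\rangle$, and then to invoke the stated one-dimensionality of $\operatorname{Ker}(\mathscr{T}(\mu_i|\mu_1,\ldots,\mu_L)-1)$ together with uniqueness of the stationary state to conclude. First I would introduce an auxiliary ``doubled'' row object: alongside the operators $X_\alpha(\mu)$ acting on the matrix product space, I keep the $L$ vertices of $\mathscr{T}$ with auxiliary space $W$ carried by $\mathscr{S}(\lambda,\mu_i)$. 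The key algebraic input is that the ZF relation~\eqref{msk} is exactly the statement that the stochastic $R$ matrix $\mathscr{S}$ intertwines the two orderings of a pair $X_\alpha(\mu)\ot X_\beta(\lambda)$; pictorially it is a ``zipper'' move that lets an $\mathscr{S}$-vertex pass through one $X$ at the cost of permuting the spectral parameters $\mu_i \leftrightarrow \lambda$ and summing over internal indices.

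The main computation is then the standard matrix-product/ZF argument. I would consider $\operatorname{Tr}_W\big(\mathscr{S}_{0,L}(\lambda,\mu_L)\cdots\mathscr{S}_{0,1}(\lambda,\mu_1)\big)$ acting on $\operatorname{Tr}(X_{\sigma_1}(\mu_1)\cdots X_{\sigma_L}(\mu_L))$, and push the auxiliary line $W$ (with parameter $\lambda$) through the product $X_{\sigma_1}(\mu_1)\cdots X_{\sigma_L}(\mu_L)$ site by site using~\eqref{msk}. Each application of~\eqref{msk} converts a local pair $(\mathscr{S}(\lambda,\mu_i), X(\mu_i))$ into $(X(\lambda), \mathscr{S}(\mu_i,\lambda))$-type data; after commuting the auxiliary line all the way around, the cyclicity of the trace over the matrix-product space brings it back and the $\lambda$-dependence must drop out because the auxiliary line $W$ has returned to its starting point. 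Concretely one wants to establish an identity of the schematic form
\begin{gather*}
\sum_{\sigma'_1,\ldots,\sigma'_L}\mathscr{T}(\lambda|\mu_1,\ldots,\mu_L)^{\sigma'_1,\ldots,\sigma'_L}_{\sigma_1,\ldots,\sigma_L}\operatorname{Tr}\big(X_{\sigma'_1}(\mu_1)\cdots X_{\sigma'_L}(\mu_L)\big)
= \operatorname{Tr}\Big(\widehat{X}_{\sigma_1}(\lambda,\mu_1)\cdots \widehat{X}_{\sigma_L}(\lambda,\mu_L)\Big),
\end{gather*}
where $\widehat{X}$ carries the auxiliary $W$-index, and then the trace over $W$ collapses the chain of $\mathscr{S}$'s back to the identity on $W$, leaving $\operatorname{Tr}(X_{\sigma_1}(\mu_1)\cdots X_{\sigma_L}(\mu_L))$ again. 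This uses only the relation~\eqref{msk}, cyclicity of both traces, and the convergence hypothesis to justify rearranging the (possibly infinite) matrix-product trace; the sum-to-unity property~\eqref{sum2} is implicitly what guarantees the normalization is consistent.

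Having shown $|\bar P\rangle$ with $\mathbb{P}(\sigma_1,\ldots,\sigma_L)=\operatorname{Tr}(X_{\sigma_1}(\mu_1)\cdots X_{\sigma_L}(\mu_L))$ lies in $\operatorname{Ker}(\mathscr{T}(\mu_i|\mu_1,\ldots,\mu_L)-1)$, the hypothesis that this kernel is one-dimensional forces $|\bar P\rangle$ to be proportional to the genuine stationary state in that sector, and the ``not identically zero'' assumption rules out the trivial solution. Since the Markov transfer matrices commute (equation~\eqref{kyk}) the stationary state is independent of $\lambda$, so proportionality at $\lambda=\mu_i$ upgrades to equality of the normalized stationary measures for all $\lambda$. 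The main obstacle I anticipate is analytic rather than algebraic: making the ``push the auxiliary line around the trace'' argument rigorous when the matrix product trace is an infinite sum — one must check that the rearrangements of summation order implicit in applying~\eqref{msk} site-by-site are legitimate, which is precisely where the ``provided the trace is convergent'' proviso is used; the combinatorial identity itself is a routine consequence of the ZF relation.
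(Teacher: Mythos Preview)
Your overall logical scaffold at the end --- show $|\bar P\rangle \in \operatorname{Ker}(\mathscr{T}(\mu_i|\mu_1,\ldots,\mu_L)-1)$, then invoke the one-dimensionality of that kernel together with commutativity~\eqref{kyk} --- matches the paper. The gap is in the mechanism you propose for the first step.

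The ZF relation~\eqref{msk} is a quadratic relation among the $X$'s with the matrix elements of $\mathscr{S}$ appearing as \emph{structure constants}; it is not an $RLL$-type relation that lets an $\mathscr{S}$-vertex living on an independent auxiliary space $W$ be commuted through an $X$. Hence there is no way, from~\eqref{msk} alone, to ``push an auxiliary line with parameter $\lambda$'' through $X_{\sigma_1}(\mu_1)\cdots X_{\sigma_L}(\mu_L)$ for a generic $\lambda$: the only spectral parameters that can ever appear in the $\mathscr{S}$-coefficients generated this way are the $\mu_j$'s already present in the product. Your $\widehat X$-object would require embedding $X$ into an $L$-operator satisfying~\eqref{rsmt}, which is additional structure not assumed in the proposition (cf.\ Section~\ref{sec5.1}). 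Likewise, the claim that ``the trace over $W$ collapses the chain of $\mathscr{S}$'s back to the identity'' has no justification: that trace is by definition the transfer matrix, not the identity.

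What actually works is the paper's argument: take $\lambda=\mu_L$ (wlog $i=L$) and move $X_{\alpha_L}(\mu_L)$ itself leftward through the trace by repeated use of~\eqref{msk}, then return it to the rightmost slot by cyclicity. This manufactures exactly $\mathscr{M}\big(\beta_L|{\textstyle{\alpha_1,\ldots,\alpha_{L-1}\atop\beta_1,\ldots,\beta_{L-1}}}|\alpha_L\big)$ with horizontal parameter $\mu_L$, and the identification with $\mathscr{T}(\mu_L|\mu_1,\ldots,\mu_L)^{\alpha_1,\ldots,\alpha_L}_{\beta_1,\ldots,\beta_L}$ uses the special value $\mathscr{S}(\mu_L,\mu_L)=P$ from~\eqref{mho} at the rightmost vertex --- which is precisely why only $\lambda=\mu_i$ is reachable by this route. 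The extension to general $\lambda$ then proceeds as you say, with the sum-to-unity property used not ``implicitly for normalization'' but explicitly to pin the eigenvalue $f(\lambda)$ of $\mathscr{T}(\lambda|\ldots)$ on $|\bar P\rangle$ to $1$.
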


\begin{proof}
The stationarity $\mathscr{T}(\lambda|\mu_1,\ldots, \mu_L)|\bar{P}\rangle = |\bar{P}\rangle$ follows from the special case $\lambda=\mu_i$. To see this note that the latter leads to $\mathscr{T}(\lambda|\mu_1,\ldots, \mu_L)|\bar{P}\rangle = \mathscr{T}(\lambda|\mu_1,\ldots, \mu_L) \mathscr{T}(\mu_i|\mu_1,\ldots, \mu_L)|\bar{P}\rangle =\mathscr{T}(\mu_i|\mu_1,\ldots, \mu_L) \mathscr{T}(\lambda|\mu_1,\ldots, \mu_L)|\bar{P}\rangle$, telling that $\mathscr{T}(\lambda|\mu_1,\ldots, \mu_L)|\bar{P}\rangle \in \operatorname{Ker}(\mathscr{T}(\mu_i|\mu_1,\ldots, \mu_L)-1)$. Therefore from the assumption we have $\mathscr{T}(\lambda|\mu_1,\ldots, \mu_L)|\bar{P}\rangle = f(\lambda)|\bar{P}\rangle$ for some scalar~$f(\lambda)$. Taking the ``column sum" on the both sides using the sum-to-unity property of $\mathscr{T}(\lambda|\mu_1,\ldots, \mu_L)$ we f\/ind $f(\lambda)=1$.

In the sequel, we assume $i=L$ without losing generality in the light of cyclicity of trace. Now the equality $|\bar{P}\rangle = \mathscr{T}(\mu_L|\mu_1,\ldots, \mu_L)|\bar{P}\rangle$ to be shown takes the form
\begin{gather*}%\label{obt}
\operatorname{Tr}(X_{\alpha_1}(\mu_1)\cdots X_{\alpha_L}(\mu_L)) =\sum_{\beta_1,\ldots, \beta_L} \mathscr{T} (\mu_L|\mu_1,\ldots, \mu_L)^{\alpha_1,\ldots, \alpha_L}_{\beta_1,\ldots, \beta_L} \operatorname{Tr}(X_{\beta_1}(\mu_1)\cdots X_{\beta_L}(\mu_L)).
\end{gather*}
In the l.h.s., starting from $X_{\alpha_{L-1}}(\mu_{L-1})X_{\alpha_L}(\mu_L)$ apply the relation~(\ref{msk}) repeatedly to send $X_{\alpha_L}(\mu_L)$
to the left through the whole product to bring it back to the original rightmost position by using the cyclicity of the trace. It results in the relation
\begin{gather*}
\operatorname{Tr}(X_{\alpha_1}(\mu_1)\cdots X_{\alpha_L}(\mu_L)) = \sum_{\beta_1,\ldots, \beta_L} \mathscr{M}\big(\beta_L|
{\textstyle{\alpha_1,\ldots, \alpha_{L-1} \atop \beta_1,\ldots, \beta_{L-1}}} |\alpha_L\big) \operatorname{Tr}(X_{\beta_1}(\mu_1)\cdots X_{\beta_L}(\mu_L))
\end{gather*}
in terms of $\mathscr{M}$ def\/ined under (\ref{ioa}). It is depicted as~(\ref{tdiag}). Since the parameters attached to the horizontal arrow and the rightmost vertical arrow are both~$\mu_L$, we have $\mathscr{M}\big(\beta_L| {\textstyle{\alpha_1,\ldots, \alpha_{L-1} \atop \beta_1,\ldots, \beta_{L-1}}}
|\alpha_L\big) = \mathscr{T} (\mu_L|\mu_1,\ldots, \mu_L)^{\alpha_1,\ldots, \alpha_L}_{\beta_1,\ldots, \beta_L}$ from the right relation in (\ref{mho}).
\end{proof}

The sum-to-unity property assures $\dim \operatorname{Ker}(\mathscr{T}(\mu_i|\mu_1,\ldots, \mu_L)-1)\ge 1$ in general. We expect that this is~1 when $\mu_1,\ldots, \mu_L$ are distinct\footnote{It is greater than 1 for example when $\mu_1=\cdots = \mu_L$.}. Anyway the matrix product formula~(\ref{mst}) has been proved in \cite[Proposition~6]{KO1} without relying on the assumption $\dim \operatorname{Ker}(\mathscr{T}(\mu_i|\mu_1,\ldots, \mu_L)-1) = 1$
but using the {\em auxiliary condition} \cite[equation~(30)]{KO1} of $\mathscr{S}(\lambda, \mu)$. We have included Proposition~\ref{pr:srn} as a~possible variant avoiding such a specif\/ic property of the stochastic $R$ matrix.

\section{Zamolodchikov--Faddeev algebra}\label{sec5}
\subsection{General remarks}\label{sec5.1}
Let us write (\ref{msk}) symbolically as
\begin{gather}\label{sra}
X(\mu) \otimes X(\lambda) = \check{\mathscr{S}}(\lambda, \mu)\bigl[X(\lambda) \otimes X(\mu)\bigr],
\end{gather}
where $X(\mu) = (X_\alpha(\mu))$ denotes a collection of operators. The quadratic relation of this form is called Zamolodchikov--Faddeev (ZF) algebra.
Its associativity is guaranteed by the YBE satisf\/ied by the structure function $\check{\mathscr{S}}(\lambda, \mu)$\footnote{$\check{\mathscr{S}}(\lambda, \mu) = P \mathscr{S}(\lambda, \mu)$ with $P (|\alpha\rangle \otimes |\beta\rangle) = |\beta\rangle \otimes | \alpha\rangle$ so that $\mathscr{S}(\lambda, \mu)^{\beta, \alpha}_{\gamma,\delta} = \check{\mathscr{S}}(\lambda, \mu)^{\alpha,\beta}_{\gamma,\delta}$.}. Before presenting the specif\/ic results to our ZRP in the next subsection, we review its background brief\/ly in this subsection. Similar contents can also be found for example in \cite{CDW, CRV, GDW, SW1} and references therein.

ZF algebra was originally introduced in the integrable quantum f\/ield theory in $(1+1)$ dimension to encode the factorized scattering of particles \cite{F, ZZ}. The structure function therein should be a~properly normalized scattering matrix satisfying unitarity to guarantee the total probability conservation in the quantum f\/ield theoretical setting. In the realm of integrable Markov processes, the situation is parallel. The ZF algebra serves as a local version of the stationary condition in the matrix product construction of the stationary states. The structure function, stochastic $R$ matrix, should fulf\/ill the sum-to-unity property. It was demonstrated in the proof of Proposition~\ref{pr:srn} how the ZF algebra leads to the stationary condition of the system in the case of a discrete time Markov process. Historically, however, such quadratic relations were utilized earlier in continuous time models as `cancellation mechanism' or `hat relations'~\cite{DEHP}. In the present set-up it reads
\begin{gather}\label{nmi}
h \bigl[X(\mu) \otimes X(\mu)\bigr] = X(\mu) \otimes X'(\mu) - X'(\mu)\otimes X(\mu)
\end{gather}
with $h = \frac{\partial}{\partial \lambda} \check{\mathscr{S}}(\lambda, \mu)|_{\lambda=\mu}$. This is the derivative of~(\ref{sra}) at $\lambda = \mu$ with the additional condition $\check{\mathscr{S}}(\mu, \mu) = \mathrm{id}$ which matches~(\ref{mho}). In terms of components, it reads $\sum\limits_{\gamma,\delta}h^{\alpha,\beta}_{\gamma,\delta} X_\gamma(\mu)X_\delta(\mu) = X_\alpha(\mu)X'_\beta(\mu) -X'_\alpha(\mu)X_\beta(\mu)$ if the action of $h$ is set as $h (|\alpha\rangle \otimes |\beta\rangle) = \sum\limits_{\gamma,\delta}h^{\gamma,\delta}_{\alpha, \beta} |\gamma\rangle \otimes |\delta\rangle$. The rela\-tion~(\ref{nmi}) manifestly tells that the matrix product states $\operatorname{Tr}(X(\mu) \otimes \cdots \otimes X(\mu))$ are null vectors of the operator $H = \sum_{i \in \Z_L}h_{i,i+1}$ in the periodic setting. In retrospect, one may compare the relation between~(\ref{nmi}) and~(\ref{sra}) with that between the XXZ model and the six-vertex model in the light of Baxter's formula~(\ref{Hs}).

Back to the ZF algebra itself, it is naturally embedded into the so-called $RLL = LLR$ relation
\begin{gather}\label{rsmt}
\bigl[\mathcal{L}(\lambda) \otimes \mathcal{L}(\mu)\bigr] \check{\mathscr{S}}(\lambda, \mu) =
\check{\mathscr{S}}(\lambda, \mu)\bigl[\mathcal{L}(\lambda) \otimes \mathcal{L}(\mu)\bigr]
\end{gather}
for an $L$ operator $\mathcal{L}(\lambda)$ if there is a special index, say $0$, such that $\mathscr{S}(\lambda, \mu)^{\beta, \alpha}_{0,0}= \mathscr{S}(\lambda, \mu)_{\beta, \alpha}^{0,0} = \theta(\alpha=\beta=0)$. In fact the matrix element of~(\ref{rsmt}) for $|0\rangle \otimes |0 \rangle \rightarrow |\alpha\rangle \otimes |\beta\rangle$ gives (\ref{sra}) by the identif\/ication $\mathcal{L}(\mu)_{\alpha, 0} = X_\alpha(\mu)$. In view of this, construction of stationary states is elevated and embedded into that of representations of the stochastic $RLL=LLR$ relation in which the relevant components of $\operatorname{Tr}(\mathcal{L}(\mu_1) \otimes \cdots \otimes \mathcal{L}(\mu_L))$ are convergent and not identically zero. When the structure function is the $R$ matrix~$R^{1,1}$ of the vector representation, a universal~$L$ has been provided in~\cite{J}. Starting from it, one can cope with $RLL=LLR$ with higher $R^{l,m}$ by the corresponding fusion of $L$'s in principle~\cite{KRS}. Modifying it so as to f\/it the stochastic~$S^{l,m}$ should be feasible by an appropriate twist (cf.~\cite{GDW}). In this sense there is a standard route to achieve the matrix product construction for $S^{l,m}$-based models at least conceptually if not practically. On the other hand, a further intriguing feature is expected when the structure function is $\mathscr{S}(\lambda,\mu)$ due to the peculiarity of its origin (\ref{fac})--(\ref{lin}). This is one of our motivations in Sections \ref{sec5.2} and \ref{sec5.3}.

Turning to stationary probabilities, the maneuver in the proof of Proposition \ref{pr:srn} elucidates that it is a part of more general problem of f\/inding solutions to a quantum Knizhnik--Zamolodchikov type equation~\cite{FR} with appropriate subsidiary conditions. Such wider problems have not been addressed for our stochastic $R$ matrix~(\ref{lin}). Implication of the sum-to-unity property (\ref{sum2}) to the ZF algebra will be explained after Remark~\ref{askg} until the end of Section~\ref{sec5}. In the next subsection we will be concerned with a particular representation of the ZF algebra in terms of {\em $q$-bosons}.

\subsection[$q$-boson representation]{$\boldsymbol{q}$-boson representation}\label{sec5.2}
Let us present a $q$-boson representation of the ZF algebra~(\ref{msk})\footnote{In this paper we do not attempt to classify the representations
such that r.h.s.\ of \eqref{mst} is f\/inite and nonzero.}. Here and in the next subsection we stay in the regime $0 < q < 1$. From~(\ref{lin}) it has the explicit form
\begin{gather}\label{zz}
X_\alpha(\mu)X_\beta(\lambda) = \sum_{\gamma \le \alpha}\Phi_q(\beta|\alpha+\beta-\gamma;\lambda, \mu)
X_{\gamma}(\lambda)X_{\alpha+\beta-\gamma}(\mu),
\end{gather}
where the omitted condition $\gamma \in \Z_{\ge 0}^n$ should always be taken for granted. We f\/ind it convenient to work also with another normalization $Z_\alpha(\mu)$ specif\/ied by
\begin{gather}\label{aim}
X_\alpha(\mu) = g_\alpha(\mu)Z_\alpha(\mu),
\end{gather}
where $g_\alpha(\mu)$ has been def\/ined in (\ref{mgd}). The ZF algebra for the latter takes the form
\begin{gather}
 Z_\alpha(\mu)Z_\beta(\lambda) = \sum_{\gamma \le \alpha} q^{\varphi(\alpha-\gamma, \beta-\gamma)} \Phi_q(\gamma|\alpha; \lambda,\mu)
Z_\gamma(\lambda)Z_{\alpha+\beta-\gamma}(\mu)\label{mrn}
\end{gather}
due to the identity
\begin{gather*}
\frac{g_\gamma(\lambda)g_{\alpha+\beta-\gamma}(\mu)} {g_\alpha(\mu)g_{\beta}(\lambda)} \Phi_q(\beta|\alpha+\beta-\gamma;\lambda, \mu)
=q^{\varphi(\alpha-\gamma, \beta-\gamma)} \Phi_q(\gamma|\alpha;\lambda, \mu).
\end{gather*}

Let $\mathcal{B}$ be the algebra generated by $1$, $\bb$, $\bc$, $\bk$ obeying the relations
\begin{gather}\label{akn}
\bk \bb = q\bb \bk,\qquad \bk\bc = q^{-1} \bk \bc,\qquad \bb \bc = 1 - \bk,\qquad \bc \bb = 1-q\bk.
\end{gather}
We call it the $q$-boson algebra. It has a basis $\{\bb^i \bc^j\,|\, i,j \in \Z_{\ge 0}\}$.

Let $F = \bigoplus_{m \ge 0}\C(q) |m\rangle$ be the Fock space and $F^\ast = \bigoplus_{m \ge 0}\C(q) \langle m |$ be its dual on which the $q$-boson operators $\bb$, $\bc$, $\bk$ act as
\begin{alignat}{4}
& \bb | m \rangle = |m+1\rangle,\qquad && \bc | m \rangle = (1-q^m)|m-1\rangle, \qquad && \bk |m\rangle = q^m |m \rangle,& \nonumber\\
& \langle m | \bc = \langle m+1 |,\qquad && \langle m | \bb = \langle m-1|(1-q^m),\qquad && \langle m | \bk = \langle m | q^m,& \label{yrk}
\end{alignat}
where $|{-}1\rangle = \langle -1 |=0$ and $1$ acts as the identity. They satisfy the def\/ining relations~(\ref{akn}). The bilinear pairing of $F^\ast$ and $F$ is specif\/ied as $\langle m | m'\rangle = \delta_{m,m'}(q)_m$. Then $\langle m| (X|m'\rangle) = (\langle m|X)|m'\rangle$ is valid and the trace is given by $\operatorname{Tr}(X) = \sum\limits_{m \ge 0} \frac{\langle m|X|m\rangle}{(q)_m}$. As a vector space, the $q$-boson algebra~$\mathcal{B}$ has the direct sum decomposition $\mathcal{B} = \C(q) 1 \oplus \mathcal{B}_{\text{f\/in}}$, where $\mathcal{B}_{\text{f\/in}} = \bigoplus_{r \ge 1} (\mathcal{B}_+^r \oplus \mathcal{B}_-^r \oplus \mathcal{B}_0^r)$ with $\mathcal{B}^r_+ =\bigoplus_{s\ge 0} \C(q) \bk^s\bb^r$, $\mathcal{B}^r_- =\bigoplus_{s\ge 0} \C(q) \bk^s\bc^r$ and $\mathcal{B}^r_0 =\C(q) \bk^r$. The trace $\operatorname{Tr}(X)$ is convergent if $X \in \mathcal{B}_{\text{f\/in}}$. It vanishes unless $X \in \bigoplus_{r \ge 1}\mathcal{B}^r_0$ when it is evaluated by $\operatorname{Tr}(\bk^r) = (1-q^r)^{-1}$.

By the $q$-boson representation we mean the algebra homomorphisms
\begin{gather*}
\text{ZF algebra (\ref{zz}) or (\ref{mrn})} \rightarrow \mathcal{B}^{\otimes n(n-1)/2} \rightarrow \operatorname{End}\big(F^{\otimes n(n-1)/2}\big).
\end{gather*}
The right arrow is already given by (\ref{yrk}). In the sequel we will focus on the left arrow and denote $Z_\alpha(\mu) \mapsto (\cdots)$ simply by
$Z_\alpha(\mu) = (\cdots)$.

The ZF algebra (\ref{mrn}) admits a ``trivial'' representation $Z_\alpha(\zeta) = K_\alpha$ in terms of an opera\-tor~$K_\alpha$ satisfying $K_0 = 1$ and
$K_\alpha K_\beta = q^{\varphi(\alpha, \beta)}K_{\alpha+\beta}$ \cite[Proposition~7]{KO1}, where $\varphi(\alpha, \beta)$ is def\/ined in~(\ref{Pdef}). Such a~$K_\alpha$ is easily constructed, for instance as\footnote{We write $K_\alpha$ with $\alpha=(\alpha_1,\ldots, \alpha_n)$ as $K_{\alpha_1,\ldots, \alpha_n}$ rather than $K_{(\alpha_1,\ldots, \alpha_n)}$ for simplicity. A similar convention will also be used for $g_\alpha(\zeta), X_\alpha(\zeta)$ and $Z_\alpha(\zeta)$.}
\begin{gather}\label{K}
K_{\alpha_1,\ldots, \alpha_{n}}=\bk^{\alpha^+_1} \bc^{\alpha_1} \otimes \cdots \otimes \bk^{\alpha^+_{n-1}}\bc^{\alpha_{n-1}} \in \mathcal{B}^{\otimes n-1}, \qquad \alpha^+_i := \alpha_{i+1}+\cdots + \alpha_{n}.
\end{gather}
However this representation does not contain a creation operator $\bb$ therefore leads to vanishing trace in the matrix product formula~(\ref{mst}). Our $Z_\alpha(\zeta)$ given below may be regarded as a~perturbation series starting from the trivial representation in terms of creation operators.

For $\alpha_i \in \Z_{\ge 0}$, def\/ine the element $Z_{\alpha_1,\ldots, \alpha_{n}}(\zeta) \in \mathcal{B}^{\otimes n(n-1)/2}$ from the $n=1$ case and the recursion with respect to $n$ as follows:
\begin{gather}
Z_{\alpha_1}(\zeta) =1, \label{ini}\\
Z_{\alpha_1,\ldots, \alpha_n} (\zeta) = \sum_{l=(l_1,\ldots, l_{n-1}) \in \Z_{\ge 0}^{n-1}}X_l(\zeta) \otimes \bb^{l_1}\bk^{\alpha^+_1}\bc^{\alpha_1}\otimes \cdots \otimes \bb^{l_{n-1}}\bk^{\alpha^+_{n-1}}\bc^{\alpha_{n-1}},\label{mdk}
\end{gather}
where $X_l(\zeta)= g_l(\zeta)Z_l(\zeta)$ as in (\ref{aim}) and $\alpha^+_i$ is def\/ined by $(\ref{K})$.

\begin{Theorem}[\cite{KO2}]\label{th:main}
The $Z_\alpha(\zeta)$ defined by \eqref{ini}, \eqref{mdk} satisfies the ZF algebra \eqref{mrn} for general~$n$.
\end{Theorem}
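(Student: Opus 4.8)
The plan is to prove the ZF relation~\eqref{mrn} for $Z_{\alpha_1,\dots,\alpha_n}(\zeta)$ by induction on $n$, using the recursive definition~\eqref{mdk}. The base case $n=1$ is vacuous: $Z_{\alpha_1}(\zeta)=1$ for all $\alpha_1$, and \eqref{mrn} reduces to $1=1$ since $\Phi_q(\gamma\mid\alpha;\lambda,\mu)$ with one-component arrays satisfies the single-species sum-to-unity (the summand is nonzero only for $\gamma=\alpha$, where it equals $1$ by \eqref{Pdef}; more precisely $\gamma\le\alpha$ with $|\gamma|$ ranging, but the $n=1$ ZF relation collapses because $X_l(\zeta)=g_l(\zeta)$ are scalars and the identity reproducing it is exactly $\sum_\gamma\Phi_q(\gamma\mid\alpha;\lambda,\mu)=1$). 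So the content is the inductive step.

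For the inductive step, assume the ZF algebra \eqref{mrn} holds for $n-1$, i.e.\ for $X_l(\zeta)=g_l(\zeta)Z_l(\zeta)$ with $l\in\Z_{\ge0}^{n-1}$ (equivalently \eqref{zz} for the $(n-1)$-variable operators). Write $Z_{\alpha}(\zeta)$ for $\alpha=(\alpha_1,\dots,\alpha_n)$ as in \eqref{mdk}: a sum over $l\in\Z_{\ge0}^{n-1}$ of $X_l(\zeta)$ (acting on the first $\binom{n-1}{2}$ Fock factors) tensored with the $q$-boson monomial $\bigotimes_{j=1}^{n-1}\bb^{l_j}\bk^{\alpha^+_j}\bc^{\alpha_j}$ on the remaining $n-1$ factors. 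Then form the product $Z_\alpha(\mu)Z_\beta(\lambda)$. It factors into two independent pieces: the ``head'' piece is a product $X_l(\mu)X_m(\lambda)$ of the $(n-1)$-variable operators, to which the induction hypothesis applies; the ``tail'' piece is a product of two $q$-boson monomials on each of the $n-1$ separate copies of $\mathcal{B}$, which must be normal-reordered using the commutation relations~\eqref{akn}. The strategy is: (i) apply the induction hypothesis to rewrite the head as $\sum X_{l'}(\lambda)X_{m'}(\mu)$ with structure constants $\Phi_q$; (ii) on each $q$-boson factor independently, compute $\bb^{l_j}\bk^{\alpha^+_j}\bc^{\alpha_j}\cdot\bb^{m_j}\bk^{\beta^+_j}\bc^{\beta_j}$ and re-express it in the form $\sum \bb^{l'_j}\bk^{\gamma^+_j}\bc^{\gamma_j}\cdot\bb^{m'_j}\bk^{(\alpha+\beta-\gamma)^+_j}\bc^{\alpha_j+\beta_j-\gamma_j}$ matching the target normal order, picking up $q$-powers and $q$-binomial coefficients; (iii) collect everything and match, term by term in $\gamma\le\alpha$, against the right-hand side of \eqref{mrn}, i.e.\ against $q^{\varphi(\alpha-\gamma,\beta-\gamma)}\Phi_q(\gamma\mid\alpha;\lambda,\mu)Z_\gamma(\lambda)Z_{\alpha+\beta-\gamma}(\mu)$.

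The key algebraic lemma to isolate and prove separately is the single-$q$-boson reordering identity: for nonnegative integers $a,b,c,d$ and any exponents,
\begin{gather*}
\bb^{a}\bk^{p}\bc^{b}\,\bb^{c}\bk^{r}\bc^{d}
=\sum_{t} (\text{explicit }q\text{-factor})(\text{explicit }q\text{-binomial})\,
\bb^{a+c-t}\bk^{p'}\bc^{b-t}\;\cdots
\end{gather*}
obtained by repeatedly moving $\bc$'s past $\bb$'s via $\bc\bb=1-q\bk$ and $\bk$'s past $\bb,\bc$ via \eqref{akn}; the resulting combinatorial factor is a $q$-binomial precisely because $\bc^b\bb^c=\sum_t \binom{b}{t}_{q}\binom{c}{t}_{q}(q)_t\,q^{(b-t)(c-t)}\bb^{c-t}\bk^t\bc^{b-t}$ (a standard $q$-Weyl identity). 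One then verifies that the product of these per-factor $q$-binomials, times the $q$-powers from both the reordering and the $\varphi$-weights $\alpha^+_j$, reconstructs exactly $\prod_{i=1}^n\binom{\beta_i}{\gamma_i}_q$ together with the prefactor $q^{\varphi(\beta-\gamma,\gamma)}$ and $q^{\varphi(\alpha-\gamma,\beta-\gamma)}$ appearing in \eqref{Pdef} and \eqref{mrn}. The index $l'_j$ (number of surviving $\bb$'s) gets absorbed into the recursive definition of $Z_\gamma(\lambda)$, and $m'_j$ into $Z_{\alpha+\beta-\gamma}(\mu)$; this is the point where the precise shape of \eqref{mdk}, with $\bb^{l_j}$ on the left of each monomial, is essential for closure.

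The main obstacle is step (iii): the bookkeeping of $q$-powers. The induction hypothesis supplies one layer of $\Phi_q$ (hence $q$-powers $\varphi(\cdot,\cdot)$ on $n-1$ components and the rational $q$-Pochhammer ratio in $\lambda/\mu$), while the $q$-boson reordering supplies another layer of $q$-powers coming from $\bk$-exponents $\alpha^+_j=\alpha_{j+1}+\cdots+\alpha_n$ passing through $\bb^{m_j},\bc$, etc. These two sources must combine into the single clean $n$-component $\Phi_q(\gamma\mid\alpha;\lambda,\mu)$ with its $\varphi(\beta-\gamma,\gamma)$ and the extra $\varphi(\alpha-\gamma,\beta-\gamma)$. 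Verifying this amounts to an identity among quadratic forms in the $\alpha_i,\beta_i,\gamma_i$, which one should reduce to checking the additivity/telescoping of $\varphi$ under the splitting of the $n$-th variable from the first $n-1$, namely a relation of the type $\varphi^{(n)}(\xi,\eta)=\varphi^{(n-1)}(\bar\xi,\bar\eta)+(\text{terms involving the }n\text{-th components and the partial sums }\alpha^+_j)$. Once that combinatorial identity is pinned down, the rest is routine; the $\lambda,\mu$-dependent Pochhammer factors never interact with the $q$-boson side and simply come along for the ride from the induction hypothesis.
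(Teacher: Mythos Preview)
The paper does not actually prove this theorem; it is quoted from \cite{KO2} without argument, so there is no ``paper's own proof'' to compare against beyond the reference. Your inductive strategy on $n$ via the recursion \eqref{mdk} is the natural one and is indeed the route taken in \cite{KO2}.

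That said, what you have written is a plan, not a proof, and it contains a concrete error. The reordering identity you quote,
\[
\bc^b\bb^c=\sum_t \binom{b}{t}_{q}\binom{c}{t}_{q}(q)_t\,q^{(b-t)(c-t)}\,\bb^{c-t}\bk^t\bc^{b-t},
\]
is wrong already for $b=c=1$: the left side is $\bc\bb=1-q\bk$, while your right side gives $q\bb\bc+(1-q)\bk=q+(1-2q)\bk$. The correct elementary relation, obtained directly from \eqref{akn}, is $\bc^b\bb^c=\bb^{c-b}(q^{c-b+1}\bk;q)_b$ for $b\le c$ (and an analogous formula for $b>c$), which is a polynomial in $\bk$ rather than a $q$-binomial sum of the shape you wrote. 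The actual lemma you need is not a single-site $\bc^b\bb^c$ formula but a two-monomial reordering
\[
(\bb^{l}\bk^{p}\bc^{a})(\bb^{m}\bk^{r}\bc^{b})
=\sum_{\gamma,\,l',\,m'} C(\cdots)\,(\bb^{l'}\bk^{p'}\bc^{\gamma})(\bb^{m'}\bk^{r'}\bc^{a+b-\gamma}),
\]
with the coupling $l'+m'=l+m$ and explicit $C$; getting this right, and then showing that the product over $j=1,\dots,n-1$ of these per-site coefficients combines with the inductive $\Phi_q^{(n-1)}$ to produce exactly $q^{\varphi(\alpha-\gamma,\beta-\gamma)}\Phi_q^{(n)}(\gamma\mid\alpha;\lambda,\mu)$, is the entire content of the theorem. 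You have correctly located this as the obstacle, but you have not done it, and the telescoping you allude to is more delicate than a single additivity of $\varphi$: the partial sums $\alpha^+_j$ enter through the $\bk$-exponents and interact with both the head $\Phi_q^{(n-1)}$ and the tail $q$-binomials in a way that must be tracked carefully. Until that identity is written down and checked, the argument is incomplete.
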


\subsection{Explicit formula}\label{sec5.3}

From (\ref{mdk}) $Z_\alpha(\zeta)$ depends on $\alpha$ simply as
\begin{gather}\label{yry}
Z_{\alpha_1,\ldots, \alpha_{n}}(\zeta) = Z_{0^{n}}(\zeta) \bigl(1^{\otimes \frac{1}{2}(n-1)(n-2)} \otimes K_{\alpha_1,\ldots, \alpha_{n}}\bigr),
\end{gather}
in terms of $K_{\alpha_1,\ldots, \alpha_{n}}$ in~(\ref{K}). Thus it suf\/f\/ices to calculate the special case of~(\ref{mdk}):
\begin{gather}\label{zgz}
Z_{0^n}(\zeta) = \sum_{l_1,\ldots, l_{n-1} \in \Z_{\ge0}} g_{l_1,\ldots, l_{n-1}}(\zeta)Z_{l_1, \ldots, l_{n-1}}(\zeta) \otimes
\bb^{l_1} \otimes \cdots \otimes \bb^{l_{n-1}}.
\end{gather}
Utilizing
$\frac{(zw)_\infty}{(z)_\infty} = \sum\limits_{j \ge 0}\frac{(w)_j}{(q)_j}z^j$, we get the explicit formula for $n=2$:
\begin{gather}
Z_{0,0}(\zeta) = \sum_{l_1\ge 0} \frac{(\zeta)_{l_1}\zeta^{-l_1}}{(q)_{l_1}}
\bb^{l_1} = \frac{(\bb)_\infty}{(\zeta^{-1}\bb)_\infty},\qquad
Z_{\alpha_1, \alpha_2}(\zeta) = Z_{0,0}(\zeta) K_{\alpha_1,\alpha_2}=
\frac{(\bb)_\infty}{(\zeta^{-1}\bb)_\infty}\bk^{\alpha_2}
\bc^{\alpha_1},\nonumber\\
X_{\alpha_1, \alpha_2}(\zeta) = g_{\alpha_1, \alpha_2}(\zeta)Z_{\alpha_1, \alpha_2}(\zeta)
=\frac{\zeta^{-\alpha_1-\alpha_2}(\zeta)_{\alpha_1+\alpha_2}} {(q)_{\alpha_1}(q)_{\alpha_2}}
\frac{(\bb)_\infty}{(\zeta^{-1}\bb)_\infty}\bk^{\alpha_2}
\bc^{\alpha_1}. \label{Xw}
\end{gather}
It is a good exercise to conf\/irm Example \ref{ex:kan} by substituting these results into the matrix product formula~(\ref{mst}).

For $n=3$, the sum (\ref{zgz}) is calculated by using $(\zeta)_{l_1+l_2} = (\zeta)_{l_2}(q^{l_2}\zeta)_{l_1}$ as
\begin{gather*}
Z_{0,0,0}(\zeta) = \sum_{l_1,l_2}\frac{\zeta^{-l_1-l_2}(\zeta)_{l_1+l_2}} {(q)_{l_1}(q)_{l_2}}
\frac{(\bb)_\infty}{(\zeta^{-1}\bb)_\infty}\bk^{l_2} \bc^{l_1}\otimes \bb^{l_1} \otimes \bb^{l_2} \nonumber\\
\hphantom{Z_{0,0,0}(\zeta)}{} =\frac{(\bb \otimes 1 \otimes 1)_\infty}
{(\zeta^{-1}\bb \otimes 1 \otimes 1)_\infty}
\sum_{l_2} \frac{\zeta^{-l_2}(\zeta)_{l_2}(\bk \otimes 1 \otimes \bb)^{l_2}}{(q)_{l_2}}
\sum_{l_1} \frac{\zeta^{-l_1}(q^{l_2}\zeta)_{l_1}(\bc \otimes \bb \otimes 1)^{l_1}}{(q)_{l_1}}\nonumber\\
\hphantom{Z_{0,0,0}(\zeta)}{} =\frac{(\bb \otimes 1 \otimes 1)_\infty}
{(\zeta^{-1}\bb \otimes 1 \otimes 1)_\infty}
\sum_{l_2} \frac{\zeta^{-l_2}(\zeta)_{l_2}(\bk \otimes 1 \otimes \bb)^{l_2}}{(q)_{l_2}}
\frac{(q^{l_2}\bc \otimes \bb \otimes 1)_\infty}
{(\zeta^{-1}\bc \otimes \bb \otimes 1)_\infty}
\nonumber\\
\hphantom{Z_{0,0,0}(\zeta)}{} =\frac{(\bb \otimes 1 \otimes 1)_\infty}
{(\zeta^{-1}\bb \otimes 1 \otimes 1)_\infty}
(\bc \otimes \bb \otimes 1)_\infty
\sum_{l_2} \frac{\zeta^{-l_2}(\zeta)_{l_2}
(\bk \otimes 1 \otimes \bb)^{l_2}}{(q)_{l_2}}
\frac{1}{(\zeta^{-1}\bc \otimes \bb \otimes 1)_\infty}\nonumber\\
\hphantom{Z_{0,0,0}(\zeta)}{} =\frac{(\bb \otimes 1 \otimes 1)_\infty}
{(\zeta^{-1}\bb \otimes 1 \otimes 1)_\infty}
(\bc \otimes \bb \otimes 1)_\infty
\frac{(\bk \otimes 1 \otimes \bb)_\infty}
{(\zeta^{-1}\bk \otimes 1 \otimes \bb)_\infty}
\frac{1}{(\zeta^{-1}\bc \otimes \bb \otimes 1)_\infty},
\nonumber\\
Z_{\alpha_1,\alpha_2, \alpha_3}(\zeta) =
Z_{0,0,0}(\zeta) (1 \otimes K_{\alpha_1,\alpha_2,\alpha_3}) = Z_{0,0,0}(\zeta)(1 \otimes \bk^{\alpha_2+\alpha_3}\bc^{\alpha_1} \otimes
\bk^{\alpha_3} \bc^{\alpha_2}).
\end{gather*}
These results on $Z_{0,0}(\zeta)$ and $Z_{0,0,0}(\zeta)$ are summarized as\footnote{The operators $V_j(\zeta)$'s appearing in the sequel have nothing to do with that in~(\ref{BV}).}
\begin{gather}
Z_{0,0}(\zeta) = V_1(1)V_1(\zeta)^{-1},\qquad V_1(\zeta) = (\zeta^{-1}\bb)_\infty,\nonumber\\
Z_{0,0,0}(\zeta) =\bigl(Z_{0,0}(\zeta)\otimes 1\otimes 1\bigr)V_2(1)V_2(\zeta)^{-1},\nonumber\\
V_2(\zeta) = (\zeta^{-1}\bc \otimes \bb \otimes 1)_\infty (\zeta^{-1}\bk \otimes 1 \otimes \bb)_\infty.\label{knh}
\end{gather}

Now we proceed to general $n \ge 2$ case. Substitution of $(\ref{yry})|_{n\rightarrow n-1}$ into the r.h.s.\ of (\ref{zgz}) gives
\begin{gather}
Z_{0^{n}}(\zeta)= \bigl(Z_{0^{n-1}}(\zeta) \otimes 1^{\otimes n-1}\bigr)Y_n(\zeta),\nonumber\\
Y_n(\zeta)=\sum_{l_1,\ldots, l_{n-1} \in \Z_{\ge 0}}g_{l_1,\ldots, l_{n-1}}(\zeta)
1^{\otimes \frac{1}{2}(n-2)(n-3)}\otimes K_{l_1,\ldots, l_{n-1}}\otimes \bb^{l_1} \otimes \cdots \otimes \bb^{l_{n-1}}.\label{ain}
\end{gather}
It is handy to describe $\mathcal{B}^{\otimes n(n-1)/2}$ by introducing the copies $\mathcal{B}_{i,j} = \langle 1, \bb_{i,j}, \bc_{i,j}, \bk_{i,j} \rangle$
of the $q$-boson algebras for $1 \le i \le j < n$ obeying (\ref{akn}) within each $\mathcal{B}_{i,j}$ and $[\mathcal{B}_{i,j}, \mathcal{B}_{i',j'}]=0$
if $(i,j) \neq (i',j')$. We take them so that $Z_{\alpha_1,\ldots, \alpha_{n}}(\zeta) \in \bigotimes_{1 \le i \le j < n}\mathcal{B}_{i,j}$ and~(\ref{ain}) reads
\begin{gather*}
Y_n(\zeta)= \sum_{l_1,\ldots, l_{n-1} \in \Z_{\ge 0}} g_{l_1,\ldots, l_{n-1}}(\zeta) \bigl(\bk_{1,n-2}^{l^+_1}\bc_{1,n-2}^{l_1} \cdots
\bk_{n-2,n-2}^{l^+_{n-2}}\bc_{n-2,n-2}^{l_{n-2}}\bigr) \bigl(\bb_{1,n-1}^{l_1}\cdots \bb_{n-1,n-1}^{l_{n-1}}\bigr),
\end{gather*}
where $l^+_j = l_{j+1}+ \cdots + l_{n-1}$. It corresponds to labeling the components in $\mathcal{B}^{\otimes n(n-1)/2}$ as
\begin{gather}\label{hsi}
(1,1), (1,2), (2,2), (1,3), (2,3), (3,3), \ldots, (1,n-1),(2,n-1), \ldots, (n-1,n-1).
\end{gather}
Def\/ine the following elements in $\bigotimes_{1 \le i \le j < n}\mathcal{B}_{i,j}$ (actually in a certain completion of it):
\begin{gather*}
Y_j(\zeta) = V_{j-1}(1)V_{j-1}(\zeta)^{-1}= V_{j-1}(\zeta)^{-1}V_{j-1}(1),\\
V_j(\zeta) = (\zeta^{-1}A_{1,j})_\infty (\zeta^{-1}A_{2,j})_\infty \cdots (\zeta^{-1}A_{j,j})_\infty,\\
A_{i,j}= \bk_{1,j-1}\bk_{2,j-1}\cdots \bk_{i-1,j-1}\bc_{i,j-1}\bb_{i,j}, \qquad \bc_{j,j-1}=1.
\end{gather*}
In particular, $A_{1,n-1}=\bc_{1,n-2}\bb_{1,n-1}$ and $A_{n-1,n-1} = \bk_{1,n-2} \cdots \bk_{n-2,n-2}\bb_{n-1,n-1}$. As for the right equality in the f\/irst line, see Remark~\ref{askg}.

\begin{Theorem}[\cite{KO2}]\label{th:nzm}
The representation of the ZF algebra \eqref{mrn} in $\bigotimes_{1 \le i \le j < n}\mathcal{B}_{i,j}$ given in Theorem~{\rm \ref{th:main}} and \eqref{ini}, \eqref{mdk} is expressed as follows:
\begin{gather*}
Z_{\alpha_1,\ldots, \alpha_{n}}(\zeta) = Z_{0^{n}}(\zeta)\bk_{1,n-1}^{\alpha^+_1}\bc_{1,n-1}^{\alpha_1}\cdots \bk_{n-1,n-1}^{\alpha^+_{n-1}}\bc_{n-1,n-1}^{\alpha_{n-1}}, \qquad \alpha^+_i = \alpha_{i+1}+\cdots + \alpha_{n},\\
Z_{0^{n}}(\zeta)= Y_2(\zeta)Y_3(\zeta) \cdots Y_{n}(\zeta).
\end{gather*}
\end{Theorem}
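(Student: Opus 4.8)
The plan is to prove the two displayed identities by induction on $n$, reducing everything to a single normal--ordering identity for $q$-exponentials of $q$-boson operators. The first identity is essentially automatic: equation \eqref{yry} is already available, and under the labeling \eqref{hsi} the tensor factor $1^{\otimes\frac12(n-1)(n-2)}\otimes K_{\alpha_1,\ldots,\alpha_n}$, with $K_{\alpha_1,\ldots,\alpha_n}\in\mathcal B^{\otimes n-1}$ as in \eqref{K}, is precisely $\bk_{1,n-1}^{\alpha^+_1}\bc_{1,n-1}^{\alpha_1}\cdots\bk_{n-1,n-1}^{\alpha^+_{n-1}}\bc_{n-1,n-1}^{\alpha_{n-1}}$, since the last $n-1$ slots in \eqref{hsi} are $\mathcal B_{1,n-1},\ldots,\mathcal B_{n-1,n-1}$ and these mutually commute. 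So the only genuine task is the factorization $Z_{0^n}(\zeta)=Y_2(\zeta)\cdots Y_n(\zeta)$ with $Y_j(\zeta)=V_{j-1}(1)V_{j-1}(\zeta)^{-1}$. By the recursion \eqref{ain} (whose derivation from \eqref{zgz} and \eqref{yry} is already given), $Z_{0^n}(\zeta)=(Z_{0^{n-1}}(\zeta)\otimes 1^{\otimes n-1})\widetilde Y_n(\zeta)$, where I write $\widetilde Y_n(\zeta)$ for the explicit sum in \eqref{ain}. Hence by induction on $n$ (base $n=1$, where $Z_{0^1}(\zeta)=1$ by \eqref{ini} and the right-hand side is the empty product) it suffices to prove $\widetilde Y_n(\zeta)=V_{n-1}(1)V_{n-1}(\zeta)^{-1}$.

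Next I would rewrite $\widetilde Y_n(\zeta)$ as a $q$-series in the operators $A_{i,n-1}$. Using \eqref{K}, the defining relations \eqref{akn}, and the labeling \eqref{hsi}, a direct computation --- collecting in each slot $\mathcal B_{j,n-2}$ the $\bk_{j,n-2}$'s coming from $A_{i,n-1}$ with $i>j$ followed by the single $\bc_{j,n-2}$ from $A_{j,n-1}$, which already stand in the normal order ``$\bk$'s then $\bc$'' so that no power of $q$ is created, while the creation operators $\bb_{i,n-1}$ occupy pairwise distinct slots $\mathcal B_{i,n-1}$ --- yields
\[
1^{\otimes\frac12(n-2)(n-3)}\otimes K_{l_1,\ldots,l_{n-1}}\otimes\bb^{l_1}\otimes\cdots\otimes\bb^{l_{n-1}}=A_{n-1,n-1}^{l_{n-1}}A_{n-2,n-1}^{l_{n-2}}\cdots A_{1,n-1}^{l_1}.
\]
Combined with $g_{l_1,\ldots,l_{n-1}}(\zeta)=\zeta^{-|l|}(\zeta)_{|l|}\big/\prod_i(q)_{l_i}$ from \eqref{mgd} this gives
\[
\widetilde Y_n(\zeta)=\sum_{l_1,\ldots,l_{n-1}\ge0}\frac{\zeta^{-|l|}(\zeta)_{|l|}}{(q)_{l_1}\cdots(q)_{l_{n-1}}}\,A_{n-1,n-1}^{l_{n-1}}\cdots A_{1,n-1}^{l_1}.
\]

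The heart of the proof is then the following normal--ordering lemma, which generalizes the explicit $n=3$ computation preceding \eqref{knh}: if $A_1,\ldots,A_m$ are operators (in a suitable completion) with $A_i^{\,l}A_jA_i^{-l}=q^{-l}A_j$ for all $1\le j<i\le m$ and $l\ge0$, then
\[
\sum_{l_1,\ldots,l_m\ge0}\frac{\zeta^{-|l|}(\zeta)_{|l|}}{(q)_{l_1}\cdots(q)_{l_m}}\,A_m^{l_m}\cdots A_1^{l_1}=(A_1)_\infty\cdots(A_m)_\infty\,(\zeta^{-1}A_m)_\infty^{-1}\cdots(\zeta^{-1}A_1)_\infty^{-1}.
\]
The operators $A_i=A_{i,n-1}$ satisfy the hypothesis: the only non-commuting overlap of $A_{i,n-1}$ and $A_{j,n-1}$ for $j<i$ is the pair $\bk_{j,n-2}$ inside $A_{i,n-1}$ and $\bc_{j,n-2}$ inside $A_{j,n-1}$, and $\bk^l\bc\bk^{-l}=q^{-l}\bc$ by \eqref{akn}. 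So the lemma with $A_i=A_{i,n-1}$ gives exactly $\widetilde Y_n(\zeta)=V_{n-1}(1)V_{n-1}(\zeta)^{-1}$, closing the outer induction. I would prove the lemma by induction on $m$, the base $m=1$ being the $q$-binomial theorem. In the inductive step: split $(\zeta)_{|l|}=\prod_{i=1}^m(q^{l_i^+}\zeta)_{l_i}$ with $l_i^+=l_{i+1}+\cdots+l_m$ (iterating $(\zeta)_{a+b}=(\zeta)_b(q^b\zeta)_a$); sum over $l_1$ via $\frac{(zw)_\infty}{(z)_\infty}=\sum_{k\ge0}\frac{(w)_k}{(q)_k}z^k$ with $z=\zeta^{-1}A_1$, $w=q^{l_1^+}\zeta$, producing the right factor $(q^{l_1^+}A_1)_\infty(\zeta^{-1}A_1)_\infty^{-1}$; then move $(q^{l_1^+}A_1)_\infty$ leftward through $A_m^{l_m}\cdots A_2^{l_2}$, each passage past $A_i^{l_i}$ turning $q^{l_{i-1}^+}$ into $q^{l_{i-1}^+-l_i}=q^{l_i^+}$ by the commutation hypothesis, so the exponent telescopes to $l_m^+=0$ and the factor becomes $(A_1)_\infty$ on the far left; finally pull $(A_1)_\infty$ out on the left and $(\zeta^{-1}A_1)_\infty^{-1}$ out on the right, leaving exactly the sum of the same shape over $l_2,\ldots,l_m$ with operators $A_2,\ldots,A_m$, to which the induction hypothesis applies.

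The conceptual content is light --- the $n=3$ computation worked out before \eqref{knh} is the exact template --- so the main obstacle is the combinatorial bookkeeping. One must verify carefully that (i) the operator identification $K_{l_1,\ldots,l_{n-1}}\otimes\bb^{l_1}\otimes\cdots\otimes\bb^{l_{n-1}}=A_{n-1,n-1}^{l_{n-1}}\cdots A_{1,n-1}^{l_1}$ produces no stray power of $q$, i.e.\ across all tensor slots the overlapping elementary $q$-boson operators are either of the same type, hence commuting, or already normally ordered; and (ii) the commutation $A_{i,n-1}^{l}A_{j,n-1}A_{i,n-1}^{-l}=q^{-l}A_{j,n-1}$ holds for the precise operators and exactly cancels the $q$-powers generated by splitting the $q$-Pochhammer in the telescoping step. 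The analytic points --- convergence and invertibility of the infinite products $(\zeta^{-1}A_{i,j})_\infty$, and the legitimacy of the $q$-binomial theorem for one operator variable --- are routine for $0<q<1$, since each $A_{i,j}$ raises the natural grading, containing the single creation operator $\bb_{i,j}$.
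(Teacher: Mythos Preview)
Your proposal is correct and follows exactly the approach implicit in the paper. The paper itself defers the general-$n$ proof to \cite{KO2}, but lays out the full scaffolding: the reduction \eqref{yry} to the first displayed identity, the recursion \eqref{ain} reducing the second to the evaluation of the sum $Y_n(\zeta)$, and the explicit $n=3$ computation preceding \eqref{knh} which is precisely your normal-ordering lemma for $m=2$. Your inductive lemma for general $m$, with the telescoping of the exponent via $A_iA_j=q^{-1}A_jA_i$ for $j<i$, is the expected generalization of that computation and is argued cleanly; the only cosmetic point is that writing the hypothesis as $A_i^{\,l}A_jA_i^{-l}=q^{-l}A_j$ presupposes invertibility, whereas what you actually use (and what holds) is $A_i^{\,l}A_j=q^{-l}A_jA_i^{\,l}$.
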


The cases $n= 2,3$ reproduce (\ref{knh}) under the identif\/ication ${\bf x}_{1,1}={\bf x} \otimes 1 \otimes 1$, ${\bf x}_{1,2} = 1 \otimes {\bf x} \otimes 1$, ${\bf x}_{2,2} = 1 \otimes 1 \otimes {\bf x}$ in accordance with~(\ref{hsi}). The recursive construction (\ref{zgz}) with respect to rank may be viewed as reminiscent of nested Bethe ansatz. A similar structure has also been observed in multispecies ASEP \cite{CDW, PEM}.

It is not hard to show that the substitution of the formulas in Theorem~\ref{th:nzm} and~(\ref{aim}) into the matrix product formula~(\ref{mst}) of $\mathbb{P}(\sigma_1, \ldots, \sigma_L)$ leads to the {\em convergent} trace provided that the conf\/iguration $(\sigma_1, \ldots, \sigma_L)$ belongs to a~basic sector explained in Section~\ref{sec4.1}.

\begin{Remark}\label{askg}
As a corollary of the ZF algebra (\ref{msk}) and $\mathscr{S}(\lambda, \mu)_{\gamma,\delta}^{0,0} = \theta(\gamma=\delta = 0)$ one can derive
\begin{gather*}
[Z_0(\mu), Z_0(\lambda)]=0, \qquad [V_m(\mu), V_m(\lambda)]=0,\qquad 1 \le m \le n-1.
\end{gather*}
\end{Remark}

Let us comment on the implication of the sum-to-unity property (\ref{sum2}) to the ZF algebra~(\ref{msk}). Using the weight conservation it implies
\begin{gather}\label{kand}
[A(\mu|w), A(\lambda|w)]=0\qquad \text{for} \quad A(\lambda|w) = \sum_{\alpha}X_\alpha(\lambda)w^\alpha,
\end{gather}
where $w^\alpha = w_1^{\alpha_1}\cdots w_n^{\alpha_n}$.

\begin{Example}\label{kany}
The generating function in (\ref{kand}) for $n=2$ and $w=(w_1,w_2)=(x,y)$ can be computed as
\begin{gather*}
A(\lambda|w) =\frac{(\bb)_\infty}{(\lambda^{-1}\bb)_\infty} \sum_{l,m\ge 0}\frac{x^m y^l \lambda^{-l-m}(\lambda)_{l+m}}{(q)_l(q)_m} \bk^l \bc^m\\
\hphantom{A(\lambda|w)}{} =\frac{(\bb)_\infty}{\big(\lambda^{-1}\bb\big)_\infty}\sum_{m\ge 0}\frac{(x\lambda^{-1})^{m}(\lambda)_m}{(q)_m}
\frac{(q^my\bk)_\infty}{(y\lambda^{-1}\bk)_\infty}\bc^m\\
\hphantom{A(\lambda|w)}{}
= \frac{(\bb)_\infty}{(\lambda^{-1}\bb)_\infty} \Gamma\big(x\lambda^{-1}, y\lambda^{-1}\big)^{-1}\Gamma(x,y),\qquad \Gamma(x,y) = (x \bc)_\infty(y \bk)_\infty.
\end{gather*}
It plays an important role to investigate the stationary states in `grand canonical ensemble' picture (cf.~\cite{EH}).
\end{Example}

In the matrix product formula (\ref{mst}), multiply $w^{\sigma_1+\cdots + \sigma_L}$ and take the sum over the states $(\sigma_1,\dots, \sigma_L) \in (\Z^n_{\ge 0})^L$ belonging to all the basic sectors. It yields the series
\begin{gather*}
\operatorname{Tr}(A(\mu_1|w)\cdots A(\mu_L|w))' = \sum_{\alpha_1+\cdots + \alpha_L \in \Z_{\ge 1}^n}w^{\alpha_1+\cdots + \alpha_L}
\operatorname{Tr}(X_{\alpha_1}(\mu_1) \cdots X_{\alpha_L}(\mu_L))\\
\hphantom{\operatorname{Tr}(A(\mu_1|w)\cdots A(\mu_L|w))'}{} = \sum_{k \in \Z_{\ge 1}^n} w^k G_k(\mu_1,\ldots, \mu_L; q),
\end{gather*}
where the prime drops the terms corresponding to $\alpha_1+\cdots + \alpha_L \in \Z_{\ge 0}^n\setminus \Z_{\ge 1}^n$ to take into account the basic sectors only. We have introduced the function
\begin{gather*}
G_k(\mu_1,\ldots, \mu_L; q) = \sum_{\alpha_1+\cdots + \alpha_L=k} \operatorname{Tr}(X_{\alpha_1}(\mu_1) \cdots X_{\alpha_L}(\mu_L)).
\end{gather*}
This is the normalization constant of the stationary state in the basic sector $\mathcal{B}(k)$ (\ref{hrk2}). The commutativity in (\ref{kand}) implies that $G_k(\mu_1,\ldots, \mu_L; q)$ is symmetric function in $\mu_1,\ldots, \mu_L$. Moreover from (\ref{mgd}), (\ref{aim}), Theorem \ref{th:nzm} and the expansion formula $\frac{(zw)_\infty}{(z)_\infty} = \sum\limits_{j \ge 0}\frac{(w)_j}{(q)_j}z^j$, it follows that $G_k(\mu_1,\ldots, \mu_L; q)$ is a {\em polynomial} in $\mu^{-1}_i$'s. It is rational in $q$ because the trace is evaluated as $\operatorname{Tr}(\bk^r) = (1-q^r)^{-1}$. Detailed study of $G_k(\mu_1,\ldots, \mu_L; q)$ and the related Conjecture~\ref{c:aim} is a~future problem\footnote{We expect that the condition $k \in \Z_{\ge 1}^n$ can slightly be relaxed in view of the convergence of $\operatorname{Tr}(X_{\alpha_1}(\mu_1) \cdots X_{\alpha_L}(\mu_L))$ in the non-basic sector $(k_1,k_2)$ with $k_1=0$, $k_2\ge 1$ for $n=2$. See~(\ref{Xw}).}. Similar observations in such a direction have led to matrix product formulas for Macdonald polynomials and their generalizations for some other integrable lattice models. See for example \cite{BP,CDW,CGDW,GDW,MS}.

\section[$R$ matrices of generalized quantum group]{$\boldsymbol{R}$ matrices of generalized quantum group}\label{sec6}
We have seen that the factorization (\ref{fac}) led to signif\/icant consequences in previous sections. Here we generalize it to a part of $2^{n+1}$ quantum $R$ matrices labeled with $(\ep_1,\ldots,\ep_{n+1}) \in \{0,1\}^{n+1}$. The one treated so far corresponds to the choice $(\ep_1,\ldots,\ep_{n+1}) =(0,\ldots, 0)$. These $R$ matrices have been obtained from the special solutions to the tetrahedron equation by a certain reduction~\cite{KOS}. The underlying algebra has been identif\/ied with a generalized quantum group. We shall present these results with a brief background based on~\cite{KOS}.

\subsection{Def\/inition of ${\mathcal U}_A$} \label{subsec:U_A}
In this subsection we assume that $n$ is a positive integer. Let $(\ep_1,\ldots,\ep_{n+1})$ be a sequence of $0$ or $1$. In what follows the indices $i,j$ are understood to be elements in $\Z_{n+1}$. We write $i\equiv j$ to mean $i=j$ in $\Z_{n+1}$. For $i,j=0,1,\ldots,n\in\Z_{n+1}$, set
\begin{gather*}
q_i=\begin{cases}
q, &\ep_i=0, \\
-q^{-1}, &\ep_i=1,
\end{cases}
\qquad
D_{ij}=D_{ji}=\begin{cases}
q_iq_{i+1},& j\equiv i, \\
q_i^{-1},& j\equiv i-1, \\
q_{i+1}^{-1},& j\equiv i+1, \\
1, & \text{otherwise}.
\end{cases}
\end{gather*}
Let $\U_A=\U_A(\ep_1,\ldots,\ep_{n+1})$ be a $\Q(q)$-algebra generated by $e_i$,$ f_i$, $k^{\pm 1}_i$ $(i\in \Z_{n+1})$ obeying the following relations. (We use the notation $[u]=(q^u-q^{-u})/(q-q^{-1})$.)
\begin{gather}
 k_i k^{-1} _i=k^{-1} _i k_i=1, \qquad k_ik_j=k_jk_i, \qquad k_i e_jk_i^{-1}=D_{ij}e_j, \qquad k_i f_jk^{-1}_i=D^{-1}_{ij}f_j,\label{kkd}\\
[e_i ,f_j]= \delta_{ij} \frac{k_i-k^{-1} _i}{q-q^{-1}},\label{efk}\\
e^2 _i =f^2 _i =0 \qquad\text{if} \quad \ep_i \neq \ep_{i+1}, \label{w:relA4}\\
[e_i ,e_j]=[f_i ,f_j]=0 \qquad\text{if} \quad j\not\equiv i,i\pm1, \\
e^2 _i e_j - (-1)^{\ep_i}[2]e_i e_j e_i +e_j e^2 _i=(e \rightarrow f)=0 \qquad\text{if} \quad \ep_i =\ep_{i+1},j\equiv i\pm1, \\
e_{i}e_{i-1}e_{i}e_{i+1}+(-1)^{\ep_i}[2]e_{i}e_{i-1}e_{i+1}e_{i} -e_{i}e_{i+1}e_{i}e_{i-1} \nonumber\\
\qquad {} -e_{i-1}e_{i}e_{i+1}e_{i}+e_{i+1}e_{i}e_{i-1}e_{i} =(e \rightarrow f)=0 \qquad\text{if} \quad \ep_i \neq \ep_{i+1}. \label{w:relA7}
\end{gather}

The algebra $\U_A$ with the relations (\ref{kkd}) and (\ref{efk}) was introduced for $n \ge 1$ in~\cite{KOS} as a~symmetry algebra characterizing solutions to the YBE obtained by the 2D reduction procedure to be explained in Section~\ref{sec6.2} from the tetrahedron equation \cite{Zam80}. Based on the observation in \cite[Section~3.3]{KOS}, the relations (\ref{w:relA4})--(\ref{w:relA7}) were supplemented when $n \ge 2$ \cite{Mach} by showing that the subalgebra generated by $e_i$, $f_i$, $k_i$ for $i=1,\ldots,n$ is isomorphic, up to adding simple generators, to the quantized universal enveloping super algebra of type~$A$ \cite{Yamane}, where they correspond to a~$q$-analogue of the Serre relations. The forthcoming construction~(\ref{tgm}) and all the subsequent claims
are valid for $n \ge 1$. $\U_A$ is a Hopf algebra with coproduct $\Delta$ given by
\begin{gather}
\Delta k^{\pm 1}_i = k^{\pm 1}_i\otimes k^{\pm 1}_i,\qquad \Delta e_i = 1\otimes e_i + e_i \otimes k_i,\qquad \Delta f_i = f_i\otimes 1 + k^{-1}_i\otimes f_i. \label{w:Delta}
\end{gather}
For the counit and the antipode, see \cite[equation~(3.4)]{KOS}. To present a representation of~$\mathcal{U}_{A}$, we introduce the following vector spaces:
\begin{gather}
 F=W^{(0)} = \bigoplus_{m\ge0}\C|m\rangle^{(0)}, \qquad V=W^{(1)}=\C|0\rangle^{(1)}\oplus\C|1\rangle^{(1)}, \label{fv}\\
\mathcal{W} = W^{(\ep_1)} \ot \cdots \ot W^{(\ep_{n+1})} = \bigoplus_{\alpha_1,\dots,\alpha_{n+1}}\C |\alpha_1,\dots,\alpha_{n+1}\rangle ,\label{w:quantsp}\\
 |\alpha_1,\dots,\alpha_{n+1} \rangle= |\alpha_1 \rangle^{(\epsilon_1)} \ot \cdots \ot |\alpha_{n+1}\rangle^{(\epsilon_{n+1})},\nonumber\\
V_{l}= \bigoplus_{\alpha, |\alpha|=l}\C|\alpha \rangle \subset \mathcal{W}.\label{vll}
\end{gather}
Note that the range of the indices $\alpha_i$ are to be understood as $\Z_{\ge 0}$ or $ \{ 0, 1\}$ according to $\epsilon_i = 0$ or~$1$, respectively. In (\ref{vll}) we have written $|\alpha_1,\cdots,\alpha_{n+1}\rangle $ simply as $|\alpha \rangle$.

\begin{Proposition}[\cite{KOS}]\label{w:repU_A}
Let $x$ be a parameter. The map $\pi_{x}^{l} \colon \mathcal{U}_{A} \rightarrow \operatorname{End}(V_{l})$ defined by
\begin{gather}
\pi_{x}^{l}(e_{i}) |\alpha \rangle = x^{\delta_{i,0}}[\alpha_{i}]|\alpha - {\bf e}_i + {\bf e}_{i+1} \rangle, \nonumber \\
\pi_{x}^{l}(f_{i}) |\alpha \rangle =x^{-\delta_{i,0}}[\alpha_{i+1}]|\alpha + {\bf e}_i - {\bf e}_{i+1} \rangle,\nonumber \\
\pi_{x}^{l}(k_{i})|\alpha \rangle=(q_{i})^{-\alpha_{i}}(q_{i+1})^{\alpha_{i+1}}|\alpha \rangle \label{w:rep}
\end{gather}
for $0\le i\le n$ gives an irreducible representation when $0 \le l \le n+1$ if $\epsilon_{1} = \cdots = \epsilon_{n+1}=1$, and $l \in \Z_{\ge 0}$ otherwise. The index $i$ should be understood mod $n+1$. In the right hand side of~\eqref{w:rep}, vectors $|\alpha'\rangle = |\alpha_{1}', \dots,\alpha_{n+1}' \rangle$
are to be understood as zero unless $\alpha_{i} \in \{0,1\}\, (\epsilon_i=1)$ and $\alpha_{i} \in \Z_{\ge 0}$ $(\epsilon_i=0)$ for all $1 \le i \le n+1$.
\end{Proposition}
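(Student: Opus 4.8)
The plan is to verify directly that the maps $\pi_x^l$ defined by \eqref{w:rep} respect all the defining relations \eqref{kkd}--\eqref{w:relA7} of $\U_A$, and then to establish irreducibility by an explicit cyclic-vector argument. First I would check the Cartan-type relations \eqref{kkd}: since $\pi_x^l(k_i)$ is diagonal in the basis $\{|\alpha\rangle\}$, the relations $k_ik_j=k_jk_i$ and $k_ik_i^{-1}=1$ are immediate, while $k_ie_jk_i^{-1}=D_{ij}e_j$ and $k_if_jk_i^{-1}=D_{ij}^{-1}f_j$ reduce to a bookkeeping check: applying $e_j$ shifts $\alpha_j\to\alpha_j-1$, $\alpha_{j+1}\to\alpha_{j+1}+1$, so the eigenvalue of $k_i$ changes by the factor $(q_i)^{-\alpha_i}(q_{i+1})^{\alpha_{i+1}}$ evaluated before and after; comparing cases $i\equiv j$, $i\equiv j\pm1$, and otherwise reproduces exactly the four cases in the definition of $D_{ij}$. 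The relation \eqref{efk} is the key local computation: acting with $[e_i,f_j]$ on $|\alpha\rangle$, the off-diagonal ($i\neq j$) case vanishes because the two orders of shifts land on disjoint coordinate pairs (using $[e_i,e_j]=0$ type reasoning is not needed here, just that the shifts commute and the $[\,\cdot\,]$ prefactors match), and the diagonal case $i=j$ gives $x^{\delta_{i,0}-\delta_{i,0}}\big([\alpha_{i+1}+1][\alpha_i]-[\alpha_i+1][\alpha_{i+1}]\big)$ applied after the appropriate shift; using $[a+1][b]-[b+1][a]=[a-b]$ and identifying $[\alpha_i-\alpha_{i+1}]$ with $(k_i-k_i^{-1})/(q-q^{-1})$ via $\pi_x^l(k_i)=(q_i)^{-\alpha_i}(q_{i+1})^{\alpha_{i+1}}$ — being careful about the sign/base switch $q_i=-q^{-1}$ when $\ep_i=1$ — closes this.

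Next I would handle the nilpotency and Serre-type relations \eqref{w:relA4}--\eqref{w:relA7}. The relation $e_i^2=f_i^2=0$ when $\ep_i\neq\ep_{i+1}$ follows because in that case one of $W^{(\ep_i)}$, $W^{(\ep_{i+1})}$ is the two-dimensional space $V$, so after one application of $e_i$ the relevant coordinate is saturated and $[\alpha_i]$ or $[\alpha_{i+1}]$ vanishes on the next application (indeed $[1]=1$ but the target vector is set to zero by the boundary convention stated in the Proposition). The length-3 and length-4 Serre relations are then a finite computation on the at most three coordinates $\alpha_{i-1},\alpha_i,\alpha_{i+1}$ involved; I would either invoke the cited isomorphism \cite{Mach,Yamane} of the subalgebra generated by $e_i,f_i,k_i$ $(1\le i\le n)$ with $U_q(\mathfrak{sl}(m|m'))$ — for which \eqref{w:rep} is, up to the $x$-grading on $i=0$ and a relabeling, the natural $q$-oscillator/vector Fock representation — or verify them head-on using the $q$-number identities for $[2]$, distinguishing $\ep_i=\ep_{i+1}$ from $\ep_i\neq\ep_{i+1}$. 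The affine generator $e_0,f_0$ only adds the spectral parameter $x^{\pm1}$, which does not interfere since $x$ is central; the relations involving index $0$ are the same local computations shifted cyclically.

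Finally, irreducibility: let $U\subseteq V_l$ be a nonzero submodule. Since the $k_i$ act by distinct scalars on distinct basis vectors $|\alpha\rangle$ with $|\alpha|=l$ — this is where the rank condition matters: the weights $((q_i)^{-\alpha_i}(q_{i+1})^{\alpha_{i+1}})_i$ separate the $|\alpha\rangle$ in $V_l$ exactly in the stated range ($0\le l\le n+1$ in the all-$\ep_i=1$ case where $V_l$ is a subspace of $(\C^2)^{\otimes n+1}$, and $l\in\Z_{\ge0}$ otherwise) — any submodule is spanned by a subset of the $|\alpha\rangle$'s. Then starting from any $|\alpha\rangle\in U$ one reaches every other $|\beta\rangle$ with $|\beta|=l$ by applying a suitable word in the $e_i,f_i$: each $e_i$ (resp. $f_i$) moves one unit of "charge" from site $i$ to site $i+1$ (resp. back), and since the sites form a cycle $\Z_{n+1}$ one can redistribute $(\alpha_i)$ into any $(\beta_i)$ with the same total $l$, with all intermediate $[\,\cdot\,]$ prefactors nonzero because $q$ is generic. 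Hence $U=V_l$. I expect the main obstacle to be not any single step but the careful sign/base bookkeeping in \eqref{efk} and the Serre relations when $q_i=-q^{-1}$: the substitution $q\mapsto -q^{-1}$ turns ordinary $q$-numbers into signed ones, and keeping track of which $[u]$'s are genuine $[u]=(q^u-q^{-u})/(q-q^{-1})$ versus their $\ep=1$ analogues — together with the interplay with the $(-1)^{\ep_i}$ factors in \eqref{w:relA4}--\eqref{w:relA7} — is where an honest verification needs the most care. It is cleanest to isolate the $\ep_i=0$ (Fock) factors, the $\ep_i=1$ (spin-$\tfrac12$) factors, and the mixed bonds, and treat each type once.
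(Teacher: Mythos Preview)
The paper does not give its own proof of this proposition; it is simply quoted with attribution to~\cite{KOS}. So there is nothing in the present text to compare against. Your plan --- verify the relations \eqref{kkd}--\eqref{w:relA7} on basis vectors, then establish irreducibility via weight separation by the diagonal $k_i$-action and connectivity under the $e_i,f_i$ --- is the natural direct approach and is sound.

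One point in your sketch deserves sharpening. For the commutator \eqref{efk} you write the diagonal contribution as $[\alpha_{i+1}+1][\alpha_i]-[\alpha_i+1][\alpha_{i+1}]$ and reduce to the identity $[a+1][b]-[b+1][a]=[a-b]$. That is only literally valid when neither shifted vector $|\alpha\pm({\bf e}_i-{\bf e}_{i+1})\rangle$ is annihilated by the boundary convention in the Proposition. In the fermionic/mixed cases one of the two terms is typically killed by that convention, and it is precisely this truncation --- not the $q$-number identity alone --- that reconciles the result with $(k_i-k_i^{-1})/(q-q^{-1})$ when $q_i=-q^{-1}$. For instance with $\ep_i=1$, $\ep_{i+1}=0$ and $\alpha_i=1$, the vector $|\alpha+{\bf e}_i-{\bf e}_{i+1}\rangle$ has $i$-th component $2$ and so vanishes; hence $e_if_i|\alpha\rangle=0$ and $[e_i,f_i]|\alpha\rangle=-[\alpha_{i+1}+1]\,|\alpha\rangle=(-1)^{\alpha_i}[\alpha_i+\alpha_{i+1}]\,|\alpha\rangle$, matching the $k_i$-eigenvalue. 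You do flag the sign bookkeeping as the delicate part, but the role of the boundary truncation should be made explicit in the write-up. With that adjustment the argument goes through.
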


Consider the following linear equation on $R \in \operatorname{End}(V_l \otimes V_m)$:
\begin{gather}\label{wsy}
\big(\pi^l_x \otimes \pi^m_y\big)\Delta^{\rm op}(g) R = R\big(\pi^l_x \otimes \pi^m_y\big)\Delta(g)\qquad \forall\, g \in \mathcal{U}_{A},
\end{gather}
where $\Delta^{\rm op}(g) = P \circ \Delta \circ P$ with $P(u \otimes v ) = v \otimes u$. The dimension of the solution space to this equation is at most one
if the $\mathcal{U}_{A}$-module $V_l \otimes V_m$ is irreducible.

\begin{Conjecture}[\cite{KOS}] \label{yssi}
The $\mathcal{U}_A$-module $V_{l} \ot V_{m}$ is irreducible for any choice $(\ep_1,\ldots,\ep_{n+1}) \in \{0,1\}^{n+1}$.
\end{Conjecture}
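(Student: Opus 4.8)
The plan is to prove the sharper statement that, for all but finitely many values of the ratio $x/y$, the $\U_A$-module on $V_l\ot V_m$ defined by $\pi^l_x\ot\pi^m_y$ and the coproduct~(\ref{w:Delta}) is irreducible; this is exactly what forces the solution space of~(\ref{wsy}) to be one-dimensional and makes the $R$ matrix well defined, and Conjecture~\ref{yssi} is the case of generic $x,y$. Since $|\alpha|=l$ bounds every component $\alpha_i$, each $V_l$ is finite-dimensional, so the argument stays within finite-dimensional representation theory, and by~(\ref{w:rep}) each $V_l$ carries a distinguished extremal vector $|l{\bf e}_{n+1}\rangle$, the unique vector annihilated by $e_1,\ldots,e_n$. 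The reduction I would use is the standard one for tensor products of ``good'' modules over quantum affine algebras: it suffices to prove a single \emph{cyclicity lemma}, namely that for generic relative spectral parameter a tensor product of two such modules (here $V_l$, $V_m$, or one of them replaced by the appropriate dual module) is generated over $\U_A$ by the tensor of the two extremal vectors.

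Granting the cyclicity lemma, the conclusion follows by the familiar cyclic-plus-cocyclic argument. The $R$ matrix produced by the tetrahedron-equation construction of~\cite{KOS} — given by~(\ref{rel}) when $(\ep_1,\ldots,\ep_{n+1})=(0,\ldots,0)$ — is a nonzero solution of~(\ref{wsy}), hence a nonzero $\U_A$-module map $\check R\colon V_l(x)\ot V_m(y)\to V_m(y)\ot V_l(x)$; by the normalization~(\ref{yuk}) it sends the corner vector $|l{\bf e}_i\rangle\ot|m{\bf e}_i\rangle$ to $|m{\bf e}_i\rangle\ot|l{\bf e}_i\rangle$, and it satisfies the inversion relation. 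Cyclicity of $V_m(y)\ot V_l(x)$ makes $\operatorname{Im}\check R$ the whole module, so $\check R$ is surjective and, being between spaces of equal finite dimension, an isomorphism; cyclicity of $V_l(x)\ot V_m(y)$ gives it a unique simple head $L$; and applying the same cyclicity lemma to the duals of $V_l$, $V_m$ shows that $V_l(x)\ot V_m(y)$ is also cocyclic, with unique simple socle necessarily again $L$. A proper nonzero submodule would then contribute a second composition factor $L$ (one in the socle, one read off the head), contradicting $[V_l\ot V_m:L]=1$; hence $V_l(x)\ot V_m(y)=L$ is simple.

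The heart of the matter is thus the cyclicity lemma, and that is where I expect the main obstacle. For $(\ep_1,\ldots,\ep_{n+1})=(0,\ldots,0)$ it is the known statement that tensor products of symmetric-power evaluation modules of $U_q\big(A^{(1)}_n\big)$ (and of their duals) are cyclic on the tensor of extremal vectors for generic evaluation parameters, proved with the usual technology of extremal vectors, Lusztig's braid symmetries and fusion. For general $(\ep_i)$ this technology has to be rebuilt over $\U_A$, whose ``Weyl group'' is only a Weyl groupoid with odd reflections at the positions where $\ep_i\neq\ep_{i+1}$, so there is no genuine braid group action; one would either push the finite-dimensional representation theory of quantum affine superalgebras — \cite{Yamane}, with the Serre relations supplemented as in~\cite{Mach} — far enough to run the extremal-vector argument, or establish cyclicity by brute force, exhibiting for each basis vector $|\alpha\rangle\ot|\beta\rangle$ of $V_l\ot V_m$ a monomial in the $f_i$ whose action on $|l{\bf e}_{n+1}\rangle\ot|m{\bf e}_{n+1}\rangle$ has a component along it with coefficient a nonzero rational function of $x/y$ (elementary from~(\ref{w:rep}), but combinatorially forbidding). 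A naive induction on $\#\{i\colon\ep_i=1\}$ does not help, because $V_l$ itself changes dimension as the $\ep_i$ vary. Constructing the requisite extremal-vector and translation machinery over $\U_A$ is, I expect, the decisive difficulty.
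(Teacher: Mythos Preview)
The statement you are attempting to prove is labelled \emph{Conjecture} in the paper and is not proved there in full generality; the paper only records, as Theorem~\ref{th:true}, the special case $(\ep_1,\ldots,\ep_{n+1})=(1^\kappa,0^{n+1-\kappa})$, citing \cite[Proposition~6.11]{KOS}. So there is no ``paper's own proof'' of Conjecture~\ref{yssi} to compare against, and your proposal should be read as an attempt to settle an open problem rather than to reproduce a known argument.

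As a proof, your proposal has a genuine gap, and you identify it yourself: the entire argument is conditional on a cyclicity lemma for $V_l(x)\ot V_m(y)$ (and for the corresponding duals) over $\U_A(\ep_1,\ldots,\ep_{n+1})$ at generic $x/y$, and you do not prove this lemma. The reduction you sketch---nonzero intertwiner from the tetrahedron construction (Proposition~\ref{pr:kos51}), cyclic $\Rightarrow$ unique simple head, cocyclic $\Rightarrow$ unique simple socle, then a multiplicity-one count---is the standard Chari--Kashiwara mechanism for quantum affine algebras, and it is a reasonable strategy. But for general $(\ep_i)$ you would need, beyond cyclicity, two further ingredients you only assert: that the head and the socle are isomorphic, and that this common simple occurs with multiplicity one in $V_l\ot V_m$. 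In the ordinary $U_q\big(A^{(1)}_n\big)$ setting these follow from highest-weight/character considerations, but over $\U_A$ with mixed $(\ep_i)$ that structure theory (Weyl groupoid, odd reflections, classification of finite-dimensional simples) is precisely what is missing, as you note. A minor point: your appeal to the normalization~(\ref{yuk}) presupposes some $\ep_i=0$; in the all-fermionic case $(\ep_1,\ldots,\ep_{n+1})=(1,\ldots,1)$ one must use~(\ref{yuj}) instead, and $l,m$ are bounded by $n+1$.

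In short, what you have written is a coherent research plan rather than a proof: the decisive step---cyclicity of $V_l\ot V_m$ and of its dual over $\U_A$ for arbitrary $(\ep_i)$---remains open, which is exactly why the paper states the result as a conjecture.
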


\begin{Theorem}[\cite{KOS}]\label{th:true}
Conjecture {\rm \ref{yssi}} is true for $(\ep_1,\ldots,\ep_{n+1}) =(1^\kappa,0^{n+1-\kappa})$ with $0 \le \kappa \le n+1$.
\end{Theorem}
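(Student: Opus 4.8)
The plan is to show that $V_l\otimes V_m$ is simple over $\Q(q,x,y)$ by classifying the vectors killed by the ``positive'' generators and combining this with cyclicity. What makes the sequence $(\ep_1,\dots,\ep_{n+1})=(1^\kappa,0^{n+1-\kappa})$ accessible is that the nodes with $\ep_i=1$ form a contiguous arc: the subalgebra $\U_A^{\mathrm{fin}}:=\langle e_i,f_i,k_i^{\pm1}\mid 1\le i\le n\rangle$ is, up to the twist $q_i=-q^{-1}$, the quantized enveloping algebra of $\mathrm{gl}(\kappa\,|\,n+1-\kappa)$ in its distinguished simple-root realization, and only the affine node $0$ (besides node $\kappa$) sits on a parity wall. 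First I would fix the content grading: writing $\gamma=\alpha+\beta\in\Z_{\ge0}^{n+1}$ for the content of $|\alpha\rangle\otimes|\beta\rangle$, one has $V_l\otimes V_m=\bigoplus_\gamma(V_l\otimes V_m)_\gamma$ with $|\gamma|=l+m$ and $\gamma_i\le 2$ for $i\le\kappa$, so only finitely many $\gamma$ occur and each graded piece is finite-dimensional. By \eqref{w:rep} and \eqref{w:Delta}, for $1\le i\le n$ the operators $f_i$ raise $\gamma$ by ${\bf e}_i-{\bf e}_{i+1}$ and $e_0$ raises it by ${\bf e}_1-{\bf e}_{n+1}=\sum_{i=1}^{n}({\bf e}_i-{\bf e}_{i+1})$, whereas $e_1,\dots,e_n,f_0$ lower $\gamma$ by the corresponding vectors. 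Hence, in the partial order for which the ${\bf e}_i-{\bf e}_{i+1}$ are positive, the content $(l+m){\bf e}_1$ is the unique maximal one, its graded piece is $\C u$ with $u:=|l{\bf e}_1\rangle\otimes|m{\bf e}_1\rangle$, and $u$ is annihilated by $f_1,\dots,f_n$ and by $e_0$.

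Two preliminary facts would be needed. \textbf{(i)} \emph{Multiplicity-freeness}: $V_l$ is irreducible over $\U_A^{\mathrm{fin}}$, and $V_l\otimes V_m$ is multiplicity-free over $\U_A^{\mathrm{fin}}$; equivalently, the space of vectors annihilated by $f_1,\dots,f_n$ has a basis of content-homogeneous vectors $v^{(c)}$ with pairwise distinct contents, one for each $\U_A^{\mathrm{fin}}$-constituent $c$. For $\kappa=0$ and $\kappa=n+1$ this is the classical statement that $S^l\otimes S^m$, respectively $\Lambda^l\otimes\Lambda^m$, is multiplicity-free; for $0<\kappa<n+1$ it is the super Pieri rule for the covariant (``$l$-th supersymmetric power'') modules $V_l$, and in any case it can be verified directly from \eqref{w:rep}. \textbf{(ii)} \emph{Cyclicity}: $V_l\otimes V_m=\U_A\cdot u$, and symmetrically $V_l\otimes V_m=\U_A\cdot\big(|l{\bf e}_{n+1}\rangle\otimes|m{\bf e}_{n+1}\rangle\big)$. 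Here $\U_A^{\mathrm{fin}}\cdot u$ is the single constituent containing $u$, and one shows that $e_0$ and $f_0$, which transport a particle across the wrap-around edge, connect this constituent to every other one; iterating with $\U_A^{\mathrm{fin}}$, and using $x,y\neq 0$ together with generic $q$ so that no transition coefficient vanishes, one reaches all basis vectors.

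Granting (i) and (ii), simplicity follows. Let $N\neq 0$ be a submodule; it is graded by content, and only finitely many contents occur, so $N$ has a content-maximal grade $\gamma^\ast$. Any $0\neq v\in N_{\gamma^\ast}$ is annihilated by $f_1,\dots,f_n$ and by $e_0$, since their images would lie in strictly larger content grades of $N$ and hence vanish (or vanish for range reasons). By (i), $v$ is therefore a scalar multiple of a single $v^{(c_0)}$. The remaining point — which I expect to be the main obstacle — is to show that among all $v^{(c)}$ only $u$ is annihilated by $e_0$, i.e.\ that $e_0 v^{(c)}\neq 0$ whenever the constituent $c$ is not the one containing $u$. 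This is a direct but delicate computation: one needs the explicit extreme-weight vectors $v^{(c)}$ in the distinguished realization (furnished by the super Pieri rule) together with $\Delta e_0=1\otimes e_0+e_0\otimes k_0$ and the action \eqref{w:rep}, and one must check that the resulting expression in $\Q(q,x,y)$ is not forced to vanish by the range constraints unless the content of $v^{(c)}$ equals $(l+m){\bf e}_1$. Once this is established, $v^{(c_0)}=u$, so $u\in N$, and (ii) yields $N=V_l\otimes V_m$; as $N$ was an arbitrary nonzero submodule, $V_l\otimes V_m$ is irreducible. Alongside the verification of fact (i) in the genuinely super case $0<\kappa<n+1$, where $V_l$ is a Kac-type module rather than a symmetric or exterior power, this is the technical heart; the rest of the argument is formal.
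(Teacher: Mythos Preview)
The paper does not actually prove this theorem; it merely quotes it from \cite{KOS} (see the proof of Theorem~\ref{th:specialized R}, where the authors write ``by \cite[Proposition~6.11]{KOS}. This fact has been quoted as Theorem~\ref{th:true} in this paper.''). So there is no in-paper argument to compare yours against, and your sketch would have to stand on its own.

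There is, however, a concrete error in your setup that breaks the argument for every $\kappa\ge 1$. You declare the maximal content to be $(l+m){\bf e}_1$ with associated vector $u=|l{\bf e}_1\rangle\otimes|m{\bf e}_1\rangle$, but for $(\ep_1,\ldots,\ep_{n+1})=(1^\kappa,0^{n+1-\kappa})$ with $\kappa\ge 1$ the first slot is fermionic, so $|l{\bf e}_1\rangle$ does not even lie in $V_l$ once $l\ge 2$ (and likewise for $|m{\bf e}_1\rangle$). Hence neither the content $(l+m){\bf e}_1$ nor the vector $u$ is available, and the statement that the maximal graded piece is one-dimensional and spanned by $u$ is false as written. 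The genuinely maximal content now depends on $l$, $m$ and $\kappa$ in a nontrivial way (fermionic slots saturate at $2$), and its graded piece need not be one-dimensional without further argument. Your ``symmetric'' cyclic vector $|l{\bf e}_{n+1}\rangle\otimes|m{\bf e}_{n+1}\rangle$ fares better because slot $n{+}1$ is bosonic for $\kappa\le n$, so the whole highest/lowest weight analysis should be run from that end instead (lowering with $e_1,\ldots,e_n,f_0$ and identifying the unique $e_1,\ldots,e_n,f_0$-singular vector); but for $\kappa=n{+}1$ even this fails and one must use an extremal vector of the type in~\eqref{yuj}. In short, the strategy (singular-vector analysis plus multiplicity-freeness plus cyclicity) is reasonable and probably close in spirit to the argument in \cite{KOS}, but your choice of extremal vector is wrong outside the purely bosonic case $\kappa=0$, and fixing it requires exactly the case analysis you have suppressed.
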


Suppose $(\ep_1,\ldots,\ep_{n+1}) =(1^\kappa,0^{n+1-\kappa})$ with $0 \le \kappa \le n+1$. Then a little inspection on the representation~(\ref{w:rep}) tells that the solution to (\ref{wsy}) depends on $x$, $y$ as $R=R(z)$ with $z=x/y$. It will be denoted by $R(z)=R^{l,m}(z)$ and referred to as {\em quantum $R$ matrix} up to overall normalization. To f\/ix the normalization, introduce
\begin{gather}\label{hrm}
{\bf e}_i=(\overset{i-1}{\overbrace{0,\ldots,0}}, 1,\overset{n+1-i}{\overbrace{0,\ldots,0}}) \in \Z^{n+1},
\qquad {\bf e}_{>m}={\bf e}_{m+1}+\cdots+{\bf e}_{n+1}, \quad 1 \le m \le n+1.
\end{gather}
If $\ep_1 = \cdots = \ep_{n+1} =1$, we normalize $R^{l,m}(z)$ $(0 \le l,m \le n+1)$ as
\begin{gather}\label{yuj}
R^{l,m}(z)(|{\bf e}_{>n+1-l} \rangle \ot |{\bf e}_{>n+1-m} \rangle ) = |{\bf e}_{>n+1-l} \rangle \ot |{\bf e}_{>n+1-m} \rangle.
\end{gather}
If $\ep_1 \cdots \ep_{n+1} =0$, pick any $i$ such that $\ep_{i}=0$ and normalize it as
\begin{gather}\label{yuk}
R^{l,m}(z) (| l{\bf e}_{i} \rangle \ot | m{\bf e}_{i} \rangle ) = | l{\bf e}_{i} \rangle \ot | m{\bf e}_{i} \rangle.
\end{gather}
In view of Conjecture \ref{yssi}, we expect that the quantum $R$ matrix is characterized similarly for any $(\ep_1,\ldots,\ep_{n+1}) \in \{0,1\}^{n+1}$.

\subsection[Construction of $R$ matrix]{Construction of $\boldsymbol{R}$ matrix}\label{sec6.2}
We brief\/ly review how we constructed in \cite{KOS} solutions depending on $(\ep_1,\ldots,\ep_{n+1}) \in \{0,1\}^{n+1}$ to the YBE from the tetrahedron equation. Def\/ine the 3D $R$ operator $\Rm \in \operatorname{End}\big(W^{(0)}\otimes W^{(0)}\otimes W^{(0)}\big)$ by
\begin{gather*}
\Rm \big(|i\rangle ^{(0)}\otimes|j\rangle ^{(0)}\otimes|k\rangle ^{(0)}\big) = \sum_{a,b,c \in \Z_{\ge 0}} \Rm^{a,b,c}_{i,j,k}
|a\rangle ^{(0)}\otimes|b\rangle ^{(0)}\otimes|c\rangle ^{(0)},
\end{gather*}
where $\Rm^{a,b,c}_{i,j,k}$ is the one in (\ref{Rint}). Similarly def\/ine the 3D $L$ operator $\mathscr{L} \in \operatorname{End}\big(W^{(1)} \otimes W^{(1)}\otimes W^{(0)}\big)$~\cite{BS} by
\begin{gather*}
 \mathscr{L} (|i\rangle ^{(1)}\otimes|j\rangle ^{(1)}\otimes|k\rangle ^{(0)}) = \sum_{a,b \in \{0,1\}, \,c \in \Z_{\ge 0}} \mathscr{L} ^{a,b,c}_{i,j,k}
|a\rangle ^{(1)}\otimes|b\rangle ^{(1)}\otimes|c\rangle ^{(0)},\\
{\mathscr L}^{0, 0, j}_{0, 0, m} ={\mathscr L}^{1, 1, j}_{1, 1, m}=\delta^j_m,\qquad {\mathscr L}^{0, 1, j}_{0, 1, m}= -\delta^j_mq^{m+1},\qquad
{\mathscr L}^{1, 0, j}_{1, 0, m}=\delta^j_mq^m,\\
{\mathscr L}^{0, 1, j}_{1, 0, m}=\delta^j_{m-1}\big(1-q^{2m}\big),\qquad {\mathscr L}^{1, 0, j}_{0, 1, m}=\delta^j_{m+1}.
\end{gather*}
It may be viewed as a six-vertex model with $q$-boson valued Boltzmann weights in the third component.

Assign a solid arrow to $F$ and a dotted arrow to $V$, and depict the matrix elements of 3D $R$ and 3D $L$ as
\begin{gather*}
\includegraphics{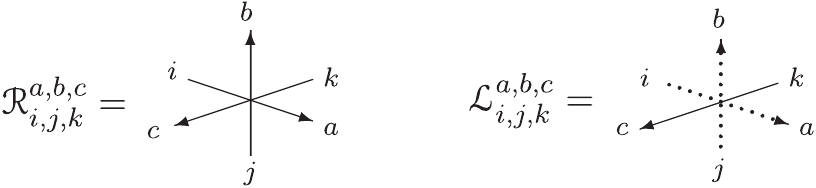}
%\begin{picture}(200,61)(-150,-40)
%\put(-122,-20){$\Rm^{a,b,c}_{i,j,k}=$}
%\put(-30,-10){\vector(-3,-1){40}}
%\put(-78,-27){$\scriptstyle{c}$}
%\put(-48,-16){\vector(0,1){20}}\put(-48,-16){\line(0,-1){16}}
%\put(-48,-16){\vector(3,-1){18}} \put(-48,-16){\line(-3,1){18}}
%\put(-51,7){$\scriptstyle{b}$}
%\put(-72,-10){$\scriptstyle{i}$}
%\put(-27,-26){$\scriptstyle{a}$}
%\put(-50,-39){$\scriptstyle{j}$}
%\put(-27,-12){$\scriptstyle{k}$}
%\put(130,-12){
%\put(-88,-7){${\mathscr L}^{a,b,c}_{i,j,k}=$}
%\put(3,1){\vector(-3,-1){40}}
%\multiput(-15,-18)(0,3){10}{.}\put(-13.5,12.6){\vector(0,1){1}}
%\multiput(-30,0)(3,-1){11}{.}\put(5,-10.6){\vector(3,-1){1}}
%\put(-44,-14){$\scriptstyle{c}$}
%\put(-37,0){$\scriptstyle{i}$}
%\put(-16,17){$\scriptstyle{b}$}
%\put(-16,-26){$\scriptstyle{j}$}
%\put(9,-14){$\scriptstyle{a}$}
%\put(6,0){$\scriptstyle{k}$}
%}
%\end{picture}
\end{gather*}

To treat $\Rm$ and $\mathscr{L}$ on an equal footing we set $\mathscr{M}^{(0)}=\Rm$ and $\mathscr{M}^{(1)}=\mathscr{L}$ so that $\mathscr{M}^{(\ep)} \in
\operatorname{End} \big(W^{(\ep)} \ot W^{(\ep)} \ot F\big)$. They satisfy the following type of tetrahedron equation
\cite{BS,KV,KOS}\footnote{$\mathscr{M}^{(0)}$, $\mathscr{M}^{(1)}$ were denoted by $\mathscr{S}^{(0)}$, $\mathscr{S}^{(1)}$ in \cite{KOS}.}.
\begin{gather}
\mathscr{M}^{(\ep)}_{1,2,4}\mathscr{M}_{1,3,5}^{(\ep)} \mathscr{M}_{2,3,6}^{(\ep)}\mathscr{R}_{4,5,6} =\mathscr{R}_{4,5,6}\mathscr{M}_{2,3,6}^{(\ep)}
\mathscr{M}_{1,3,5}^{(\ep)}\mathscr{M}_{1,2,4}^{(\ep)}.\label{w:tetra}
\end{gather}
This is an equality in $\operatorname{End}\big(W^{(\ep)} \ot W^{(\ep)} \ot W^{(\ep)} \ot F \ot F \ot F\big)$. Subscripts of $\mathscr{M}^{(\ep)}_{i,j,k}$
or $\mathscr{R}_{i,j,k}$ signify that they act on the $i,j$ and $k$-th components of $W^{(\ep)} \ot W^{(\ep)} \ot W^{(\ep)} \ot F \ot F \ot F$, and
do as the identity on the other. The tetrahedron equation~(\ref{w:tetra}) with $\epsilon=1$ is depicted as follows:
\begin{gather*}
\includegraphics{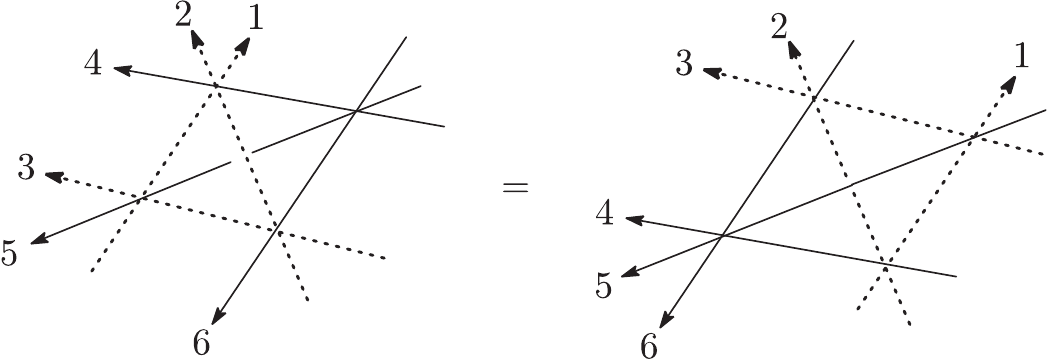}
\end{gather*}

The equation (\ref{w:tetra}) with $\epsilon=0$ is expressed similarly by replacing all the dotted arrows by solid ones.

Regarding \eqref{w:tetra} as a one-layer relation, we extend it to the $(n+1)$-layer version. Let $\overset{a_i}{W^{(\ep_i)}}$, $\overset{b_i}{W^{(\ep_i)}}$, $\overset{c_i}{W^{(\ep_i)}}$ be copies of $W^{(\ep_i)}$, where~$a_i$,~$b_i$ and~$c_i$ $(i=1,\dots,n+1)$ are just distinct labels. Repeated use of \eqref{w:tetra} $(n+1)$ times leads to
\begin{gather}
\big(\mathscr{M}^{(\ep_1)}_{a_1,b_1,4} \mathscr{M}_{a_1,c_1,5}^{(\ep_1)} \mathscr{M}_{b_1,c_1,6}^{(\ep_1)}\big) \cdots
\big(\mathscr{M}^{(\ep_{n+1})}_{a_{n+1},b_{n+1},4} \mathscr{M}_{a_{n+1},c_{n+1},5}^{(\ep_{n+1})} \mathscr{M}_{b_{n+1},c_{n+1},6}^{(\ep_{n+1})}\big)
\mathscr{R}_{4,5,6} \nonumber \\
\qquad{} =\mathscr{R}_{4,5,6} \big(\mathscr{M}_{b_1,c_1,6}^{(\ep_1)} \mathscr{M}_{a_1,c_1,5}^{(\ep_1)}\mathscr{M}^{(\ep_1)}_{a_1,b_1,4}\big)
\cdots \big(\mathscr{M}_{b_{n+1},c_{n+1},6}^{(\ep_{n+1})} \mathscr{M}_{a_{n+1},c_{n+1},5}^{(\ep_{n+1})} \mathscr{M}^{(\ep_{n+1})}_{a_{n+1},b_{n+1},4}\big). \label{w:n-layer}
\end{gather}
This is an equality in $\operatorname{End} \big(\overset{a}{\mathcal{W}} \ot \overset{b}{\mathcal{W}} \ot \overset{c}{\mathcal{W}} \ot \overset{4}{F} \ot \overset{5}{F} \ot \overset{6}{F}\big)$, where $a=(a_1,\dots,a_{n+1})$ is the array of labels and $\overset{a}{\mathcal{W}} =\overset{a_1}{W^{(\ep_1)}} \ot \cdots \ot \overset{a_{n+1}}{W^{(\ep_{n+1})}}$. The notation $\overset{b}{\mathcal{W}}$ and $\overset{c}{\mathcal{W}}$ should be understood similarly. They are just copies of $\mathcal{W}$ def\/ined in \eqref{w:quantsp}. One can reduce~\eqref{w:n-layer} to the YBE by evaluating the auxiliary space $\overset{4}{F} \ot \overset{5}{F} \ot \overset{6}{F}$ away appropriately. A natural way is to take trace of~\eqref{w:n-layer} over the auxiliary space after multiplying it with
$x^{\rm{\bf h}_4}(xy)^{\rm{\bf h}_5}y^{\rm{\bf h}_6}$ from the left and $\mathscr{R}_{4,5,6}^{-1}$ from the right\footnote{See around \cite[equation~(2.5)]{KOS} for the def\/inition of ${\bf h}_i$.}. It results in the YBE
\begin{gather}\label{yyb}
R_{a,b}(x)R_{a,c}(xy)R_{b,c}(y)=R_{b,c}(y)R_{a,c}(xy)R_{a,b}(x) \in \operatorname{End}\big(\overset{a}{\mathcal{W}} \ot \overset{b}{\mathcal{W}} \ot \overset{c}{\mathcal{W}}\big)
\end{gather}
for the $R$ matrix obtained as
\begin{gather}\label{tgm}
R_{a,b}(z) =\rho(z)\operatorname{Tr}_{3}\bigl(z^{\rm{\bf h}_3} \mathscr{M}^{(\ep_1)}_{a_1,b_1,3} \cdots \mathscr{M}^{(\ep_{n+1})}_{a_{n+1},b_{n+1},3}\bigr)
\in \operatorname{End}\big(\overset{a}{\mathcal{W}} \ot \overset{b}{\mathcal{W}}\big),
\end{gather}
where the scalar $\rho(z)$ is inserted to control the normalization. The trace are taken with respect to the auxiliary Fock space $F=\overset{3}{F}$ signif\/ied by~$3$. Pictorially the matrix element~(\ref{w:matrixele}) of~(\ref{tgm}) is expressed as follows:
\begin{gather*}
\includegraphics{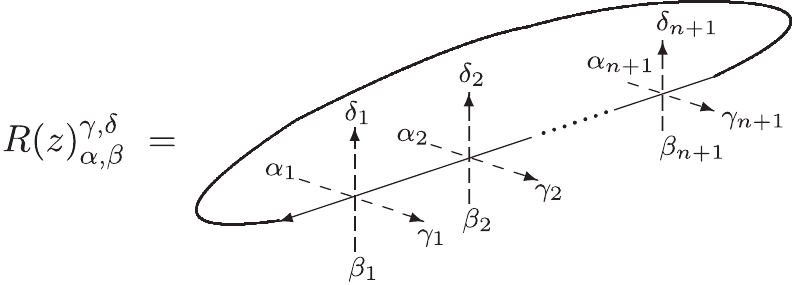}
%\begin{picture}(200,85)(-250,-40)
%%%%%%%%%%%%% S^{tr} %%%%%%%%%%%%%%%%%%%
%\put(-80,0){
%\put(3,1){\vector(-3,-1){73}}
%\put(-150,-3){$R(z)^{\gamma,\delta}_{\alpha,\beta}\;=$}
%\qbezier(-70,-23)(-120,-30)(-65,6)
%\qbezier(-65,6)(-20,29)(2,33)
%\qbezier(2,33)(27,40)(50,40.3)
%\qbezier(50,40.3)(103,41)(56,18.7)
%%
%\multiput(-48,-32)(0,6){6}{\put(0,0){\line(0,1){4}}}
%\multiput(-31,-22)(-6,2){6}{\put(0,0){\line(-3,1){3}}}
%\put(-51,6){$\scriptstyle{\delta_1}$}\put(-48,3){\vector(0,1){1}}
%\put(-74,-10){$\scriptstyle{\alpha_1}$}
%\put(-30,-29){$\scriptstyle{\gamma_1}$}\put(-29,-23){\vector(3,-1){1}}
%\put(-50,-39){$\scriptstyle{\beta_1}$}
%%
%\multiput(-15,-18)(0,6){5}{\put(0,0){\line(0,1){4}}}
%\multiput(1,-10)(-6,2){5}{\put(0,0){\line(-3,1){3}}}
%\put(-36,0){$\scriptstyle{\alpha_2}$}
%\put(-18,17){$\scriptstyle{\delta_2}$}\put(-15,13){\vector(0,1){1}}
%\put(-17,-25){$\scriptstyle{\beta_2}$}
%\put(4,-16){$\scriptstyle{\gamma_2}$}\put(4,-11){\vector(3,-1){1}}
%%
%\multiput(5.1,1.7)(3,1){7}{.}
%\put(6,2){
%\put(21,7){\line(3,1){30}}
%\multiput(35,1)(0,6){4}{\put(0,0){\line(0,1){4}}}
%\multiput(46,8)(-6,2){4}{\put(0,0){\line(-3,1){3}}}
%\put(13,19){$\scriptstyle{\alpha_{n+1}}$}
%\put(52,3){$\scriptstyle{\gamma_{n+1}}$}\put(49,7){\vector(3,-1){1}}
%\put(33,30){$\scriptstyle{\delta_{n+1}}$}\put(35,26){\vector(0,1){1}}
%\put(34,-6){$\scriptstyle{\beta_{n+1}}$}
%%\put(16,-1){$\scriptstyle{c_{n}}$}
%}
%}
%\end{picture}
\end{gather*}
Here the broken arrows designate either $F$ or $V$ (\ref{fv}) according to $\epsilon_i = 0$ or $1$, and the winding arrow does $\overset{3}{F}$ over which the trace is taken. In short (\ref{tgm}) is a matrix product construction of quantum $R$ matrix $R(z)$ by operators satisfying the tetrahedron equation. The formula (\ref{rel}) is just the concrete form of~(\ref{tgm}) for $(\epsilon_1,\ldots, \epsilon_{n+1}) = (0,\ldots, 0)$.

It is not known if this type of construction extends much beyond the generalized quantum group $\mathcal{U}_A$. See \cite[Section~2.8]{KOS} for the list of the known results. However, the formula (\ref{tgm}) is often more ef\/f\/icient than the fusion procedure practically. It also reveals a hidden 3D structure in a class of~$R$ matrices~\cite{BS} and has led to another application to the multispecies totally asymmetric simple exclusion and zero range processes
when $\forall\, \epsilon_i=1$ and $\forall\, \epsilon_i=0$ \cite{KMO0,KMO2}. Except for the two cases however, these $R$ matrices do {\em not} satisfy the sum-to-unity in general\footnote{This is partly because \cite[Lemma~5]{KMMO} becomes trivial for $s \ge 3$ invalidating the argument similar to the proof of \cite[Theorem~6]{KMMO}.} and we have not found an application to stochastic systems.

Def\/ine the matrix elements of $R(z)$ by
\begin{gather}
R(z)(|\alpha\rangle \ot |\beta \rangle ) = \sum_{\gamma,\delta}R(z)_{\alpha,\beta}^{\gamma,\delta}|\gamma \rangle \ot |\delta \rangle,\label{w:matrixele}
\end{gather}
where $|\alpha\rangle, \ldots, |\delta\rangle \in \mathcal{W}$ (\ref{w:quantsp}).
It satisf\/ies
\begin{gather*}
R(z)_{\alpha,\beta}^{\gamma,\delta} = 0 \qquad {\rm unless} \quad \gamma+\delta =\alpha+\beta \quad {\rm and} \quad |\gamma|=|\alpha|, \quad |\delta|=|\beta|,
\end{gather*}
which implies that $R(z)$ decomposes into matrices acting on f\/inite-dimensional vector spaces:
\begin{gather*}
R(z)=\bigoplus_{l,m \ge 0} R^{l,m}(z),\qquad R^{l,m}(z)\in \operatorname{End}(V_{l}\ot V_m),
\end{gather*}
where the former sum ranges over $0 \le l,m \le n+1$ if $\epsilon_1 = \cdots = \epsilon_{n+1} =1$ and $l,m \in \Z_{\ge 0}$ otherwise\footnote{The notation $V_l$ matches (\ref{BV}) when $\forall\, \epsilon_i=0$. It was denoted by~$\mathcal{W}_l$ in~\cite{KOS} although.}. The normalization (\ref{yuj}) and (\ref{yuk}) is achieved by choosing $\rho(z)=(-q)^{-\max (m-l,0)}(1-q^{|l-m|}z)$ and $\rho(z) =\frac{z^{-m}(q^{l-m}z;q^{2})_{m+1}}{(q^{l-m+2}z^{-1};q^{2})_{m}}$ in~(\ref{tgm}), respectively \cite[Section~2.6]{KOS}. When $l=m$, we have
\begin{gather*}
R^{m,m}(1)(|\alpha \rangle \ot |\beta \rangle) =|\beta \rangle \ot |\alpha \rangle.
\end{gather*}

\begin{Proposition}[\cite{KOS}]\label{pr:kos51}
For arbitrary sequence $(\epsilon_1,\ldots, \epsilon_{n+1}) \in \{0,1\}^{n+1}$, the matrix $R^{l,m}(z=x/y)$ satisfies~\eqref{wsy}.
\end{Proposition}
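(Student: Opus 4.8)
The plan is to check the intertwining relation \eqref{wsy} on the algebra generators $k_i^{\pm1},e_i,f_i$ $(i\in\Z_{n+1})$ of $\U_A$, exploiting the layered matrix-product form \eqref{tgm} of $R^{l,m}(z)$ together with the weight-conservation property stated after \eqref{w:matrixele}. The $k_i$ case is immediate: by \eqref{w:rep}, $\pi^l_x(k_i)$ is diagonal on $|\alpha\rangle$ with eigenvalue $(q_i)^{-\alpha_i}(q_{i+1})^{\alpha_{i+1}}$, and since $k_i$ is group-like with $\Delta(k_i)=\Delta^{\rm op}(k_i)=k_i\ot k_i$, the operator $(\pi^l_x\ot\pi^m_y)\Delta(k_i)$ multiplies $|\alpha\rangle\ot|\beta\rangle$ by a scalar depending only on the total occupation numbers $(\alpha+\beta)_i,(\alpha+\beta)_{i+1}$; these are preserved by $R^{l,m}(z)$, so $R^{l,m}(z)$ commutes with $(\pi^l_x\ot\pi^m_y)\Delta(k_i)$, i.e.\ \eqref{wsy} holds for $g=k_i^{\pm1}$.

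The substance is the case $g=e_i$ (the case $g=f_i$ being entirely analogous). What makes the layered construction relevant is that, through \eqref{w:rep}, $e_i$ acts only on the $i$-th and $(i{+}1)$-th tensor factors $W^{(\ep_i)}\ot W^{(\ep_{i+1})}$ of $\mathcal{W}$ (with the cyclic convention $0\equiv n+1$) — exactly the factors on which the two consecutive 3D operators $\mathscr{M}^{(\ep_i)}_{a_i,b_i,3}$ and $\mathscr{M}^{(\ep_{i+1})}_{a_{i+1},b_{i+1},3}$ in \eqref{tgm} act, sharing the auxiliary Fock space $\overset{3}{F}$. The key local lemma to establish is that this pair of 3D operators is a local intertwiner: one verifies directly from the explicit entries of the 3D $R$ ($\mathscr{M}^{(0)}=\Rm$, via \eqref{Rint}) and the 3D $L$ ($\mathscr{M}^{(1)}=\mathscr{L}$) that conjugating the $a$-copy action of $e_i$ by $\mathscr{M}^{(\ep_i)}_{a_i,b_i,3}\mathscr{M}^{(\ep_{i+1})}_{a_{i+1},b_{i+1},3}$ turns it into the $b$-copy action of $e_i$ together with the diagonal $k_i$-factor carried by the local Boltzmann weights, modulo a term supported on the auxiliary space that disappears under $\operatorname{Tr}_3$. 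Case-splitting on the values of $(\ep_i,\ep_{i+1})$, the outcome reproduces exactly $\Delta(e_i)=1\ot e_i+e_i\ot k_i$ on one side and $\Delta^{\rm op}(e_i)=e_i\ot1+k_i\ot e_i$ on the other.

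To globalize, note that the remaining operators $\mathscr{M}^{(\ep_j)}_{a_j,b_j,3}$ with $j\ne i,i{+}1$ act on layers disjoint — on both the $a$- and the $b$-copy — from those moved by $e_i$, hence commute with the $e_i$- and $k_i$-actions (disposing of the $k_i$-factor in $\Delta(e_i)$ by the same weight bookkeeping as above), and they commute with the twist $z^{\mathbf{h}_3}$, which is a function of the auxiliary number operator. For $i\ne0$ the relevant pair $\mathscr{M}^{(\ep_i)}_{a_i,b_i,3}\mathscr{M}^{(\ep_{i+1})}_{a_{i+1},b_{i+1},3}$ sits adjacently inside the product in \eqref{tgm}; for $i=0$ it sits at the two ends, and one first uses cyclicity of $\operatorname{Tr}_3$ together with $z^{\mathbf{h}_3}$ to bring the two factors together — this is precisely where the $x$- and $y$-dependences carried by $\pi^l_x(e_0)$ and $\pi^m_y(e_0)$ recombine into the correct power of $z=x/y$. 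Applying the local lemma, taking $\operatorname{Tr}_3$ (which kills the auxiliary error term by cyclicity), and cancelling the scalar $\rho(z)$ on the two sides, one obtains $R^{l,m}(z)(\pi^l_x\ot\pi^m_y)\Delta(e_i)=(\pi^l_x\ot\pi^m_y)\Delta^{\rm op}(e_i)R^{l,m}(z)$, and likewise for $f_i$.

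The main obstacle I anticipate is the local lemma of the second paragraph: establishing, separately for the few $\ep$-patterns of adjacent layers and with the extra care needed at the wrap-around index $i=0$ where the spectral parameters enter, that conjugation of $e_i$ by $\mathscr{M}^{(\ep_i)}_{a_i,b_i,3}\mathscr{M}^{(\ep_{i+1})}_{a_{i+1},b_{i+1},3}$ yields exactly the two coproducts modulo an auxiliary-space term harmless under the twisted trace $\operatorname{Tr}_3\big(z^{\mathbf{h}_3}(\,\cdot\,)\big)$. This is where the detailed intertwining structure of the 3D operators $\Rm$ and $\mathscr{L}$ enters, rather than merely the tetrahedron equation \eqref{w:tetra}, which by itself only yields the YBE \eqref{yyb}; the rest — the $k_i$ case, the telescoping over layers, and deducing $f_i$ from $e_i$ — is routine bookkeeping.
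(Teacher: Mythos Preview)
The paper does not supply its own proof of this proposition: it is stated with the attribution \cite{KOS} and used as a black box, so there is no in-paper argument to compare your proposal against.

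That said, your plan matches the strategy actually carried out in \cite{KOS}. There the intertwining \eqref{wsy} is checked on generators, the $k_i$ case is disposed of by weight conservation exactly as you describe, and for $e_i,f_i$ the argument reduces via the matrix-product structure \eqref{tgm} to a local relation between the pair $\mathscr{M}^{(\ep_i)}_{a_i,b_i,3}\mathscr{M}^{(\ep_{i+1})}_{a_{i+1},b_{i+1},3}$ and the Chevalley generators acting on the $i$-th and $(i{+}1)$-th layers, with the $i=0$ wrap-around handled through cyclicity of the trace and the twist $z^{\mathbf{h}_3}$. The ``local lemma'' you single out as the main obstacle is exactly the substantive computation in \cite{KOS}: it is established there by writing the action of $e_i,f_i,k_i$ in terms of $q$-boson operators on the auxiliary Fock space and checking, case by case in $(\ep_i,\ep_{i+1})\in\{0,1\}^2$, the requisite commutation relations of $\mathscr{M}^{(\ep)}$ with these operators. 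Your description of the auxiliary-space error term as ``harmless under the twisted trace'' is slightly imprecise --- in \cite{KOS} the local identities are exact operator relations rather than equalities modulo trace-vanishing terms --- but the overall architecture of your argument is correct and coincides with the cited source.
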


From Theorem \ref{th:true} and Proposition \ref{pr:kos51} it follows that $R^{l,m}(z)$ associated with the sequence $(\ep_1,\ldots,\ep_{n+1}) =(1^\kappa,0^{n+1-\kappa})$ with $0 \le \kappa \le n+1$ is indeed the quantum $R$ matrix of $\mathcal{U}_A$.

\begin{Example}
Consider $\mathcal{U}_{A}(1,0)$. For $l,m \ge 1$, one has $V_{m} = \C|0,m\rangle \oplus \C|1,m-1 \rangle \subset \mathcal{W} = V \ot F$ and similarly for $V_{l}$. The action of $R(z)$ on $V_{l} \ot V_{m}$ is given by
\begin{gather*}
R(z)(|0,l \rangle \ot |0,m \rangle) = |0,l \rangle \ot |0,m \rangle, \\
R(z)(|1,l-1 \rangle \ot |0,m \rangle) =\frac{1-q^{2m}}{z-q^{l+m}}|0,l \rangle \ot |1,m-1 \rangle+ \frac{q^{m}z-q^{l}}{z-q^{l+m}}|1,l-1 \rangle \ot |0,m \rangle, \\
R(z)(|0,l \rangle \ot |1,m-1 \rangle)=\frac{q^{l}z-q^{m}}{z-q^{l+m}}|0,l \rangle \ot |1,m-1 \rangle + \frac{(1-q^{2l})z}{z-q^{l+m}}
|1,l-1 \rangle \ot |0,m \rangle, \\
R(z)(|1,l-1 \rangle \ot |1,m-1 \rangle) =\frac{1-q^{l+m}z}{z-q^{l+m}}|1,l-1 \rangle \ot |1,m-1 \rangle.
\end{gather*}
\end{Example}

\subsection[Special value of $R$ matrix]{Special value of $\boldsymbol{R}$ matrix}
In this subsection we wish to obtain an explicit form of $R(z)=R^{l,m}(z)$ at $z=q^{l-m}$. We are going to show the following theorem.
\begin{Theorem} \label{th:specialized R}
When $l\le m$ and $(\ep_1,\ldots,\ep_{n+1}) =(1^\kappa,0^{n+1-\kappa})$ with $0 \le \kappa \le n+1$, the following formula is valid:
\begin{gather}
R\big(z=q^{l-m}\big)_{\alpha,\beta}^{\gamma,\delta} =\delta_{\alpha+\beta}^{\gamma+\delta}q^{\psi+l(l-m)\theta(\kappa=n+1)}
\binom{m}{l}_{q^{2}}^{\theta(\kappa=n+1)-1} \prod_{i=1}^{n+1}\binom{\beta_i}{\gamma_i}_{q^{2}}, \label{w:defQ} \\
\psi=\psi_{\alpha,\beta}^{\gamma,\delta} =\sum_{1\le i < j \le n+1}\alpha_{i}(\beta_{j} - \gamma_{j})
+ \sum_{1\le i < j \le n+1}(\beta_{i} - \gamma_{i})\gamma_{j}. \nonumber
\end{gather}
\end{Theorem}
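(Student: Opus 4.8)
The plan is to exploit the matrix product formula (\ref{tgm}) for $R^{l,m}(z)$ together with the tetrahedron-equation structure, specializing the auxiliary spectral variable to $z=q^{l-m}$. First I would recall that for $(\ep_1,\ldots,\ep_{n+1})=(1^\kappa,0^{n+1-\kappa})$ the operator in (\ref{tgm}) is the product $\mathscr{L}_{a_1,b_1,3}\cdots\mathscr{L}_{a_\kappa,b_\kappa,3}\,\Rm_{a_{\kappa+1},b_{\kappa+1},3}\cdots\Rm_{a_{n+1},b_{n+1},3}$ conjugated by $z^{\mathbf{h}_3}$ and traced over the auxiliary Fock space $\overset{3}{F}$, with the normalization scalar $\rho(z)$ fixed as stated after Proposition~\ref{pr:kos51}. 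The key structural input is the known factorization (\ref{fac}) in the pure $\Rm$ case $\kappa=0$: there the trace over $\overset{3}{F}$ collapses at $z=q^{l-m}$ into the single function $\Phi_{q^2}(\bar\gamma|\bar\beta;q^{-2l},q^{-2m})$, which already carries the product $\prod_{i=1}^n\binom{\beta_i}{\gamma_i}_{q^2}$ and the $q$-power prefactor. I would therefore try to reduce the general $\kappa$ case to this one by understanding precisely what each extra $\mathscr{L}$-layer contributes.

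The central observation to nail down is that each $\mathscr{L}$ layer, being the six-vertex-type operator with entries $\mathscr{L}^{0,0,j}_{0,0,m}=\mathscr{L}^{1,1,j}_{1,1,m}=\delta^j_m$, $\mathscr{L}^{0,1,j}_{0,1,m}=-\delta^j_m q^{m+1}$, $\mathscr{L}^{1,0,j}_{1,0,m}=\delta^j_m q^m$, $\mathscr{L}^{0,1,j}_{1,0,m}=\delta^j_{m-1}(1-q^{2m})$, $\mathscr{L}^{1,0,j}_{0,1,m}=\delta^j_{m+1}$, acts on a binary slot $\alpha_i\in\{0,1\}$; the transitions of the auxiliary Fock index $j\mapsto m$ are $0,\pm1$, exactly the same local increments as the 3D $R$, so the bookkeeping of the auxiliary index along the winding arrow proceeds uniformly. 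I would then compute the weight picked up by a binary layer by cases. When $\gamma_i=\beta_i$ (no binary particle moves through layer $i$), the layer contributes either $1$ or a simple monomial $-q^{\#+1}$ or $q^{\#}$ in the auxiliary index — and since $\binom{\beta_i}{\gamma_i}_{q^2}=1$ when $\beta_i=\gamma_i\in\{0,1\}$, these monomials must be absorbed into the overall $q^{\psi}$ power. When $\gamma_i\neq\beta_i$, i.e. $(\gamma_i,\beta_i)=(0,1)$, the layer contributes the factor $(1-q^{2m})$ which is exactly $(q^2;q^2)_1=\binom{1}{0}_{q^2}$-type and again consistent with $\binom{\beta_i}{\gamma_i}_{q^2}$ for a binary slot. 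The upshot I expect: the general formula is the $\kappa=0$ answer (\ref{fac}) with $q^2$-binomials for every slot $i=1,\dots,n+1$ (binary ones being trivial $0/1$ cases), times a corrected $q$-power $q^\psi$ whose two sums $\sum_{i<j}\alpha_i(\beta_j-\gamma_j)$ and $\sum_{i<j}(\beta_i-\gamma_i)\gamma_j$ come from reordering the $\bk$-type weights $q_i^{-\alpha_i}q_{i+1}^{\alpha_{i+1}}$ through the string of layers, plus the $\theta(\kappa=n+1)$ terms accounting for the different normalization $\rho(z)$ in the all-$\ep_i=1$ case.

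Concretely, the steps in order: (1) write out $R(q^{l-m})^{\gamma,\delta}_{\alpha,\beta}$ as the traced layered product using (\ref{tgm}), inserting $z^{\mathbf{h}_3}$; (2) use the $\delta$-factors in $\Rm^{a,b,c}_{i,j,k}$ and the explicit $\mathscr{L}$ entries to force $\gamma+\delta=\alpha+\beta$ and to reduce the auxiliary trace to a single geometric-type sum in the Fock index, exactly as in the proof of (\ref{fac}); (3) evaluate that sum at $z=q^{l-m}$ — this is where the cancellation with $\rho(z)$ happens and where $\binom{m}{l}_{q^2}^{-1}$ (resp. $\binom{m}{l}_{q^2}^{0}$ for $\kappa=n+1$) emerges, via $\sum_j \binom{m}{j}_{q^2}$-type identities / the $q$-binomial theorem $\frac{(zw)_\infty}{(z)_\infty}=\sum_j \frac{(w)_j}{(q)_j}z^j$ used elsewhere in the paper; (4) collect all $q$-powers: the $q^{ik+b}$ factors from $\Rm$, the $q^m,-q^{m+1}$ factors from $\mathscr{L}$, and the $q_i^{\mp\alpha_i}$ from the $\mathbf h$-grading, and verify their total equals $\psi_{\alpha,\beta}^{\gamma,\delta}+l(l-m)\theta(\kappa=n+1)$ by a telescoping rearrangement of the double sums; (5) check boundary/degenerate cases ($\kappa=0$ recovers (\ref{fac}) after $q^2\to q$ relabeling conventions; $\kappa=n+1$ against Proposition~\ref{pr:kos51} and the $\mathcal{U}_A(1,\dots,1)$ $R$-matrix, e.g.\ the $\mathcal U_A(1,1,0)$ data in Appendix~\ref{sec.app}); small cases like $\mathcal U_A(1,0)$ from the Example can be verified directly.

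The main obstacle will be step (4): tracking the $q$-exponent through the mixed product of $\Rm$'s and $\mathscr{L}$'s and showing it reorganizes into the clean bilinear form $\psi=\sum_{i<j}\alpha_i(\beta_j-\gamma_j)+\sum_{i<j}(\beta_i-\gamma_i)\gamma_j$. In the pure $\Rm$ case this is handled by the $\varphi$-function bookkeeping in (\ref{Pdef}); with binary layers inserted, the analogous identity must account for the asymmetric $q^m$ vs $-q^{m+1}$ weights of $\mathscr{L}$ and for the fact that a binary slot caps $\alpha_i,\gamma_i$ at $1$, so the combinatorial identity underlying (\ref{fac}) has to be re-derived in this hybrid setting rather than quoted. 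A secondary subtlety is the sign: the $-q^{m+1}$ entries of $\mathscr{L}$ can a priori produce minus signs, and one must verify they always appear an even number of times (or combine with the $(-q)^{-\max(m-l,0)}$ in $\rho$) so that the final coefficient is the manifestly positive expression in (\ref{w:defQ}); I expect this to follow from the weight conservation $\gamma+\delta=\alpha+\beta$ restricted to binary slots, but it needs an explicit parity count.
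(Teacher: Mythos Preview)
Your proposal is a genuinely different route from the paper's proof. The paper does \emph{not} compute the trace~(\ref{tgm}) directly at $z=q^{l-m}$. Instead it argues by uniqueness: it names the right-hand side of~(\ref{w:defQ}) as a candidate operator $X$, and then verifies (Proposition~\ref{w:proposition2}) that $X$ satisfies the intertwining relation $(\pi^l_x\otimes\pi^m_y)\Delta^{\mathrm{op}}(g)\,X = X\,(\pi^l_x\otimes\pi^m_y)\Delta(g)$ for every generator $g=k_i,e_i,f_i$ of $\mathcal{U}_A$. This check is a short case analysis on $(\epsilon_i,\epsilon_{i+1})\in\{0,1\}^2$, using only the elementary shift identities for $\psi$ recorded in Lemma~\ref{w:lempsi}. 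Since $V_l\otimes V_m$ is irreducible for $(1^\kappa,0^{n+1-\kappa})$ (Theorem~\ref{th:true}), the intertwiner is unique up to a scalar, so $R(q^{l-m})=X$ once the normalization is matched.

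What each approach buys: the paper's argument replaces all of your steps (2)--(4) --- the Fock-space trace collapse, the $q$-binomial summation, the $q$-exponent telescoping, and the sign parity count --- by a handful of polynomial identities in the indices $\alpha_i,\beta_i,\gamma_i,\delta_i$, each checked in one or two lines. It also proves the intertwining relation for \emph{arbitrary} $(\epsilon_1,\dots,\epsilon_{n+1})\in\{0,1\}^{n+1}$, so the only reason the theorem is stated for $(1^\kappa,0^{n+1-\kappa})$ is the irreducibility input; the combinatorics of the formula itself is insensitive to the ordering of the $\epsilon_i$. Your direct-trace route, by contrast, would (if completed) bypass the irreducibility hypothesis altogether and prove~(\ref{w:defQ}) unconditionally from the tetrahedron-equation construction --- a stronger statement --- but at the price of the heavy bookkeeping you flag in step~(4), and your plan currently relies on mimicking ``the proof of~(\ref{fac})'', which the present paper does not reproduce but only cites from~\cite{KMMO}.
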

Set $\hat{i}= {\bf e}_i-{\bf e}_{i+1}$ for $i \in \Z_{n+1}$. It is straightforward to check
\begin{Lemma}\label{w:lempsi}
\begin{gather*}
\psi_{\alpha,\beta}^{\gamma-\hat{i},\delta} -\psi_{\alpha,\beta}^{\gamma,\delta-\hat{i}} = \gamma_{i+1} -\alpha_{i}+\beta_{i} -\gamma_{i} + 1 + (l-m)\delta_{i,0}, \\
\psi_{\alpha+\hat{i},\beta}^{\gamma,\delta}-\psi_{\alpha,\beta}^{\gamma,\delta-\hat{i}}=\beta_{i+1}-\gamma_{i+1}+ (l-m)\delta_{i,0},\qquad
\psi_{\alpha,\beta+\hat{i}}^{\gamma,\delta}-\psi_{\alpha,\beta}^{\gamma,\delta-\hat{i}}=\gamma_{i+1} - \alpha_{i} .
\end{gather*}
\end{Lemma}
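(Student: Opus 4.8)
The plan is to verify all three identities by a direct finite-difference computation, exploiting two structural features of $\psi$. The first is that $\psi_{\alpha,\beta}^{\gamma,\delta}$, as defined in Theorem~\ref{th:specialized R}, does not in fact involve $\delta$: it is a function of $\alpha$, $\beta$, $\gamma$ alone. Hence the common term $\psi_{\alpha,\beta}^{\gamma,\delta-\hat{i}}$ appearing in every identity equals $\psi_{\alpha,\beta}^{\gamma,\delta}$, and each left-hand side collapses to the change in $\psi$ under a \emph{single} elementary shift: of $\gamma$ by $-\hat{i}$ in the first identity, of $\alpha$ by $+\hat{i}$ in the second, and of $\beta$ by $+\hat{i}$ in the third. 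The inert slot $\delta$ is retained only to match the four-index bookkeeping of the matrix elements $R_{\alpha,\beta}^{\gamma,\delta}$ in the intended application. Throughout I would take the arrays to satisfy the weight constraints $|\alpha|=|\gamma|=l$ and $|\beta|=|\delta|=m$ inherited from $V_l\otimes V_m$; these are what inject $l$ and $m$ into the right-hand sides.

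First I would split $\psi=A+B$ with $A=\sum_{i<j}\alpha_i(\beta_j-\gamma_j)$ and $B=\sum_{i<j}(\beta_i-\gamma_i)\gamma_j$ and compute the response of $A$ and $B$ to each shift separately. Writing the shift vector as $d$ (two nonzero entries $\pm 1$), the change of each term reduces, after telescoping, to half-open partial sums of $\alpha$, $\gamma$, or $\beta-\gamma$, together with — only in the first identity — a single quadratic correction $-\sum_{p<j}d_pd_j=+1$ arising from the unique product $(\beta_p-\gamma_p)\gamma_j$ in $B$ whose two factors are shifted simultaneously; this correction is the source of the ``$+1$'' on that right-hand side. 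In the second identity only $A$ responds, since $\alpha$ occurs there alone, making it the quickest; in the first and third both $A$ and $B$ contribute, as $\gamma$ and $\beta$ each enter both sums.

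The computation then divides according to the cyclic convention $\hat{i}={\bf e}_i-{\bf e}_{i+1}$, $i\in\Z_{n+1}$. For $1\le i\le n$ the two shifted components $i$ and $i+1$ are adjacent interior indices, the telescoping of partial sums is immediate, and one reads off the right-hand sides with $\delta_{i,0}=0$ without using any constraint. The delicate case is $i=0$, where $\hat{0}={\bf e}_{n+1}-{\bf e}_1$ straddles the two endpoints of the linear order $1<\cdots<n+1$: here the truncated partial sums become full sums minus an endpoint term, and invoking $|\alpha|=|\gamma|=l$ and $|\beta|=m$ converts these full sums into $l$'s and $m$'s. This is exactly the mechanism that produces the extra $(l-m)\delta_{i,0}$ in the first two identities; in the third identity one checks that the two endpoint contributions cancel, leaving no $l-m$ term, in agreement with the stated formula.

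I expect the only genuine obstacle to be the careful bookkeeping in the $i=0$ case: one must track precisely which partial sums run over $\{1,\dots,n\}$ versus $\{2,\dots,n+1\}$ and apply the weight constraints in the correct places so that the stray $l$'s collapse to a single $l-m$. Everything else is routine telescoping, justifying the ``straightforward to check'' remark.
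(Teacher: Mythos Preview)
Your proposal is correct and is precisely the direct computation the paper has in mind; the paper gives no argument beyond ``It is straightforward to check'', and your observation that $\psi$ is independent of $\delta$, together with the case split $1\le i\le n$ versus $i=0$ using $|\alpha|=|\gamma|=l$, $|\beta|=m$, is exactly the routine verification intended.
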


\begin{Proposition}\label{w:proposition2}
Denote the r.h.s.\ of \eqref{w:defQ} by $X^{\gamma,\delta}_{\alpha,\beta}$ and set $X(|\alpha \rangle \ot |\beta \rangle ) =\sum\limits_{\gamma,\delta}X^{\gamma,\delta}_{\alpha,\beta} |\gamma \rangle\ot |\delta \rangle$. Suppose $l\le m$. Then for any $x$, $y$ such that $x/y=q^{l-m}$ and $(\ep_1,\ldots,\ep_{n+1}) \in \{0,1\}^{n+1}$, we have
\begin{gather}\label{w:comrel}
\big(\pi^l_x \otimes \pi^m_y\big)\Delta^{\rm op}(g) X = X\big(\pi^l_x \otimes \pi^m_y\big)\Delta(g)
\qquad \forall\, g \in \mathcal{U}_{A}(\epsilon_{1},\dots,\epsilon_{n+1}).
\end{gather}
\end{Proposition}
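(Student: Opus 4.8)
The plan is to verify (\ref{w:comrel}) on a generating set of $\mathcal{U}_A$ and then to reduce the remaining content to a finite $q$-binomial identity controlled by Lemma~\ref{w:lempsi}. First note that the set of $g\in\mathcal{U}_A$ satisfying (\ref{w:comrel}) is a linear subspace containing $1$ and closed under multiplication: since $\Delta$ and $\Delta^{\rm op}=P\circ\Delta$ are both algebra homomorphisms, $(\pi^l_x\ot\pi^m_y)\Delta^{\rm op}(gg')\,X=(\pi^l_x\ot\pi^m_y)\Delta^{\rm op}(g)\bigl[(\pi^l_x\ot\pi^m_y)\Delta^{\rm op}(g')\,X\bigr]=\cdots=X\,(\pi^l_x\ot\pi^m_y)\Delta(gg')$ whenever $g,g'$ work. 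Hence it suffices to treat the generators $k_i^{\pm1},e_i,f_i$ $(i\in\Z_{n+1})$. For $g=k_i^{\pm1}$ this is immediate: by (\ref{w:rep}) the operators $\pi^l_x(k_i),\pi^m_y(k_i)$ are diagonal, so $(\pi^l_x\ot\pi^m_y)\Delta(k_i)=(\pi^l_x\ot\pi^m_y)\Delta^{\rm op}(k_i)$ multiplies $|\alpha\rangle\ot|\beta\rangle$ by the scalar $(q_i)^{-(\alpha_i+\beta_i)}(q_{i+1})^{\alpha_{i+1}+\beta_{i+1}}$, which depends on $\alpha,\beta$ only through $\alpha+\beta$; since $X^{\gamma,\delta}_{\alpha,\beta}=0$ unless $\gamma+\delta=\alpha+\beta$, this diagonal operator commutes with $X$.

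Next I would handle $g=e_i$ (the case $g=f_i$ being entirely parallel). Using the coproduct (\ref{w:Delta}) — so $\Delta^{\rm op}(e_i)=e_i\ot 1+k_i\ot e_i$ — together with the action (\ref{w:rep}), I expand both sides of (\ref{w:comrel}) on $|\alpha\rangle\ot|\beta\rangle$ and read off the coefficient of $|\gamma\rangle\ot|\delta\rangle$. With $\hat{i}={\bf e}_i-{\bf e}_{i+1}$ as in Lemma~\ref{w:lempsi}, one obtains, in the only weight sector in which both sides can be nonzero (namely $\alpha+\beta=\gamma+\delta+\hat{i}$), the scalar identity
\begin{gather*}
x^{\delta_{i,0}}[\gamma_i+1]\,X^{\gamma+\hat{i},\delta}_{\alpha,\beta}+(q_i)^{-\gamma_i}(q_{i+1})^{\gamma_{i+1}}y^{\delta_{i,0}}[\delta_i+1]\,X^{\gamma,\delta+\hat{i}}_{\alpha,\beta}\\
\qquad{}=y^{\delta_{i,0}}[\beta_i]\,X^{\gamma,\delta}_{\alpha,\beta-\hat{i}}+x^{\delta_{i,0}}[\alpha_i]\,(q_i)^{-\beta_i}(q_{i+1})^{\beta_{i+1}}\,X^{\gamma,\delta}_{\alpha-\hat{i},\beta}.
\end{gather*}
Each of the four matrix elements here lies in the fixed block $\operatorname{End}(V_l\ot V_m)$, so the prefactor $q^{l(l-m)\theta(\kappa=n+1)}\binom{m}{l}_{q^2}^{\theta(\kappa=n+1)-1}$ in (\ref{w:defQ}) is the same for all four and cancels, as do the binomial factors $\binom{\beta_k}{\gamma_k}_{q^2}$ with $k\ne i,i+1$; when an entry is forced out of range (e.g.\ $\alpha_i\in\{0,1\}$ for $\ep_i=1$) the relevant $q$-integer or $q^2$-binomial vanishes and the identity degenerates consistently on both sides.

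It then remains to establish this four-term identity. The plan is to use the three formulas of Lemma~\ref{w:lempsi} (applied after translating their arguments by $\hat{i}$) to rewrite $\psi^{\gamma+\hat{i},\delta}_{\alpha,\beta}$, $\psi^{\gamma,\delta}_{\alpha,\beta-\hat{i}}$ and $\psi^{\gamma,\delta}_{\alpha-\hat{i},\beta}$ in terms of $\psi^{\gamma,\delta+\hat{i}}_{\alpha,\beta}$; after pulling out the common power $q^{\psi^{\gamma,\delta+\hat{i}}_{\alpha,\beta}}$ the identity becomes a relation among monomials in $q$, the $q$-integers $[\,\cdot\,]$, and the two binomials $\binom{\beta_i}{\gamma_i}_{q^2}$, $\binom{\beta_{i+1}}{\gamma_{i+1}}_{q^2}$. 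Using the bridge $[N]=q^{1-N}(1-q^{2N})/(1-q^2)$ together with the elementary relations $\binom{N}{K}_{q^2}=\frac{1-q^{2N}}{1-q^{2(N-K)}}\binom{N-1}{K}_{q^2}$ and $\binom{N}{K+1}_{q^2}=\frac{1-q^{2(N-K)}}{1-q^{2(K+1)}}\binom{N}{K}_{q^2}$, one clears the binomials and is left with a linear identity in $q$-integers and powers of $q$ in the entries $\alpha_i,\beta_i,\gamma_i,\beta_{i+1},\gamma_{i+1}$ (with $\delta_i=\alpha_i+\beta_i-\gamma_i-1$), checked directly in the four cases $(\ep_i,\ep_{i+1})\in\{0,1\}^2$, which only alter the values of $q_i,q_{i+1}$ and the index ranges. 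For $i\ne0$ the parameters $x,y$ drop out and Lemma~\ref{w:lempsi} produces no $(l-m)\delta_{i,0}$ terms, so the check is $x$-, $y$-free; for $i=0$ the factors $x^{\pm1},y^{\pm1}$ appear exactly paired with the $(l-m)\delta_{i,0}$ contributions of Lemma~\ref{w:lempsi}, and the hypothesis $x/y=q^{l-m}$ is precisely what makes them combine. The verification for $f_i$ follows the same route, again with Lemma~\ref{w:lempsi}.

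The first two reductions are conceptual; the real labour — and the main obstacle — is the case-by-case bookkeeping of the last step: reconciling the $q\leftrightarrow q^2$ conventions, tracking the degenerate configurations where a $q$-integer or a $q^2$-binomial vanishes, and correctly placing the $x,y$ factors in the $i=0$ case. None of these is deep, but each choice of $(\ep_i,\ep_{i+1})$ must be carried through with care.
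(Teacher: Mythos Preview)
Your proposal is correct and follows essentially the same route as the paper: reduce to the generators $k_i,e_i,f_i$, dispose of $k_i$ by weight conservation, expand the $e_i$ (or $f_i$) relation into a four-term identity of matrix elements, invoke Lemma~\ref{w:lempsi} to align the $\psi$-exponents, and then check the resulting $q$-binomial identity in each of the four cases $(\ep_i,\ep_{i+1})\in\{0,1\}^2$. The paper carries out one case, $(\ep_i,\ep_{i+1})=(0,1)$ for $g=f_i$, explicitly (enumerating the finitely many admissible $(\alpha_{i+1},\beta_{i+1},\gamma_{i+1},\delta_{i+1})$), whereas you describe the general reduction for $g=e_i$; the logic is the same.
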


\begin{proof}
It suf\/f\/ices to show it for $g=k_i, e_i, f_i$ in the four cases $(\epsilon_{i},\epsilon_{i+1})=(1,1),(1,0),(0,1)$, $(0,0)$. The case $(\epsilon_{i},\epsilon_{i+1})=(0,0)$ was done in~\cite{KMMO}. Here we treat $(\epsilon_{i},\epsilon_{i+1})=(0,1)$ case. The proof for $(\epsilon_{i},\epsilon_{i+1})=(1,1),(1,0)$ are similar. The relation \eqref{w:comrel} with $g=k_i$ means the weight conservation and it holds due to the factor $\delta_{\alpha + \beta}^{\gamma + \delta}$. In the sequel we show \eqref{w:comrel} for $g=f_i$. The case $g=e_i$ is similar hence omitted. Let the both sides of \eqref{w:comrel} act on $|\alpha\rangle \ot |\beta \rangle \in V_{l}\otimes V_m$ and compare the coef\/f\/icients of $| \gamma\rangle \ot |\delta \rangle \in V_{l}\otimes V_{m}$ in the output vector. Using \eqref{w:Delta}, \eqref{w:rep} and~\eqref{w:matrixele} we f\/ind that the relation to be proved is
\begin{gather*}
 X_{\alpha,\beta}^{\gamma,\delta - \hat{i}} [\delta_{i+1}+1]\theta(\delta_{i+1}=0) +X_{\alpha,\beta}^{\gamma-\hat{i},\delta} q^{\delta_{i}
+\delta_{i+1}}(-1)^{-\delta_{i+1}} z^{-\delta_{i,0}} [\gamma_{i+1}+1]\theta(\gamma_{i+1}=0) \\
\qquad{} = X_{\alpha +\hat{i},\beta}^{\gamma,\delta}
[\alpha_{i+1}]\theta(\alpha_{i+1}=1)z^{-\delta_{i,0}}
+ X_{\alpha,\beta+\hat{i}}^{\gamma,\delta}[\beta_{i+1}]
\theta(\beta_{i+1}=1)q^{\alpha_{i}+\alpha_{i+1}}(-1)^{-\alpha_{i+1}}
\end{gather*}
at $z=q^{l-m}$ under the weight conservation (i) $\alpha_{i}+\beta_{i} = \gamma_{i} + \delta_{i} -1$ and (ii) $\alpha_{i+1}+\beta_{i+1} = \gamma_{i+1} + \delta_{i+1} +1$. By substituting \eqref{w:defQ} and applying Lemma~\ref{w:lempsi}, it is simplif\/ied to
\begin{gather*}
[\delta_{i+1}+1]\big(1-q^{2\beta_{i+1}}\big)\big(1-q^{2(\beta_{i}-\gamma_{i}+1)}\big) \big(1-q^{2(\gamma_{i+1}+1)}\big)\theta(\delta_{i+1}=0) \\
\qquad\quad{} +q^{\gamma_{i+1}+\delta_{i+1} +2\beta_{i}-2\gamma_{i}+2}[\gamma_{i+1}+1] \big(1-q^{2\beta_{i+1}}\big)\big(1-q^{2(\beta_{i+1}-\gamma_{i+1})}\big)\\
\qquad\quad{}\times
\big(1-q^{2\gamma_{i}}\big)(-1)^{\delta_{i+1}}\theta(\gamma_{i+1}=0) \\
\qquad{} =q^{\beta_{i+1}-\gamma_{i+1}} [\alpha_{i+1}]\big(1-q^{2\beta_{i+1}}\big)\big(1-q^{2(\beta_{i}-\gamma_{i}+1)}\big) \big(1-q^{2(\gamma_{i+1}+1)}\big)\theta(\alpha_{i+1}=1) \\
\qquad\quad{} +q^{\gamma_{i+1}+\alpha_{i+1}} [\beta_{i+1}]\big(1-q^{2(\beta_{i}+1)}\big)\big(1-q^{2(\beta_{i+1}-\gamma_{i+1})}\big)
\big(1-q^{2(\gamma_{i+1}+1)}\big)\\
\qquad\quad{}\times (-1)^{\alpha_{i+1}}\theta(\beta_{i+1}=1).
\end{gather*}
From the weight conservation (ii) $\alpha_{i+1}+\beta_{i+1} = \gamma_{i+1} + \delta_{i+1} +1$ and $\ep_{i+1}=1$, we have only four cases
$(\alpha_{i+1},\beta_{i+1},\gamma_{i+1},\delta_{i+1}) =(1,1,1,0),(1,0,0,0),(0,1,0,0),(1,1,0,1)$. They are easily checked by using (i).
\end{proof}

\begin{proof}[Proof of Theorem \ref{th:specialized R}]
The $\mathcal{U}_A$-module $V_{l} \ot V_{m}$ is irreducible for the choice $(\ep_1,\ldots,\ep_{n+1})=(1^{\kappa},0^{n+1-\kappa})$ with $0 \le \kappa \le n+1$
by \cite[Proposition~6.11]{KOS}. (This fact has been quoted as Theorem~\ref{th:true} in this paper.) Therefore $R(z)$ is uniquely characterized by the relation~\eqref{w:comrel} up to normalization. The agreement of the normalization is readily checked.
\end{proof}

If Conjecture \ref{yssi} holds, Proposition \ref{w:proposition2} tells that the factorized formula in Theorem~\ref{th:specialized R} is valid for arbitrary $(\ep_1,\ldots,\ep_{n+1}) \in \{0,1\}^{n+1}$.

\subsection{Parameter version}
For $n \in \Z_{\ge 1}$ and $\epsilon=(\ep_1,\ldots,\ep_{n}) \in \{0,1\}^{n}$, set
\begin{gather*}
B(\epsilon) = \{(\alpha_1,\ldots, \alpha_n) \,|\, \alpha_i \in \{0,1\} \ {\rm if} \ \epsilon_i=1,\
\alpha_i \in \Z_{\ge 0} \ {\rm if} \ \epsilon_i=0\},\\
W(\epsilon) = \bigoplus_{(\alpha_1,\ldots, \alpha_n) \in B(\epsilon)} \C |\alpha_1,\ldots, \alpha_n\rangle.
\end{gather*}
Note that we have shifted to the $n$-component setting. Introduce the operator $\mathscr{S}^{(\epsilon)}(\lambda,\mu) \in \operatorname{End}(W(\epsilon) \otimes W(\epsilon))$ depending on the parameters $\lambda, \mu$ by
\begin{gather}
\mathscr{S}^{(\epsilon)}(\lambda,\mu)(|\alpha\rangle \otimes | \beta\rangle ) = \sum_{\gamma,\delta \in B(\epsilon)}\mathscr{S}(\lambda,\mu)_{\alpha,\beta}^{\gamma,\delta} |\gamma\rangle \otimes | \delta\rangle,\label{smdef2}
\end{gather}
where the element $\mathscr{S}(\lambda,\mu)_{\alpha,\beta}^{\gamma,\delta}$ is specif\/ied by exactly the same formula as~(\ref{lin}) and~(\ref{Pdef}). In other words $\mathscr{S}^{(\epsilon)}(\lambda,\mu)$ is a restriction of $\mathscr{S}(\lambda,\mu) \in \operatorname{End}(W\otimes W)$ on~$W(\epsilon)\ot W(\epsilon)$, where $W$ was def\/ined before~(\ref{smdef}). It corresponds to the parameter version of $R(z=q^{l-m})$ for $(\epsilon_1,\ldots,\epsilon_{n+1})$
with $\epsilon_{n+1}=0$.\footnote{For example $\mathscr{S}^{(1,\ldots,1)}(\lambda, \mu)$ originates in $R^{l,m}(z)$ with `inhomogeneous' choice $(\epsilon_1,\ldots, \epsilon_{n+1})=(1,\ldots,1,0)$.}

Combining the YBE (\ref{yyb}), Theorem \ref{th:specialized R} and the argument similar to \cite[Section~2.3]{KMMO}, one can generalize (\ref{ybe2}) to
\begin{Theorem}
For $\epsilon =(1^\kappa,0^{n-\kappa})$ with $0 \le \kappa \le n$, $\mathscr{S}^{(\epsilon)}(\lambda,\mu)$ satisfies the YBE:
\begin{gather*}
\mathscr{S}^{(\epsilon)}_{1,2}(\nu_1,\nu_2)\mathscr{S}^{(\epsilon)}_{1,3}(\nu_1, \nu_3)\mathscr{S}^{(\epsilon)}_{2,3}(\nu_2, \nu_3)=
\mathscr{S}^{(\epsilon)}_{2,3}(\nu_2, \nu_3)\mathscr{S}^{(\epsilon)}_{1,3}(\nu_1, \nu_3)\mathscr{S}^{(\epsilon)}_{1,2}(\nu_1,\nu_2).
\end{gather*}
\end{Theorem}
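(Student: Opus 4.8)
The plan is to deduce this YBE from the one already available for the $R$ matrices of $\U_A$, by restricting to weight subspaces and then specializing the spectral parameters to the distinguished value treated in Theorem~\ref{th:specialized R}; this is the $\ep$-twisted analogue of how Theorem~\ref{sin} is derived from Theorem~\ref{noi} in~\cite[Section~2.3]{KMMO}. Concretely, I would fix the sequence $(\ep_1,\ldots,\ep_{n+1})=(1^\kappa,0^{n+1-\kappa})$ with $\kappa\le n$ so that $\ep_{n+1}=0$, and take the YBE~\eqref{yyb} for the associated $R(z)$, which is valid by Proposition~\ref{pr:kos51} together with the construction~\eqref{tgm}. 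Using the weight decomposition $R(z)=\bigoplus_{l,m\ge0}R^{l,m}(z)$ and the fact that each factor in~\eqref{yyb} preserves the weight in every tensor slot, I restrict~\eqref{yyb} to the subspace $V_{l_1}\ot V_{l_2}\ot V_{l_3}$ to obtain
\[
R^{l_1,l_2}_{a,b}(x)\,R^{l_1,l_3}_{a,c}(xy)\,R^{l_2,l_3}_{b,c}(y) = R^{l_2,l_3}_{b,c}(y)\,R^{l_1,l_3}_{a,c}(xy)\,R^{l_1,l_2}_{a,b}(x).
\]

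Next, I would assume $l_1\le l_2\le l_3$ and set $x=q^{l_1-l_2}$, $y=q^{l_2-l_3}$, so that $xy=q^{l_1-l_3}$; this is consistent and places each factor $R^{l_i,l_j}(q^{l_i-l_j})$ at its special point with left spin $\le$ right spin. Since $\kappa\le n$, the term $\theta(\kappa=n+1)$ in~\eqref{w:defQ} vanishes and the prefactor there is just a product of $q^2$-binomials, so Theorem~\ref{th:specialized R} applies to all three factors and rewrites the displayed identity as a relation among the factorized matrices alone. The substantive step is then to recognize the $\kappa\le n$ case of~\eqref{w:defQ} as a matrix element of $\mathscr{S}^{(\epsilon)}(\lambda,\mu)$: eliminating the frozen $(n+1)$-st component via $|\gamma|=|\alpha|=l$, $|\delta|=|\beta|=m$ and performing $q^2\mapsto q$, $q^{-2l}\mapsto\lambda$, $q^{-2m}\mapsto\mu$, the right-hand side of~\eqref{w:defQ} becomes $\delta^{\gamma+\delta}_{\alpha+\beta}\,\Phi_q(\gamma|\beta;\lambda,\mu)$ with the first $\kappa$ of the remaining $n$ components constrained to lie in $\{0,1\}$, which is exactly~\eqref{smdef2} for $\epsilon=(1^\kappa,0^{n-\kappa})$; this is the $\ep$-twisted version of the passage from~\eqref{fac} to~\eqref{lin}. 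Under this dictionary the restricted YBE reads $\mathscr{S}^{(\epsilon)}_{1,2}(\nu_1,\nu_2)\mathscr{S}^{(\epsilon)}_{1,3}(\nu_1,\nu_3)\mathscr{S}^{(\epsilon)}_{2,3}(\nu_2,\nu_3)=\mathscr{S}^{(\epsilon)}_{2,3}(\nu_2,\nu_3)\mathscr{S}^{(\epsilon)}_{1,3}(\nu_1,\nu_3)\mathscr{S}^{(\epsilon)}_{1,2}(\nu_1,\nu_2)$ whenever $(\nu_1,\nu_2,\nu_3)=(q^{-l_1},q^{-l_2},q^{-l_3})$ with $0\le l_1\le l_2\le l_3$ (after renaming $q^2$ back to $q$).

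To finish, I would note that every matrix element of $\mathscr{S}^{(\epsilon)}(\lambda,\mu)$ is rational in $(\lambda,\mu)$ --- on each weight sector only finitely many are nonzero and each equals $\delta^{\gamma+\delta}_{\alpha+\beta}\Phi_q(\gamma|\beta;\lambda,\mu)$ with $\Phi_q$ as in~\eqref{Pdef} --- so both sides of the claimed identity are rational in $\nu_1,\nu_2,\nu_3$. They agree, by the previous step, on the set $\{(q^{-l_1},q^{-l_2},q^{-l_3}):0\le l_1\le l_2\le l_3\}$; clearing denominators and viewing the difference as a polynomial in $\nu_3$ with $\nu_1,\nu_2$ fixed at $q^{-l_1},q^{-l_2}$, it has the infinitely many roots $\nu_3=q^{-l_3}$, $l_3\ge l_2$, hence vanishes identically in $\nu_3$; iterating the argument in $\nu_2$ and then in $\nu_1$ shows the difference is identically zero, giving the YBE for generic $\nu_1,\nu_2,\nu_3$.

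The step I expect to be the main obstacle is the identification used in the third paragraph: matching the $(n+1)$-component, base-$q^2$ formula~\eqref{w:defQ} with the $n$-component, base-$q$ operator $\mathscr{S}^{(\epsilon)}$ requires checking that the spectator $(n+1)$-st slot reorganizes into the parameters $\lambda=q^{-2l}$, $\mu=q^{-2m}$, so that the $q$-Pochhammer prefactors and the exponent $\psi$ of~\eqref{w:defQ} turn into the $(\lambda;q)$-type factors and the $\varphi$-exponent of $\Phi_q$ in~\eqref{Pdef}, and that the constraint ``first $\kappa$ components in $\{0,1\}$'' is respected by~\eqref{w:defQ}, so that the truncated $\mathscr{S}^{(\epsilon)}$ really is the image of $R^{l,m}(q^{l-m})$. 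This is bookkeeping of the same nature as, and for $\kappa=0$ reduces to, the computation already carried out in~\cite{KMMO}; a lesser nuisance is the hypothesis $l\le m$ of Theorem~\ref{th:specialized R}, which forces the spin ordering $l_1\le l_2\le l_3$ in the specialization and is precisely why the closing rationality argument is needed.
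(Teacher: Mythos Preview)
Your proposal is correct and follows exactly the route the paper indicates: derive the YBE for $\mathscr{S}^{(\epsilon)}$ by restricting the YBE~\eqref{yyb} for $R(z)$ to $V_{l_1}\ot V_{l_2}\ot V_{l_3}$, specialize the spectral parameters via Theorem~\ref{th:specialized R}, pass from the $(n+1)$-component base-$q^2$ formula to the $n$-component base-$q$ operator as in \cite[Section~2.3]{KMMO}, and extend to generic $\nu_i$ by rationality. The paper states the proof in one line (``Combining the YBE~\eqref{yyb}, Theorem~\ref{th:specialized R} and the argument similar to \cite[Section~2.3]{KMMO}''), and your write-up is a faithful unpacking of that sketch.
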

In view of Conjecture \ref{yssi} we also conjecture that the above YBE is valid for arbitrary $\epsilon \in \{0,1\}^n$. A direct proof of this assertion will not be dif\/f\/icult although we do not pursue it here.

On the other hand, the sum-to-unity (\ref{sum2}) does not hold in general if $\epsilon \neq (0,\ldots,0)$. Here is the simplest example.

\begin{Example}\label{ykw2}
$\mathscr{S}^{(\epsilon)}(\lambda,\mu)$ with $n=1$ and $\epsilon=(1)$ def\/ines a f\/ive vertex model whose vertex weights read
\begin{gather*}
\includegraphics{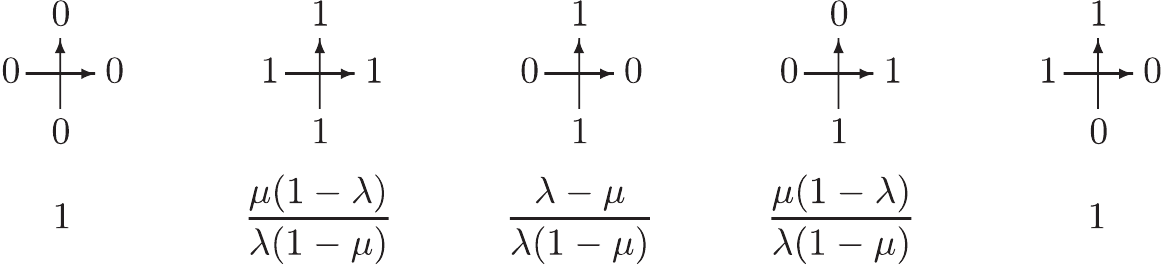}
%\begin{picture}(200,45)(-25,-21)
%\put(0,0){\vector(1,0){20}}
%\put(10,-10){\vector(0,1){20}}
%\put(-7,-2.5){$0$}\put(23,-2.5){$0$}
%\put(7.5,-20){$0$}\put(7.5,14){$0$}
%%
%\put(75,0){
%\put(0,0){\vector(1,0){20}}
%\put(10,-10){\vector(0,1){20}}
%\put(-7,-2.5){$1$}\put(23,-2.5){$1$}
%\put(7.5,-20){$1$}\put(7.5,14){$1$}
%}
%%
%\put(150,0){
%\put(0,0){\vector(1,0){20}}
%\put(10,-10){\vector(0,1){20}}
%\put(-7,-2.5){$0$}\put(23,-2.5){$0$}
%\put(7.5,-20){$1$}\put(7.5,14){$1$}
%}
%\put(225,0){
%\put(0,0){\vector(1,0){20}}
%\put(10,-10){\vector(0,1){20}}
%\put(-7,-2.5){$0$}\put(23,-2.5){$1$}
%\put(7.5,-20){$1$}\put(7.5,14){$0$}
%}
%\put(300,0){
%\put(0,0){\vector(1,0){20}}
%\put(10,-10){\vector(0,1){20}}
%\put(-7,-2.5){$1$}\put(23,-2.5){$0$}
%\put(7.5,-20){$0$}\put(7.5,14){$1$}
%}
%
%\end{picture}\\
%&\qquad\quad 1
%\qquad\quad\quad\;\;
%\frac{\mu(1-\lambda)}{\lambda(1-\mu)}
%\qquad\quad
%\frac{\lambda-\mu}{\lambda(1-\mu)}
%\qquad\quad
%\frac{\mu(1-\lambda)}{\lambda(1-\mu)}
%\qquad\qquad\;\;1
\end{gather*}
in the convention (\ref{vertex}). One sees that the sum-to-unity (\ref{sum2}) is invalid when $(\alpha,\beta)=(1,1)$ because $\mathscr{S}^{(1)}(\lambda,\mu)^{0,2}_{1,1}= \frac{\lambda-\mu}{\lambda(1-\mu)}$ must be dismissed from the sum. These weights also possess the NW-freeness mentioned in Remark~\ref{tsk}.
\end{Example}

\section{Summary}\label{sec7}
We have reviewed the construction of the multispecies ZRP in \cite{KMMO}, the matrix product formula for the stationary probability in \cite{KO1,KO2} and the relevant quantum $R$ matrices originating in the tetrahedron equation and the generalized quantum groups in~\cite{KOS}. We have also pointed out a new commuting Markov transfer matrix in Section~\ref{sec2.3}, the associated Markov Hamiltonian~(\ref{ri3}), the Serre type relations (\ref{w:relA4})--(\ref{w:relA7}) for the generalized quantum group $\U_A(\epsilon_1,\ldots, \epsilon_{n+1})$, factorization of its quantum $R$ matrix at special point of the spectral parameter in Theorem~\ref{th:specialized R}, and the parameter version of the $R$ matrix~(\ref{smdef2}) satisfying the YBE.

\appendix
\section[Example of $R^{l,m}(z)$ for $\mathcal{U}_A(1,1,0)$]{Example of $\boldsymbol{R^{l,m}(z)}$ for $\boldsymbol{\mathcal{U}_A(1,1,0)}$}\label{sec.app} \allowdisplaybreaks

\looseness=1 Consider $\mathcal{U}_{A}(1,1,0)$. For $l,m \ge 2$, one has $V_{m} = \C|0,0,m\rangle \oplus \C|0,1,m-1 \rangle \oplus \C|1,0,m-1 \rangle \oplus \C|1,1,m-2 \rangle \subset \mathcal{W} = V \ot V \ot F$ and similarly for $V_{l}$. The action of $R(z)$ on $V_{l} \ot V_{m}$ is given by
\begin{gather*}
R(z)(|0,0,l \rangle \ot |0,0,m \rangle) = |0,0,l \rangle \ot |0,0,m \rangle, \\
R(z)(|0,0,l \rangle \ot |0,1,m-1 \rangle) =\frac{q^{l}z-q^{m}}{z-q^{l+m}}|0,0,l \rangle \ot |0,1,m-1 \rangle\\
\hphantom{R(z)(|0,0,l \rangle \ot |0,1,m-1 \rangle) =}{} + \frac{(1-q^{2l})z}{z-q^{l+m}} |0,1,l-1 \rangle \ot |0,0,m \rangle, \\
R(z)(|0,0,l \rangle \ot |1,0,m-1 \rangle) =\frac{q^{l}z-q^{m}}{z-q^{l+m}}|0,0,l \rangle \ot |1,0,m-1 \rangle\\
\hphantom{R(z)(|0,0,l \rangle \ot |0,1,m-1 \rangle) =}{} + \frac{(1-q^{2l})z}{z-q^{l+m}} |1,0,l-1 \rangle \ot |0,0,m \rangle,
\\
R(z)(|0,0,l \rangle \ot |1,1,m-2 \rangle) =\frac{(q^{m}-q^{l}z)(q^{m}-q^{l+2}z)}{(q^{l+m}-z)(q^{l+m}-q^{2}z)}
|0,0,l \rangle \ot |1,1,m-2 \rangle \\
\hphantom{R(z)(|0,0,l \rangle \ot |1,1,m-2 \rangle) =}{} + \frac{(q^{l}z-q^{m})(1-q^{2l})zq^{2}}{(q^{l+m}-z)(q^{l+m}-q^{2}z)}
|0,1,l-1 \rangle \ot |1,0,m-1 \rangle \\
\hphantom{R(z)(|0,0,l \rangle \ot |1,1,m-2 \rangle) =}{}+ \frac{(q^{l}z-q^{m})(1-q^{2l})zq}{(q^{l+m}-z)(q^{l+m}-q^{2}z)}
|1,0,l-1 \rangle \ot |0,1,m-1 \rangle \\
\hphantom{R(z)(|0,0,l \rangle \ot |1,1,m-2 \rangle) =}{}+ \frac{(1-q^{2l})(1-q^{2l-2})z^{2}q^{2}}{(q^{l+m}-z)(q^{l+m}-q^{2}z)} |1,1,l-2 \rangle \ot |0,0,m \rangle,
\\
R(z)(|1,1,l-2 \rangle \ot |0,0,m \rangle)=\frac{q^{2}(1-q^{2m})(1-q^{2m-2})}{(q^{l+m}-z)(q^{l+m}-q^{2}z)}|0,0,l \rangle \ot |1,1,m-2 \rangle \\
\hphantom{R(z)(|1,1,l-2 \rangle \ot |0,0,m \rangle)=}{} + \frac{q^{2}(1-q^{2m})(q^{m}z-q^{l})}{(q^{l+m}-z)(q^{l+m}-q^{2}z)}
|0,1,l-1 \rangle \ot |1,0,m-1 \rangle \\
\hphantom{R(z)(|1,1,l-2 \rangle \ot |0,0,m \rangle)=}{}+ \frac{q(1-q^{2m})(q^{m}z-q^{l})}{(q^{l+m}-z)(q^{l+m}-q^{2}z)}
|1,0,l-1 \rangle \ot |0,1,m-1 \rangle \\
\hphantom{R(z)(|1,1,l-2 \rangle \ot |0,0,m \rangle)=}{}+ \frac{(q^{m}z-q^{l})(q^{m+2}z-q^{l})}{(q^{l+m}-z)(q^{l+m}-q^{2}z)}
|1,1,l-2 \rangle \ot |0,0,m \rangle,\\
R(z)(|1,1,l-2 \rangle \ot |0,1,m-1 \rangle)=\frac{(1-q^{l+m}z)(1-q^{2m-2})}{(q^{l+m}-z)(q^{l+m-2}-z)}
|0,1,l-1 \rangle \ot |1,1,m-2 \rangle \\
\hphantom{R(z)(|1,1,l-2 \rangle \ot |0,1,m-1 \rangle)=}{} + \frac{(1-q^{l+m}z)(q^{m-1}z-q^{l-1})}{(q^{l+m}-z)(q^{l+m-2}-z)}|1,1,l-2 \rangle \ot |0,1,m-1 \rangle,\!
\\
R(z)(|1,1,l-2 \rangle \ot |1,0,m-1 \rangle)=\frac{(1-q^{l+m}z)(1-q^{2m-2})}{(q^{l+m}-z)(q^{l+m-2}-z)}|1,0,l-1 \rangle \ot |1,1,m-2 \rangle \\
\hphantom{R(z)(|1,1,l-2 \rangle \ot |1,0,m-1 \rangle)=}{}
+ \frac{(1-q^{l+m}z)(q^{m-1}z-q^{l-1})}{(q^{l+m}-z)(q^{l+m-2}-z)} |1,1,l-2 \rangle \ot |1,0,m-1 \rangle,\!
\\
R(z)(|1,1,l-2 \rangle \ot |1,1,m-2 \rangle) =\frac{(1-q^{l+m}z)(1-q^{l+m-2}z)}{(q^{l+m}-z)(q^{l+m-2}-z)}|1,1,l-2 \rangle \ot |1,1,m-2 \rangle, \\
R(z)(|1,0,l-1 \rangle \ot |0,0,m \rangle)=\frac{1-q^{2m}}{z-q^{l+m}}|0,0,l \rangle \ot |1,0,m-1 \rangle\\
\hphantom{R(z)(|1,0,l-1 \rangle \ot |0,0,m \rangle)=}{}
+\frac{q^{m}z-q^{l}}{z-q^{l+m}}|1,0,l-1 \rangle \ot |0,0,m \rangle,\\
R(z)(|1,0,l-1 \rangle \ot |0,1,m-1 \rangle)=\frac{(q^{2m}-q^{2})(q^{m}-q^{l}z)}{(q^{l+m}-z)(q^{l+m}-q^{2}z)}|0,0,l \rangle \ot |1,1,m-2 \rangle \\
\hphantom{R(z)(|1,0,l-1 \rangle \ot |0,1,m-1 \rangle)=}{}
+ \frac{(q^{2}-1)q^{l+m} + (q^{2}-q^{2+2l}-q^{2+2m}+q^{2l+2m})z}{(q^{l+m}-z)(q^{l+m}-q^{2}z)}\\
\hphantom{R(z)(|1,0,l-1 \rangle \ot |0,1,m-1 \rangle)=}{}\times
|0,1,l-1 \rangle \ot |1,0,m-1 \rangle \\
\hphantom{R(z)(|1,0,l-1 \rangle \ot |0,1,m-1 \rangle)=}{} + \frac{q(q^{m}-q^{l}z)(q^{l}-q^{m}z)}{(q^{l+m}-z)(q^{l+m}-q^{2}z)}
|1,0,l-1 \rangle \ot |0,1,m-1 \rangle \\
\hphantom{R(z)(|1,0,l-1 \rangle \ot |0,1,m-1 \rangle)=}{} + \frac{(q^{2l}-q^{2})(q^{l}-q^{m}z)z}{(q^{l+m}-z)(q^{l+m}-q^{2}z)}
|1,1,l-2 \rangle \ot |0,0,m \rangle,
\\
R(z)(|1,0,l-1 \rangle \ot |1,0,m-1 \rangle)=\frac{1-q^{l+m}z}{z-q^{l+m}} |1,0,l-1 \rangle \ot |1,0,m-1 \rangle, \\
R(z)(|1,0,l-1 \rangle \ot |1,1,m-2 \rangle)=\frac{q(1-q^{l+m}z)(q^{l}z-q^{m})}{(q^{l+m}-z)(q^{l+m}-q^{2}z)}
|1,0,l-1 \rangle \ot |1,1,m-2 \rangle \\
\hphantom{R(z)(|1,0,l-1 \rangle \ot |1,1,m-2 \rangle)=}{}
+\frac{q^{2}z(1-q^{l+m}z)(1-q^{2l-2})}{(q^{l+m}-z)(q^{l+m}-q^{2}z)} |1,1,l-2 \rangle \ot |1,0,m-1 \rangle, \\
R(z)(|0,1,l-1 \rangle \ot |0,0,m \rangle)=\frac{1-q^{2m}}{z-q^{l+m}} |0,0,l \rangle \ot |0,1,m-1 \rangle\\
\hphantom{R(z)(|0,1,l-1 \rangle \ot |0,0,m \rangle)=}{}
+\frac{q^{m}z-q^{l}}{z-q^{l+m}}|0,1,l-1 \rangle \ot |0,0,m \rangle, \\
R(z)(|0,1,l-1 \rangle \ot |0,1,m-1 \rangle) =\frac{1-q^{l+m}z}{z-q^{l+m}}|0,1,l-1 \rangle \ot |0,1,m-1 \rangle,
\\
R(z)(|0,1,l-1 \rangle \ot |1,0,m-1 \rangle) =\frac{q(q^{l}z-q^{m})(1-q^{2m-2})}{(q^{l+m}-z)(q^{l+m}-q^{2}z)}
|0,0,l \rangle \ot |1,1,m-2 \rangle \\
\hphantom{R(z)(|0,1,l-1 \rangle \ot |1,0,m-1 \rangle) =}{}
+ \frac{q(q^{l}z-q^{m})(q^{m}z-q^{l})}{(q^{l+m}-z)(q^{l+m}-q^{2}z)}|0,1,l-1 \rangle \ot |1,0,m-1 \rangle \\
\hphantom{R(z)(|0,1,l-1 \rangle \ot |1,0,m-1 \rangle) =}{}
+\frac{q^{l+m}(1-q^{2})z^{2} +(q^{2}+q^{2l+2m}-q^{2l}-q^{2m})z}{(q^{l+m}-z)(q^{l+m}-q^{2}z)}\\
\hphantom{R(z)(|0,1,l-1 \rangle \ot |1,0,m-1 \rangle) =}{}\times
|1,0,l-1 \rangle \ot |0,1,m-1 \rangle \\
\hphantom{R(z)(|0,1,l-1 \rangle \ot |1,0,m-1 \rangle) =}{}
+ \frac{qz(1-q^{2l-2})(q^{m}z-q^{l})}{(q^{l+m}-z)(q^{l+m}-q^{2}z)} |1,1,l-2 \rangle \ot |0,0,m \rangle,
\\
R(z)(|0,1,l-1 \rangle \ot |1,1,m-2 \rangle)=\frac{(1-q^{l+m}z)(q^{l-1}z-q^{m-1})}{(z-q^{l+m})(z-q^{l+m-2})}
|0,1,l-1 \rangle \ot |1,1,m-2 \rangle \\
\hphantom{R(z)(|0,1,l-1 \rangle \ot |1,1,m-2 \rangle)=}{}
+\frac{(1-q^{l+m}z)(1-q^{2l-2})z}{(z-q^{l+m})(z-q^{l+m-2})} |1,1,l-2 \rangle \ot |0,1,m-1 \rangle .
\end{gather*}

\subsection*{Acknowledgments}

The authors thank Ivan Corwin, Philippe Di Francesco, Alexandr Garbali, Michio Jimbo and Tomohiro Sasamoto for kind interest. They also thank Jef\/frey Kuan for informing them of the interesting work~\cite{Kuan} and Shohei Machida for letting them know the Serre relations of~$\U_A(\epsilon)$. Last but not least we thank the anonymous referees for productive suggestions to improve the paper. This work is supported by Grants-in-Aid for Scientif\/ic Research No.~15K04892, No.~15K13429 and No.~16H03922 from JSPS.

\newpage

\pdfbookmark[1]{References}{ref}
\LastPageEnding

\end{document}